\newtheorem{theorem}{Theorem}
\newtheorem{defn}{Definition}
\newtheorem{result}{Result}
\newtheorem{corollary}{Corollary}
\newtheorem{lemma}{Lemma}
\newenvironment{lemmap}[1]
  {%
   \addtocounter{lemma}{-1}%
   \begin{lemma}}
  {\end{lemma}}
\newenvironment{lemmapp}[1]
  {%
   \addtocounter{lemma}{-1}%
   \begin{lemma}}
  {\end{lemma}}
\newcommand{\ba}{\begin{eqnarray}}
\newcommand{\ea}{\end{eqnarray}}
\newcommand{\ban}{\begin{eqnarray*}}
\newcommand{\ean}{\end{eqnarray*}}
\newcommand{\one}{\mathbb{1}}
\newcommand{\chn}{\mathcal{E}}
\newcommand{\tchn}{\mathcal{T}}
\newcommand{\tpose}{\text{T}}
\newcommand{\cchn}{\varphi}
\newcommand{\rf}{\alpha}
\newcommand{\crf}{\gamma}
\newcommand{\cbath}{\beta}
\newcommand{\TR}{\textrm{TR}}
\newcommand{\rvp}{{\hat{\cchn}_{\crf}}}
\newcommand{\petz}{{\hat{\chn}_{\rf}}}
\newcommand{\TrB}{\Tr_{B}}
\newcommand{\therm}[1]{\tau_{\kappa}({#1})}
\newcommand{\ret}[1]{\mathcal{R}_{q}\left[#1\right]}
\newcommand{\cret}[1]{\mathcal{R}_{c}\left[#1\right]}
\newcommand*{\sdots}{\ifmmode\mathinner{\ldotp\kern-0.2em\ldotp\kern-0.2em\ldotp}\else.\kern-0.13em.\kern-0.13em.\fi}
\newcommand{\ctrace}{\Sigma}
\newcommand{\bctrace}{\mathbf{\Sigma}}
\newcommand{\qam}[1]{\hat{\text{T}}\text{r}_{B,#1}}
\newcommand{\vone}{v_\textbf{1}}
\newcommand{\vzero}{v_\textbf{0}}
\newcommand{\vass}[1]{v({\Lambda_{#1 B}})}
\newcommand{\changessecond}[1]{{ #1}}
\begin{document}

\title{Role of Dilations in Reversing Physical Processes: \\ Tabletop Reversibility and Generalized Thermal Operations}% 
\author{Clive Cenxin Aw}
\affiliation{Centre for Quantum Technologies, National University of Singapore, 3 Science Drive 2, Singapore 117543}

\author{Lin Htoo Zaw}
\affiliation{Centre for Quantum Technologies, National University of Singapore, 3 Science Drive 2, Singapore 117543}

\author{Maria Balanzó-Juandó}
\affiliation{ICFO-Institut de Ci\`encies Fot\`oniques, The Barcelona Institute of Science and Technology, Av.~Carl Friedrich Gauss 3, 08860 Castelldefels (Barcelona), Spain}

\author{Valerio Scarani}
\affiliation{Centre for Quantum Technologies, National University of Singapore, 3 Science Drive 2, Singapore 117543}
\affiliation{Department of Physics, National University of Singapore, 2 Science Drive 3, Singapore 117542}

\date{\today}

\begin{abstract}
Irreversibility, crucial in both thermodynamics and information theory, is naturally studied by comparing the evolution---the (forward) channel---with an associated reverse---the reverse channel. There are two natural ways to define this reverse channel. Using logical inference, the reverse channel is the Bayesian retrodiction (the Petz recovery map in the quantum formalism) of the original one. Alternatively, we know from physics that every irreversible process can be modeled as an open system: one can then define the corresponding closed system by adding a bath (``dilation''), trivially reverse the global reversible process, and finally remove the bath again. We prove that the two recipes are strictly identical, both in the classical and in the quantum formalism, once one accounts for correlations formed between system and the bath. Having established this, we define and study special classes of maps: \textit{product-preserving maps} (including \textit{generalized thermal maps}), for which no such system-bath correlations are formed for some states; and \textit{tabletop time-reversible maps}, when the reverse channel can be implemented with the same devices as the original one. We establish several general results connecting these classes, and a very detailed characterisation when both the system and the bath are one qubit. In particular, we show that, when reverse channels are well-defined, product-preservation is a sufficient but not necessary condition for tabletop reversibility; and that the preservation of local energy spectra is a necessary and sufficient condition to generalized thermal operations.
\end{abstract}

\maketitle
\section{Introduction}\label{intro}

Irreversibility is ubiquitous in real life. In science, it was first studied systematically in the context of thermodynamics: this is captured by the Second Law, which stipulates the impossibility of putting all the energy to fruition, leading to the necessary generation of heat---or more generally, entropy \cite{gibbs1879equilibrium, onsager1931reciprocal, Seifert_2012,nelly-brandao2015second,evans2002fluctuation, crooks-reversal, campisi-haenggi-review-2011}. Eventually, information theory became the setting in which to study irreversibility: a process is irreversible for an agent when the agent is unable to retrieve from the output all the information about the input. In turn, a theory of optimal retrieval of information was developed, both in classical and in quantum theories \cite{ jaynes1957information,shannon1948mathematical, bennett1982thermodynamics, barnum-knill, anders-hilt2011landauer}.

Meanwhile, the field of stochastic thermodynamics developed quantitative approaches to irreversibility, based on \textit{statistical comparisons between the process under study and its associated reverse process}. But how to define the latter? In the case of fully reversible, deterministic processes, the reverse process is obviously the dynamics played backwards. For isothermal evolutions (driven Hamiltonian evolution while the system is in contact with a thermal bath), a possible and very natural reverse process consists in driving the evolution backwards in the presence of the same bath \cite{jarz2000,AWWW18}. Reverse process have also been found for more complex processes, through expert control of the model and its assumptions (see e.g.~\cite{manzano-PRX}). \changessecond{A general recipe may be built on the observation that any irreversible process can be seen as a marginal of a global, reversible process involving the system and some environment. The \textit{recipe through dilation} is then: add a suitable environment (\textit{dilation}), trivially reverse the reversible global process, and finally remove the environment.

Recently, it was proposed to define the reverse process using the Bayesian recipe for information retrieval, a.k.a.~Bayesian inversion or \textit{retrodiction} \cite{BS21,AwBS}. This \textit{recipe through retrodiction} requires only choosing a reference state, which plays the role of a Bayesian prior. The connection between reverse processes and Bayesian logic had not been noticed in the context of classical stochastic thermodynamics. In quantum thermodynamics, one the main tools for information recovery had been used, first occasionally \cite{Alhambra16,AWWW18}, then systematically \cite{kwon-kim,kwon2022}: the \textit{Petz recovery map} \cite{petz,barnum-knill, wilde_2013}, which was also proposed early on as a quantum analog of Bayesian inversion \cite{Leifer-Spekkens, PB22, PF22,  QPRPetzPaper}. 

In this work, after a review of known material on reverse processes (Section \ref{ConSec}), we start by proving that the two recipes by dilation and by retrodiction are identical, both in the classical and the quantum case (Section \ref{DilSec}). The fact that the two proposed general recipes coincide, combined with the knowledge that all the previously known special cases can be recovered with these recipes, shows that \textit{we have the definition of the reverse process under control}. Next we bring up the observation that \textit{a process and its associated reverse process may be very different}. It is indeed well known in the quantum case that implementing the reverse (Petz) of a channel may require very different resources than those needed to implement the channel itself (see e.g.~\cite{quek2020quantum}). The cases mentioned above of the reversible and the isothermal processes, whose reverses are ``what one would expect'' and can be implemented with the same control and the same environment, seem to be the exception. Based on this observation, we set to study which processes have a reverse that can be implemented with the same, or similar, resources (Fig.~\ref{fig:TTRDiag}). We shall say that the latter processes possess \textit{tabletop reversibility}. This is of interest for the structure of the theory of the reverse processes, as well as for possible experimental tests of fluctuation theorems in situations that are not unitary or isothermal. The mathematical definition of tabletop reversibility is defined in Section \ref{sec:ttr}, together with the auxiliary notion of \textit{product-preserving channels}, which may be of interest in its own right.} In Section \ref{ResSec}, we present both general results valid and a thorough characterization of two-qubit channels. In Section \ref{sec:further}, we highlight the implications of these results and scenarios for energetics and reversibility in the quantum regime. Section \ref{Concl} is a conclusion.

\begin{figure}[ht!]
    \includegraphics[width=\linewidth]{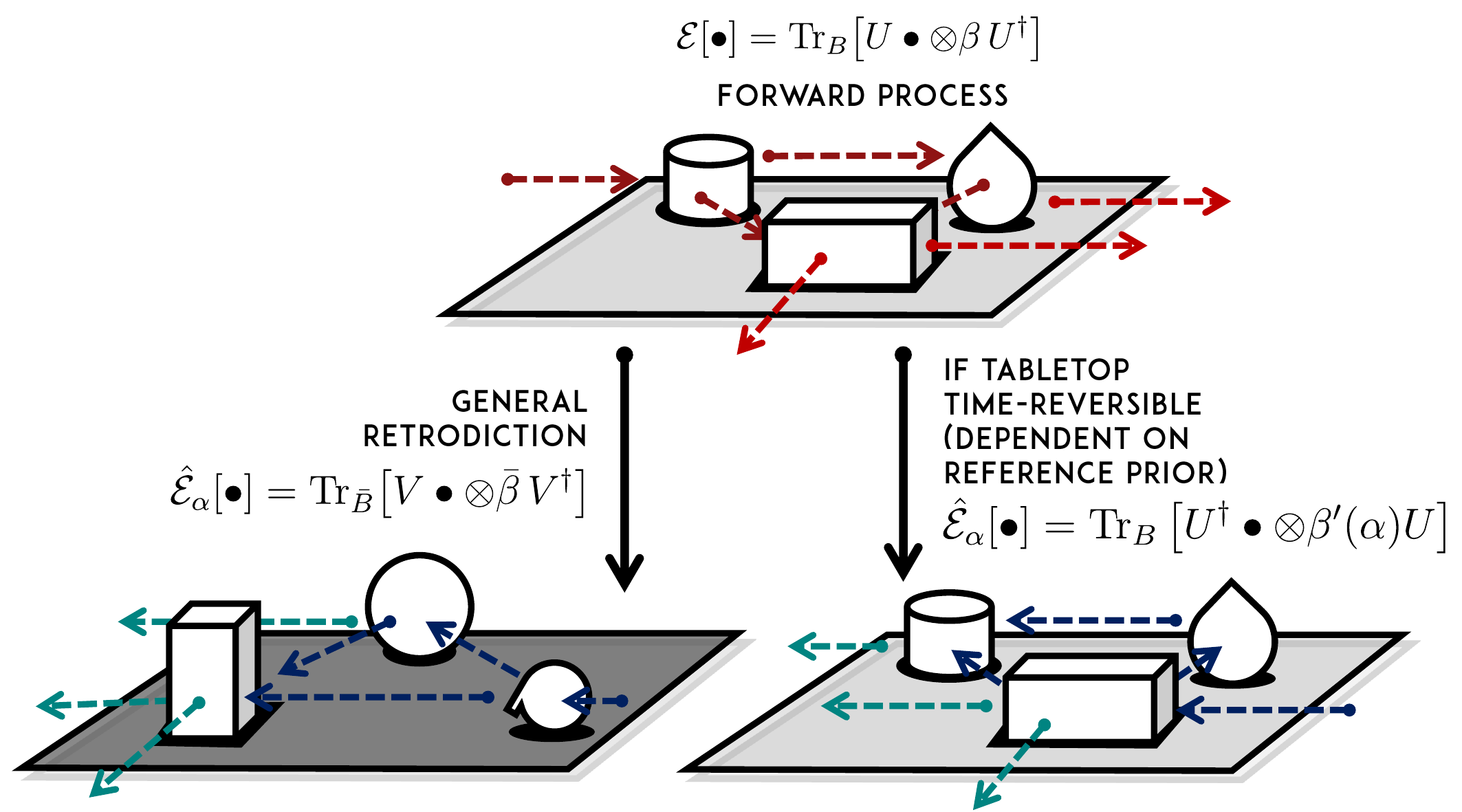}
    \caption{\changessecond{An illustration of the main goal of this paper. Any channel $\chn$ on a system can be viewed as a larger reversible unitary process $\mathcal{U}$ involving the system and a bath (top). Applying the recipes by dilation or by retrodiction (proved identical in Section \ref{DilSec}), one finds that the reverse process must in general be implemented with completely different tools than the original process (bottom left). We set out to characterize the \textit{tabletop reversible} situations, in which the reverse process can be implemented with the same, or similar, tools as the forward process: namely, by appending an ancilla and inverting the original unitary $\mathcal{U}$. The exact definitions will be given in Section \ref{sec:ttr}, and the results in the following Sections.} }
    \label{fig:TTRDiag}
\end{figure} 

\section{Reverse Processes} \label{ConSec}

\subsection{Notations}

In classical theory, we consider a discrete state space with $d$ distinct states. A generic state is represented by a probability distribution $p$: $p(j)\geq 0$ for $j=1,...,d$; and $\sum_{j=1}^d p(j)=1$. It can be represented by a $d\times 1$ probability vector $\mathbf{p}$, whose entries are $\mathbf{p}_j:= p(j)$. A generic channel is a stochastic map $\cchn$, defined by $d^2$ probabilities $\varphi(j'|j)$ of transiting from the input state $j$ to the output state $j'$. These probabilities must satisfy $\varphi(j'|j)\geq 0$ for all $j,j'$ and $\sum_{j'=1}^d\varphi(j'|j)=1$ for all $j$. The channel can be represented by the $d\times d$ stochastic matrix $\boldsymbol{\cchn}$, whose entries are $\boldsymbol{\cchn}_{j'j}:=\varphi(j'|j)$. In this representation, the composition of channels is represented by the standard matrix multiplication: if $\varphi=\varphi_2\circ \varphi_1$, then $\boldsymbol{\varphi}=\boldsymbol{\varphi}_2\, \boldsymbol{\varphi}_1$. 

An important remark for what follows: even if the matrix $\boldsymbol{\cchn}$ has an inverse, in general the entries of the matrix $\boldsymbol{\cchn}^{-1}$ do not define a valid stochastic map. Analogously, while every matrix can be transposed, the map corresponding to $\boldsymbol{\cchn}^{\tpose}$ is a valid map if and only if the channel is bistochastic, i.e.~satisfies the additional property $\sum_{j=1}^d\varphi(j'|j)=1$ for all $j'$. When the inverse or the transpose of the matrix do define valid channels, we shall denote those channels as $\cchn^{-1}$ and $\cchn^{\tpose}$ respectively.

In quantum theory, we consider a finite-dimensional complex vector space of dimension $d$. A generic state is described by a semidefinite operator $\rho \succeq 0$ in this space with $\Tr(\rho)=1$. Channels are represented by completely positive, trace preserving (CPTP) maps. Given a CPTP map $\chn$, the adjoint $\chn^\dagger$ is the unique map for which 
\begin{equation}\label{eq:adj}
    \Tr(\chn[X] Y) = \Tr(X\chn^\dagger[Y])
\end{equation}
for operators $X$ and $Y$. Just as $\varphi^\tpose$ is not a valid stochastic map in general, $\chn^\dagger$ is in general not a valid quantum channel.

%\onecolumngrid
%\begin{center}
%\begin{figure}[t]
\begin{figure*}
    \centering
    \includegraphics[width=0.71\textwidth]{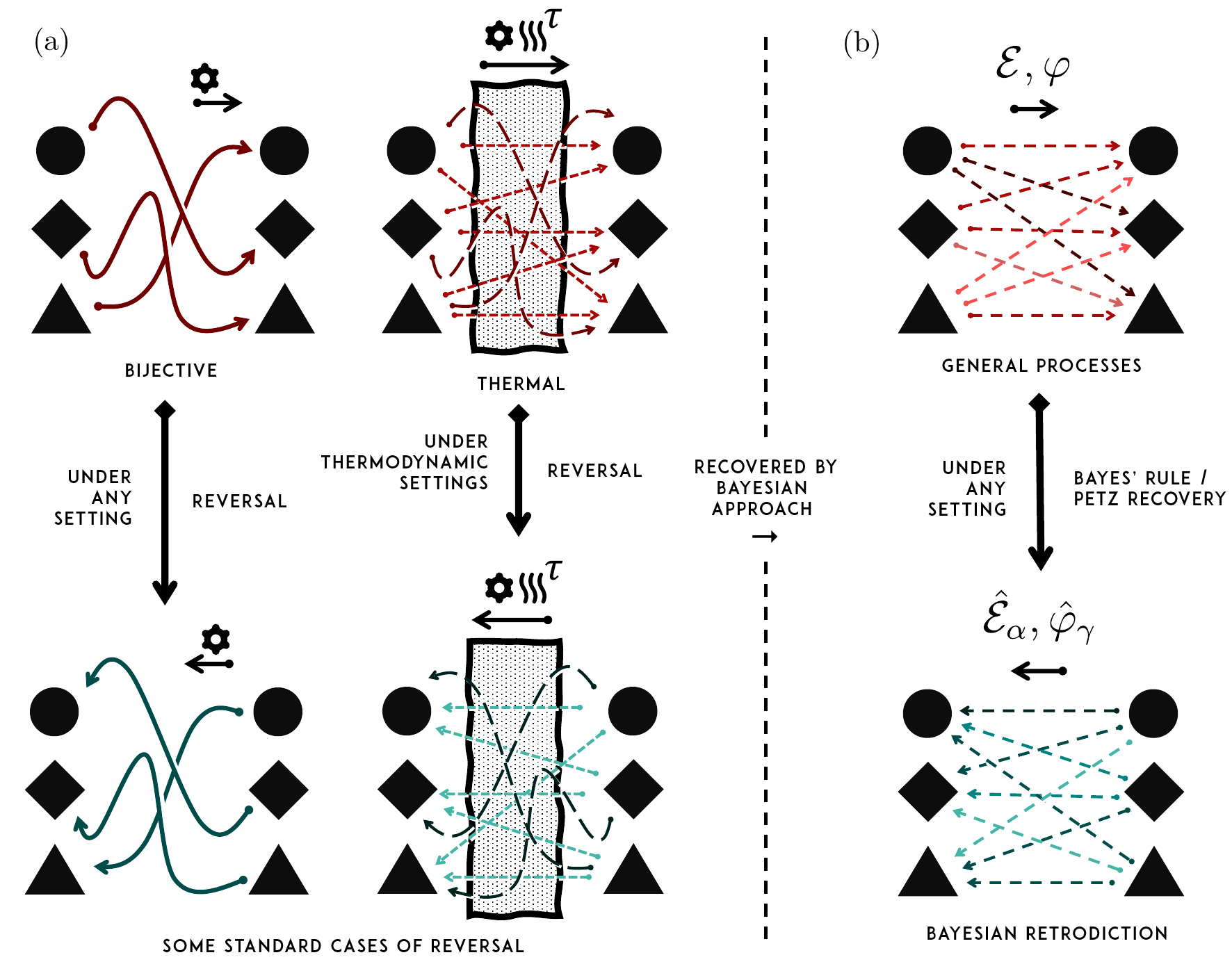}
    \caption{(a) Illustrations for standard examples of reversal. (b) Bayesian inversion or ``retrodiction'' is a formal recipe that reproduce results of the standard approach while generalizing the definition of reverse processes for any characterized process, and any under setting as captured by a reference state.}
    \label{fig:RevChan}
\end{figure*}
%\end{center}
%\twocolumngrid

\subsection{Reverse Process: Standard Examples}

We first review some classes of processes where the reverse---or at least, a candidate for it---is known (see Fig.~\ref{fig:RevChan}(a) for an illustration).

The most obvious class is that of \textit{reversible processes}, where the map is a \textit{bijection} between the space of states. Their reverse processes are naturally defined as the evolution played backwards, i.e.~the inverse map. In classical theory, such processes are Hamiltonian evolutions $\Phi$ obeying Liouville's theorem: reversal is given by inverting the trajectories in the configuration space. In a discrete state space, the matrix $\boldsymbol{\Phi}$ is a permutation matrix, whose inverse and transpose coincide ($\boldsymbol{\Phi}^{-1}=\boldsymbol{\Phi}^{\tpose}$) and define a valid map. Analogously, in quantum theory, bijective transformations are unitary channels $\mathcal{U}[\bullet] = U \bullet U^\dag$ where $U U^\dag = \one$: their inverse exists, coincides with the adjoint, and defines a valid map. For a reversible process, the inverse map is the only reasonable candidate of the corresponding reverse process.

Moving to \textit{bistochastic/unital processes}, their inverse is in general not defined, but their transpose/adjoint is a valid channel and thus provides an immediate, natural candidate for the reverse map. Beyond this class, the transpose/adjoint ceases to define a valid channel. On this basis, it has been argued that only bistochastic/unital processes are fundamental if one wants the theory to be fundamentally reversible \cite{DiBiagio2021arrowoftimein, CL22, chiribella2021symmetries}. We do not take sides in that discussion: we are not concerned with ultimate constraints on fundamental theories, but with the description of practical irreversibility. 

The most obvious irreversible processes describe the dissipation of information in an unmonitored environment, or arise from a coarse-graining over a chaotic dynamics. In both cases, the dynamics is generally \textit{not bistochastic/unital}. The reverse of some such processes has been constructed on a case-by-case basis by invoking physical arguments. For \textit{a system undergoing Hamiltonian evolution while in contact with thermal baths}, under the assumption of detailed balance, the recipe is very sinmple: the reverse process consists in playing the Hamiltonian evolution backward, while staying in contact with the same thermal baths \cite{crooks-reversal,jarz2000}. This recipe was used in the experimental demonstrations \cite{exp-ciliberto2010fluctuations} of Crooks' fluctuation theorem with biological systems \cite{exp-Collin05,exp-hayashi2018application} and levitating nanospheres \cite{exp-hoang2018experimental, exp-carberry2004fluctuations}. For quantum channels, the same result holds for the so-called thermal operations $\tchn$ \cite{horo-oppen-thermal,AWWW18,santos_landi,landi-pater-21-review,horo2003-thermal-first,brandao2013-thermal,lostaglio2018-ETOs, faist-oppo-renner-2015-thermal, hu2019thermal}. 

% \footnote{We note in passing that, after applying our notation in \eqref{eq:notationchoice} and our discussion on the relationship between unital maps and the uniform prior being set as reference, \eqref{unitalchn} is modified into $ \cret{\cchn, 1/d} = \cchn^{\tpose}$ and $\ret{\chn, \one/d} = \chn^\dagger$ respectively, when $\cchn$ and $\chn$ are unital maps.}:

%In equations: 
%\begin{equation}\label{unitarychn}    \cret{\Phi} = \Phi^{\tpose}=\Phi^{-1}, \quad  \ret{\mathcal{U}} = \mathcal{U}^\dagger = \mathcal{U}^{-1}. \end{equation}

\subsection{Reverse Processes: General Recipe through Bayesian Retrodiction}

\subsubsection{Generalities}

Having reviewed the prime examples of constructed reverse processes, we now describe a general recipe applicable to every process: \textit{Bayesian retrodiction} (sometimes called ``Bayesian inversion") \cite{BS21,AwBS}. \changessecond{It can be traced back to the works of Watanabe \cite{watanabe55,watanabe65}, ultimately building on the observation that the laws of physics give us the knowledge of the evolution (the channel, in our language) and not the initial state.}

In classical theory, given the channel $\cchn$, the recipe for Bayesian inversion is standard: one postulates a \textit{reference prior} $\crf$, then applies Bayes' rule to the joint probability distribution $P(a,a'):=\varphi(a'|a)\gamma(a)$. The resulting reverse map is
\begin{equation} \label{bayes}
    \rvp(a|a')=\cchn(a'|a)\frac{\crf(a)}{\varphi[\crf](a')}
\end{equation}
where the distribution $\varphi[\crf]$ is given by $\cchn[\crf](a')=  \sum_a \cchn(a'|a)\crf(a)$, the output obtained by propagating the reference prior $\crf$ through the channel $\cchn$. In matrix notation, Eq.~\eqref{bayes} reads
\begin{eqnarray}\label{bayesmat}
    \boldsymbol{\hat{\varphi}}_\gamma &=& \mathbf{D}_\crf \boldsymbol{{\varphi}}^\tpose \mathbf{D}^{-1}_{\varphi[\crf]}.
\end{eqnarray} where $\mathbf{D}_p$ is the diagonal matrix with entries corresponding to the distribution $p$. 

The extension of Bayesian formalism to the quantum formalism has been the object of many studies \cite{Leifer-Spekkens, parzygnat2022noncommutingbayes, PF22, PB22,surace2023state-retrieval-beyond-bayes}. Here, we don't need an exhaustive Bayesian toolbox: only a candidate for the reverse map. Our choice is the \textit{Petz Recovery map} $\petz$ \cite{petz1,petz, wilde-recov} 
\begin{equation}\label{petz}
    \petz[\bullet] = \sqrt{\rf} \; \chn^\dagger\!\left[\frac{1}{\sqrt{\chn[\rf]}} \bullet \frac{1}{\sqrt{\chn[\rf]}}\right] \sqrt{\rf},
\end{equation}
where $\alpha$ is a state that plays the role of reference prior. The choice of the Petz map as the quantum analog of Bayes' rule is standard \cite{Leifer-Spekkens, PF22, QPRPetzPaper}. It has recently been shown to fulfill the most crucial intuitions about reversal \cite{wilde-recov, PB22, PF22,kwon2022} and to be suited to recover results in stochastic quantum thermodynamics \cite{kwon-kim,BS21}. 

From now onwards, we identify 
\begin{align}\label{eq:notationchoice}
    \cret{\cchn,\crf} &= \rvp \\ 
    \ret{\chn,\rf} &= \petz 
\end{align} and will use these notations interchangeably when convenient. A property of the Bayes/Petz maps that we shall use later is \textit{composability} \cite{PB22}:
\begin{align}
    \mathcal{R}_{c}[\cchn_2 \circ \cchn_1,\crf]=& \cret{\varphi_1, \crf} \circ\cret{\varphi_2, \cchn_1[\crf]},\label{composability_c}\\
    \ret{\chn_2 \circ \chn_1, \rf} = &\ret{\chn_1,\rf} \circ \ret{{\chn}_{2}, \chn_1[\rf]}\,.\label{composability_q}
\end{align}
One may see Appendix \ref{app:proofcomposability} for proofs. This property holds even when the maps are not stochastic.

We note in passing that, due to presence of the term $\chn[\rf]^{-1/2}$ ($\mathbf{D}^{-1}_\cchn[\crf]$), $\petz$ ($\rvp$) is ill-defined when the propagated reference $\chn[\rf]$ ($\cchn[\crf]$) is rank-deficient. Of course, one could attempt to side-step this problem several ways: for example, by defining $\chn[\rf]^{-1/2}$ ($\mathbf{D}^{-1}_\cchn[\crf]$) only on its support, or by considering a neighbourhood of full-rank states around the rank-deficient output and taking some limit. However, this boils down to a matter of convention, where each approach gives a different retrodiction channel, as we discuss in Appendix~\ref{apd:ill-defined-how} in some detail. We do not make a particular choice, and instead leave the retrodiction channel undefined in this case.

\subsubsection{Examples revisited}

The examples of reverse maps of the previous subsection are recovered, and possibly clarified, in the retrodictive approach. \textit{Reversible maps} are the only ones, for which the Bayes/Petz map does not depend on the reference prior \cite{watanabe65,AwBS,PF22}; and, unsurprisingly, coincides with the inverse:
\begin{equation}
    \begin{aligned}\label{eq:deterministic}
    \cret{\Phi}= \hat{\Phi}_\crf = \Phi^\tpose = {\Phi}^{-1}  &\quad \forall \crf\,,\\
    \quad \ret{\mathcal{U}}= \hat{\mathcal{U}}_\rf = \mathcal{U}^\dagger = \mathcal{U}^{-1}  &\quad \forall \rf\,.
\end{aligned}
\end{equation}

For the case of \textit{bistochastic/unital maps}, the Bayes/Petz does depend on the reference prior \cite{PF22, AwBS}. The reverse described above is obtained for a very natural choice of reference prior, namely the uniform: $\gamma:= u$ with $u(j)=1/d$ in the classical case, $\alpha:=\one/d$ in the quantum case. Indeed, this prior is preserved by these maps ($\varphi[u]=u$, $\chn[\one/d]=\one/d$), and one can immediately see that
\begin{equation}
    \begin{aligned}\label{unitalchn}
    & \cret{\cchn,u} = \cchn^{\tpose}, \quad  \ret{\chn,\one/d} = \chn^\dagger \\ & \textrm{[bistochastic/unital]}. 
\end{aligned}
\end{equation}

Lastly, let us look at \textit{thermal operations} in the quantum language. Given a non-interacting system-bath Hamiltonian $H_A\otimes\one + \one \otimes H_B$, one defines 
\begin{equation}\label{eq:thermchn}
    \tchn[\bullet] = \TrB\left\{U \left[\bullet \otimes \, \therm{H_B}\right] U^\dag\right\}
\end{equation}
where $\therm{H} = e^{\kappa H}/\tr(e^{\kappa H})$ is the thermal (or Gibbs) state with $\kappa= -1/k_B T$ (usually denoted $-\beta$, but later in the paper we use $\beta$ will denote a state of the bath) and where $U$ is constrained to satisfy
\begin{equation}\label{eq:thermal-hamiltonian-definition}
    [U,H_A\otimes\one + \one \otimes H_B] = 0.
\end{equation} A channel thus constructed preserves the thermal, or Gibbs, state of the system for the same $\kappa$: $\tchn[\therm{H_A}]=\therm{H_A}$. The Petz map with the Gibbs state as reference prior is found to be (\cite{AWWW18}, see also Appendix \ref{GTherm} for the derivation)
\begin{equation}\label{eq:thermrev2}
    \hat{\tchn}_{\therm{H_A}}[\bullet] = \TrB\left\{U^\dag \left[\bullet \otimes \therm{H_B}\right] U\right\}\,,
\end{equation} which describes indeed a reversal of the unitary dynamics while in contact with the same thermal bath. Notice how, having adopted the retrodictive approach, the usual thermodynamical assumptions called ``microreversibility'' and ``detailed balance'' are replaced by the single assumption on the choice of the reference prior.

\section{Reverse processes and dilations} \label{DilSec} 

Before studying tabletop reversibility, we need to introduce the notion of \textit{dilation of a process}. The word, common in the language of quantum channels, describes \textit{the extension of a process on system A to include an environment (or ``bath'', or ``ancilla'') B}. 

Typically, a dilation is performed with the goal of making the extended system AB a closed one, whose dynamics is therefore reversible. Hence, it is natural to look at defining reverse processes by the following \textit{recipe}: dilate by adding the environment, reverse the map of the dilation (trivial if reversible), then remove the environment. \textit{A priori}, this recipe is different from the Bayesian one applied on the system alone: dilations are not unique, and the reverse process obtained by this recipe might depend on the details of the chosen dilation. We proceed to prove that the two recipes actually coincide: given a choice of dilation, only the reference prior chosen on the system determines the reverse process. This holds both for classical and quantum systems.

\subsection{Classical Dilations \& Bayes' Rule}

Every classical process $\cchn$ on a system state space $A$ ($|A|=d_A$) can always be expressed as a marginal of a global process $\Phi_{AB}$ on an extended state space $AB$ ($|B|=d_B$), alongside some potentially-correlated environment $\cbath_{B}$. This may be expressed by:
\begin{eqnarray} \label{eq:cdil}
    \cchn(a'|a)= \sum_{bb'} \Phi(a',b'|a,b)\cbath(b|a) 
\end{eqnarray}
A tuple $(\Phi_{AB}, \cbath_{B})$ that fulfills \eqref{eq:cdil} will be called a \textit{dilation} of $\cchn$. We proceed to prove our claim for classical processes: 

\begin{result}\label{res1}
    Given a classical map $\cchn$, the reverse obtained by dilating with an environment, reversing the dilated map, then removing the environment, is the same as that obtained directly through the Bayesian recipe \eqref{bayes} on the system. 
\end{result}

\begin{proof}
    Let us construct the reverse with the dilation. Besides the reference prior on the system A, we have the additional freedom of choosing a dilation $(\Phi_{AB},\beta_B)$. The total reference prior is then $\Gamma(a,b)=\gamma(a)\beta(b|a)$, and we define the reverse of the dilated map by applying Bayes' rule to $P(a,b,a',b')\equiv\Phi(a',b'|a,b)\Gamma(a,b)$:
    \begin{equation}
\Phi(a',b'|a,b)\Gamma(a,b)=\hat{\Phi}_\Gamma (a,b|a',b')\Phi[\Gamma](a',b')\,.
    \end{equation}
    Finally, we remove the environment. Writing $\eta:=\Phi[\Gamma]$ for readability, we have
\begin{eqnarray*}
    \underbrace{\sum_{bb'} \Phi(a',b'|a,b)\beta(b|a)}_{\cchn(a'|a)} 
    \crf(a)&=& \sum_{bb'}\hat{\Phi}(a,b|a',b')\underbrace{\Phi[\Gamma](a',b')}_{\eta(b'|a')\eta(a')}\\
    &:=& \hat{\varphi}'_\Gamma (a|a')\eta(a').
\end{eqnarray*}    
where on the left-hand side we have used the fact that $(\Phi_{AB},\beta_B)$ is a dilation of $\cchn$, and where $\hat{\varphi}'_\Gamma (a|a')$ is the reverse map obtained through this recipe. By summing on both sides over $a$, we see that $\eta(a')=\varphi[\gamma](a')$: whence $\hat{\varphi}'_\Gamma (a|a')$ is identical to Eq.~\eqref{bayes}. In particular, the only freedom left is indeed that of choosing $\gamma_A$.
\end{proof}
Notice that we did not have to assume that $\Phi_{AB}$ is reversible: the proof is valid for any dilation. Also, we did not have to assume that $\beta(b|a)=\beta(b)$ carries no initial correlations; of course, one can choose a dilation with this property, if deemed physically important. By contrast, having chosen the dilation, the posterior $\eta:=\Phi[\Gamma]$ is what it is: one cannot enforce $\eta(b'|a')=\eta(b')$.

\subsection{Assignment Maps}
Let us now have a more detailed look at the structure of dilations. The first operation (appending an environment) appears as the natural reverse of the last operation (tracing out the environment). We are going to show that this is indeed the case (see Appendix \ref{app:proofs} for supplementary proofs).

Denote the operation of tracing out the environment by $\ctrace_B$. The map of appending an environment B to the system A (\textit{assignment map}), with reference state on AB given by $\Lambda$, is given by the Bayesian reverse of $\ctrace_B$:
\begin{equation}\label{eq:cgenassign}
    \hat{\ctrace}_{B,\Lambda}[\bullet_A]= \bullet_A \cdot \left(\frac{\Lambda}{\ctrace_B [\Lambda]}\right)_{\! B|A}\,.
\end{equation} Explicitly, $\hat{\ctrace}_{B,\Lambda}[p_A](a,b)=p(a)\Lambda(b|a)$ has the form of \textit{Jeffrey's update}: given a reference joint distribution $\Lambda(a,b)$, if one gets the information that the distribution of $A$ is actually given by $p$, the rational way of updating one's knowledge is to update $A$'s marginal while keeping what attains to $B$ unchanged.

In turn, the Bayesian reverse of any classical assignment map, for which $\Lambda(a,b)$ is product, is the partial trace, for any choice of reference prior: $\cret{\hat{\ctrace}_{B,\square \otimes \beta},\gamma}=\ctrace_B$ for all $\beta$ and $\gamma$.

The generic definition \eqref{eq:cdil} of the dilation $(\Phi_{AB},\beta_B)$ can then be written as
\begin{equation}
\cchn= \ctrace_B \circ {\Phi} \circ \hat{\ctrace}_{B, \Lambda}
\end{equation} with a choice of $\Lambda$ such that $\frac{\Lambda(a,b)}{\sum_a\Lambda(a,b)}=\beta(b|a)$. By using the composability property \eqref{composability_c}, one obtains
\begin{equation}
\hat{\varphi}_\gamma = \ctrace_B \circ \hat{\Phi}_\Gamma \circ \hat{\ctrace}_{B, \Phi[\Gamma]}
\end{equation} which is what we proved in Result \ref{res1}. For relevant proofs see Appendix \ref{app:proofs-ccomp-bayes}.

Classically, by choosing $\Lambda(b|a)\neq \Lambda(b)$, initial system-bath correlation are straightforwardly described. The quantum analog, by contrast, took some discussions to be clarified
\cite{pechukas,alicki-comment,buscemi-NCPTP}. The Petz map of the partial trace satisfies all the properties of a completely positive assignment map. For a generic reference state $\Omega$ of $AB$, it reads
\begin{equation}\label{qass}
        \qam{\Omega}[\bullet_A] = \sqrt{\Omega}\left[ \left(\frac{1}{\sqrt{\TrB[\Omega]}} \bullet_A 
        \frac{1}{\sqrt{\TrB[\Omega]}} \right) \otimes \one_B \right] \sqrt{\Omega}\,. 
\end{equation} For an uncorrelated $\Omega=\square \otimes \beta$, it takes the form \begin{equation}
\label{eq:qassprod}
\qam{\square \otimes \beta}[\bullet_A]  = \bullet_A \otimes \beta\,.
\end{equation}

With this definition of the \textit{quantum assignment map}, we now tackle retrodiction on dilations in the quantum formalism.

\begin{figure*}
\includegraphics[width=0.65\textwidth]{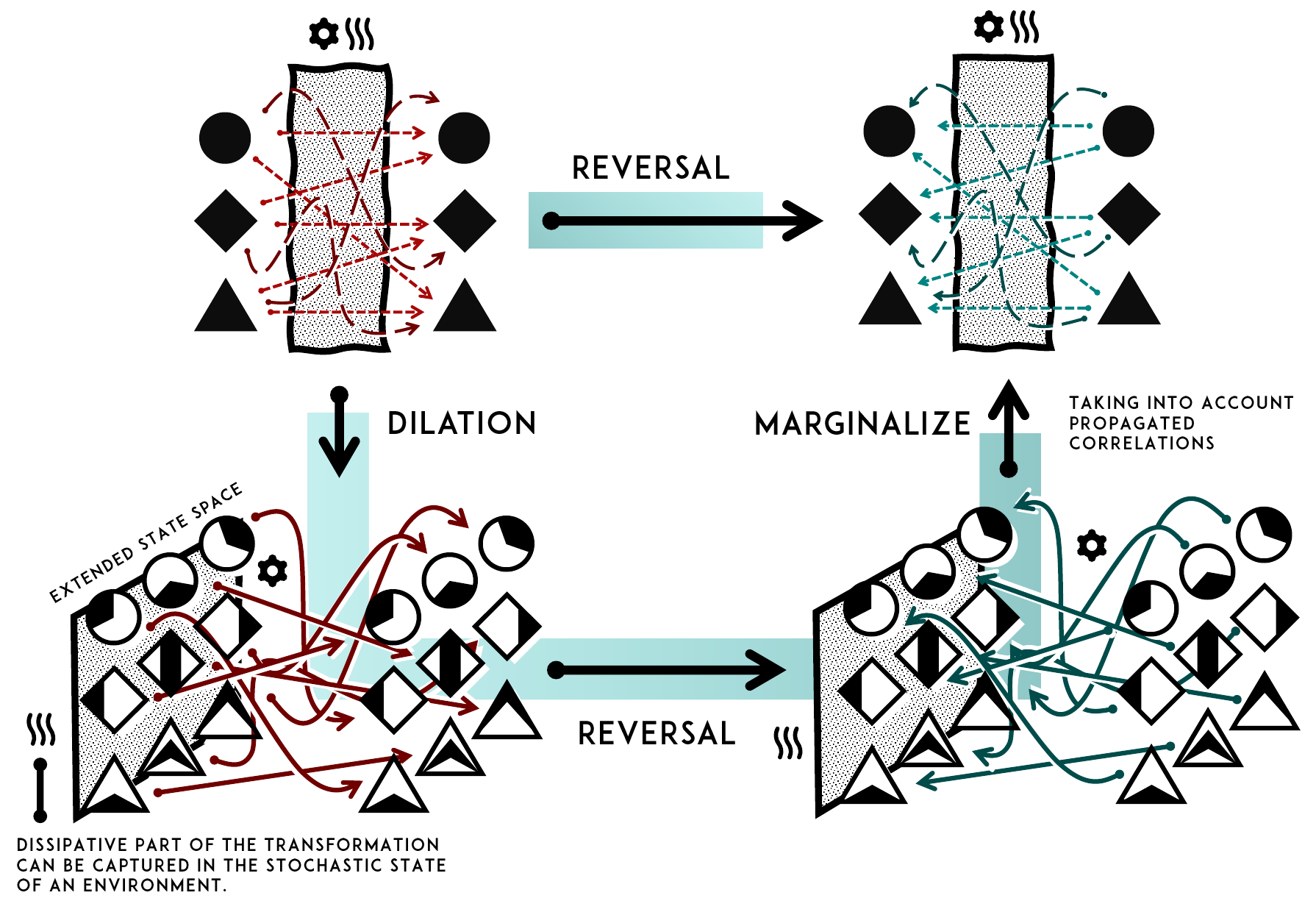}
    \caption{Two routes for Bayesian retrodiction illustrated. One can show that these two protocols always give the same reverse process, as long as the propagated correlations formed across the reference prior and the ancillary environment is accounted for. This is captured by the retrodictive assignment map \eqref{ptz3}.}
    \label{fig:DilDiag} 
\end{figure*}

\subsection{Quantum Dilations \& the Petz Recovery Map}
For quantum processes, we focus on unitary dilations with an initially uncorrelated state of the environment. Any quantum channel $\chn$ can be seen as the marginal of a global unitary $\mathcal{U}$ acting on a target input and an ancillary system prepared in a suitable density operator $\beta$ \cite{wilde_2013, nielsen-chuang}: \begin{equation}\label{eq:chndil}
    \chn[\bullet] = \TrB \left[U (\bullet \otimes \beta) U^\dagger\right] 
\end{equation}
As in the classical case, it can be seen as the composition of channels
\begin{equation}\label{eq:chndildecomp}
    \chn[\bullet] = \TrB{} \circ \mathcal{U} \circ \qam{\square \otimes \beta} [\bullet]\,.
\end{equation}
The Petz map of $\chn$ with reference state $\alpha$ can be computed directly, using the fact that the adjoint of \eqref{eq:chndil} is $\chn^\dag[\bullet] = \TrB[\sqrt{\one \otimes \beta} U^\dag (\bullet \otimes \one) U \sqrt{\one \otimes \beta}]$:
\begin{widetext}
\begin{equation}
\begin{aligned}\label{ptz3}
    \petz[\bullet] &= \TrB \Bigg\{\sqrt{\alpha \otimes \beta}\, U^\dag 
    \left[
        \left(\frac{1}{\sqrt{\TrB[U (\alpha \otimes \beta) U^\dagger]}} \, \bullet \, 
        \frac{1}{\sqrt{\TrB[U (\alpha \otimes \beta) U^\dagger]}}\right)
        \otimes \one
    \right]
    U \,\sqrt{\alpha \otimes \beta}\Bigg\}\\ &=\TrB \Bigg\{ U^\dag 
    \underbrace{\left[
        \sqrt{U (\alpha \otimes \beta) U^\dagger}
        \left[
        \left(\frac{1}{\sqrt{\TrB[U (\alpha \otimes \beta) U^\dagger]}} \, \bullet \, 
        \frac{1}{\sqrt{\TrB[U (\alpha \otimes \beta) U^\dagger]}}\right)
        \otimes \one
        \right] 
        \sqrt{U (\alpha \otimes \beta) U^\dagger}
    \right]}_{\qam{\, \mathcal{U}[\alpha\otimes\beta]}[\bullet]}
    U \Bigg\} \\
    &= \TrB{} \circ \mathcal{U}^\dag \circ \qam{\, \mathcal{U}[\alpha\otimes\beta]} [\bullet]\,,
\end{aligned}
\end{equation}
\end{widetext} where between the first and the second line we inserted two identities $U^\dagger U=\one$ and used $U\sqrt{ (\alpha \otimes \beta)}U^\dagger=\sqrt{U (\alpha \otimes \beta) U^\dagger}$, \changessecond{and where we used \eqref{qass}}.

\begin{table*}
\begin{tabular}{@{}lcc@{}}
\toprule
                                         & Classical                                                                                       & Quantum                                                                                   \\ \midrule
Dilation Definition                      & $\cchn(a'|a) = \sum_{bb'} \Phi(a'b'|ab) \beta(b)$                                               & $\chn[\bullet] = \TrB [U \bullet \otimes \beta U^\dag]$                                   \\
Dilation as Channels                     & $\cchn = \ctrace_B \circ \Phi \circ \hat{\ctrace}_{B,\square \otimes \cbath}$                   & $\chn = \TrB{} \circ \mathcal{U} \circ \qam{\square \otimes \beta}$                       \\
Bayesian Inversion                       & $\rvp = \ctrace_B \circ \Phi^{-1} \circ \hat{\ctrace}_{B, \Phi[\crf \otimes \cbath]}$           & $\petz = \TrB{} \circ \mathcal{U}^\dag \circ \qam{\, \mathcal{U}[\alpha\otimes\beta]}$    \\ \midrule
\multicolumn{3}{c}{Via decomposability}                                                                                                                                                                                                \\ \midrule
Assignment Map to Partial Trace          & $\cret{\hat{\ctrace}_{B,\square \otimes \cbath},\crf} = \ctrace_B$                              & $\ret{\qam{\square \otimes \beta}, \rf} = \TrB{}$                                         \\
Inversion of Global Process              & $\cret{\Phi, \crf \otimes \cbath} = \Phi^{-1}$                                                  & $\ret{\mathcal{U}, \rf \otimes \beta} = \mathcal{U}^\dag$                                 \\
Partial Trace to Retrodictive Assignment & $\cret{\ctrace_B,\Phi[\crf \otimes \cbath]} = \hat{\ctrace}_{B,\Phi[\crf \otimes \cbath]}$      & $\ret{\TrB{}, \mathcal{U}[\rf \otimes \beta]} = \qam{\, \mathcal{U}[\alpha\otimes\beta]}$ \\ \midrule
\multicolumn{3}{c}{If no correlations are formed, and reverse map is well-defined $\Rightarrow$ tabletop reversibility}                                                                                                                \\ \midrule
Tabletop Time-Reversibility              & $\rvp^{\text{TR}} = \ctrace_B \circ \Phi^{-1} \circ \hat{\ctrace}_{B, \square \otimes \cbath'}$ & $\petz^{\text{TR}} = \TrB{} \circ \mathcal{U}^\dag \circ \qam{\square\otimes\beta'}$  \\ \bottomrule   
\end{tabular}
\caption{Summary table of the relation between retrodiction and dilations. To stress the comparison between classical and quantum theory, in this table the classical dilation $\Phi$ is a reversible channels, and the bath is taken as initially uncorrelated with the system (although these restrictions are not needed, as proved in the text).}
\label{tab:dilretsyms}
\end{table*}

Thus, we verified directly the composition that was expected on formal grounds \eqref{composability_q}. We can then state the claimed: 

\begin{result}\label{res2}
    Given a quantum map $\chn$, the reverse obtained by dilating with an environment, reversing the dilated map (accounting for propagated correlations through the assignment map), then removing the environment, is the same as the Petz map \eqref{petz} computed directly on the map. In other words, the knowledge of a dilation of $\chn$ does not add any useful information to define the reverse of $\chn$. 
\end{result}

\begin{proof}
The recipe through the dilation is the composition of 
\begin{equation}
    \begin{aligned}\label{eq:qdildecomp}
    & \ret{\qam{\square \otimes \beta}, \rf} = \TrB{} \\
    & \ret{\mathcal{U}, \qam{\square \otimes \beta}[\rf]} = \hat{\mathcal{U}}_{\rf \otimes \beta} = \mathcal{U}^\dag \; \; \; \because \eqref{eq:deterministic} \\ 
    & \ret{\TrB{}, \mathcal{U}\circ \qam{\square \otimes \beta}[\rf]} = \qam{\, \mathcal{U}[\alpha\otimes\beta]}.
\end{aligned}
\end{equation}
 Proofs for each individual part of the decomposition are found in Appendix \ref{app:retonqdil}. This composition indeed coincides with the Petz map, as proved in \eqref{ptz3}.
\end{proof} 

We summarize the structural symmetries that relate dilation and retrodiction, for classical and quantum theory, in Table \ref{tab:dilretsyms}. In both regimes, the role of \textit{retrodictive assignment maps} $\qam{\, \mathcal{U}[\alpha\otimes\beta]}, \hat{\ctrace}_{B, \Phi[\crf \otimes \cbath]}$ ensures consistency in expressing the reverse process. \changessecond{The last line of the table anticipates the definition of tabletop reversibility, the central object of this paper, which we are going to introduce next.}

\section{Definition of tabletop reversibility and related classes of channels}
\label{sec:ttr}

\changessecond{In this Section, we introduce the new classes of channels that are the central object of this work.} From here onward, we work only in the quantum formalism. When required, we shall highlight whether a result is purely quantum, or is also true for classical processes.

\subsection{Tabletop Reversibility}

\changessecond{Our primary concern is the implementation of a Petz map $\petz$, given the control on the implementation of the channel $\chn[\bullet] = \TrB[U(\bullet\otimes\beta)U^\dag]$.

Implementing the Petz map is not straightforward, and approximate realisations have been studied recently \cite{quek2020quantum,numerical-retrodiction}. With what we introduced, we can understand the reason of this difficulty. On the one hand, since the Petz map is a CPTP map, there exist a unitary $V$ and an ancillary state $\bar{\beta}$ such that $\petz[\bullet] = \text{Tr}_{\bar{B}}\bqty{V \pqty{\bullet\otimes\bar{\beta}} V^\dagger}$. But in general, $V\neq U^\dagger$: we may have to build a dedicated unitary. On the other hand, we have just seen in Eq.~\eqref{ptz3} that every Petz map can be written as $\petz[\bullet] = \TrB\bqty{U^\dagger  \qam{\, \mathcal{U}[\alpha\otimes\beta]} [\bullet] U}$. But in general, $\qam{\, \mathcal{U}[\alpha\otimes\beta]} [\bullet]\neq \bullet\otimes\beta'$, as shown in \eqref{qass}: we may have to do something more complicated than appending an ancilla.

We want to identify the special cases, in addition to unitary and isothermal channels, where \textit{the reverse channel can be obtained by just appending an ancilla and inverting the unitary}:
\begin{defn} \label{def-tabletopR}
A quantum channel $\chn$ with a unitary dilation $\chn[\bullet] = \TrB[U(\bullet\otimes\beta)U^\dag]$ is called \textbf{tabletop reversible for the prior $\alpha$} [shorthand $\TR(\alpha,\beta'|U,\beta)$] if there exists a state $\beta'=\beta'(\alpha)$ of the bath such that the Petz map with respect to $\alpha$ is
\begin{equation}\label{eqfriendly}
    \petz[\bullet] = \TrB\bqty{U^\dagger \pqty{\bullet\otimes\beta'} U}
\end{equation}
for the same $U$ that enters the dilation of $\mathcal{E}$.
\end{defn}
Notice that this definition does not mean that the reverse should be implementable by acting only on the system, a situation studied by Aberg \cite{aberg-quantum-fluct} and inspired by dynamical decoupling. Even in the generic case of isothermal processes one may have to invert the system-bath interaction, if the latter is not constant.}

\subsection{Product-Preserving Maps \& Generalized Thermal Operations}

Here we introduce two more definitions that will be used below.

\begin{defn}
   A unitary that acts in a joint Hilbert space $\mathcal{H}_A\otimes \mathcal{H}_B$ is \textbf{product-preserving} if
\begin{equation}\label{eq:pipo-definition}
    \exists(\alpha, \beta, \alpha', \beta'): \, U(\alpha \otimes \beta)U^\dag = \alpha' \otimes \beta'
\end{equation}
Here, $\alpha,\alpha' \in S(\mathcal{H}_A)$ and $\beta, \beta' \in S(\mathcal{H}_B)$ are positive semidefinite operators with trace one, and at least one of $\alpha$, $\beta$ is not maximally mixed (to exclude the obvious case $U(\one \otimes \one)U^\dag=\one \otimes \one$). We shall also call any tuple $(U,\alpha,\beta)$ for which $U$ is product-preserving with respect to $\alpha$ and $\beta$ a \textbf{product-preserving tuple}.
\end{defn}
Contrary to the production of \textit{correlations} (e.g.~in universal entanglers \cite{karol2015certainty,brahmachari2022dual}) and the preservation of maximal entanglement (Bell-to-Bell maps \cite{brahmachari2022dual}), product-preservation has not been studied systematically prior to this work. It appears as a natural property of thermal maps, and is at the origin of several results in entropy production, thermalization and reversibility in the quantum regime \cite{horo-oppen-thermal,AWWW18,santos_landi,landi-pater-21-review}. By relaxing Eq.~\eqref{eq:thermal-hamiltonian-definition}, we enlarge that natural setting in a way that was already used in some other works in the literature \cite{AWWW18,Alhambra16}:
\begin{defn}
    A unitary that acts on a joint Hilbert space $\mathcal{H}_A\otimes \mathcal{H}_B$ is a \textbf{generalized thermal unitary} if
\begin{equation}\label{eq:gg-definition} 
\begin{aligned}
    &\exists(H_A , H_B, H'_A, H'_B): \\
    &\qquad U(H_A\otimes\one + \one\otimes H_B)U^\dag 
    = H_A'\otimes\one + \one \otimes H_B',
    \end{aligned}
\end{equation}
where either $H_A$ or $H_B$ is not proportional to the identity. The corresponding \textbf{generalized thermal map} is given by $\chn[\bullet] = \TrB\{U [\bullet\otimes \,\therm{H_B}] U^\dag\}$. 
\end{defn}
It is straightforward  (see Appendix~\ref{GTherm}) that 
\begin{equation}\label{eq:generalized-themal-full-rank}
    U \therm{H_A} \otimes \therm{H_B} U^\dag = \therm{H'_A} \otimes \therm{H'_B}\,,
\end{equation} that is, $(U,\therm{H_A},\therm{H_B})$ are a product-preserving tuple for all generalized thermal operations.

\section{Results}\label{ResSec}

Having defined tabletop time-reversibility \eqref{eqfriendly}, product-preservation \eqref{eq:pipo-definition} and generalized thermal channels \eqref{eq:gg-definition}, we prove several results connecting them. 

In subsection \ref{ss:gen}, we prove some general connections between these classes of processes. At a glance:
\begin{itemize}
 \item Theorem \ref{thm:FR1} fully characterizes the generalized thermal as a subset of product-preserving unitaries.
 \item Theorem \ref{thm:local-spectra-preservation} establishes a strong relation between the input pair $(\alpha,\beta)$ and the output pair $(\alpha',\beta')$ of any product-preserving unitary (and Corollary \ref{thm:local-energy-conservation} interprets that relation in the thermal case).
 \item Theorem \ref{thm:main} is almost obvious, but is central to our work: it proves that product preservation leads to tabletop reversibility.
 \item Theorem \ref{thm:TR-not-nec-PP} proves that the converse is not true, by exhibiting examples of tabletop reversibility that do not arise from product preservation.
\end{itemize}
In subsection \ref{ss:qubits}, we present a thorough study for two-qubit unitaries. At a glance:
\begin{itemize}
    \item Theorem \ref{thm:all-GG} provides a parametric characterization of two-qubit generalized thermal unitaries.
    \item Theorem \ref{thm:all-2q-U-PIPO} shows that, given any two qubit unitary $U$ and a pure state $\ket{\beta}$, there always exists $\ket{\alpha}$ such that $U(\ket{\alpha}\otimes\ket{\beta})$ is product. This is unique to low-dimensional cases, as the existence of non-product-preserving unitaries have been proven in higher dimensions \cite{UniversalEntangler}.
    \item Theorems \ref{thm:GG-vs-PP} and \ref{thm:puremixed} provide two connections between the generalized thermal character and the product-preservation properties of two-qubit unitaries.
\end{itemize}

\subsection{General Results}
\label{ss:gen}

\begin{theorem}\label{thm:FR1}
     $U$ is generalized thermal if and only if it is product-preserving with regard to \textit{full rank} $\alpha, \beta$.
\end{theorem}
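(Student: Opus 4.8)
The plan is to pass between states and ``Hamiltonians'' via the matrix logarithm, using the identity $\log(X\otimes Y)=\log X\otimes\one+\one\otimes\log Y$ for strictly positive operators together with injectivity of $\exp$ on Hermitian operators. The forward implication is then essentially a reading of \eqref{eq:generalized-themal-full-rank}: if $U$ is generalized thermal, \eqref{eq:gg-definition} holds for some $H_A,H_B,H_A',H_B'$ with (say) $H_A\not\propto\one$, and \eqref{eq:generalized-themal-full-rank} (proved in Appendix \ref{GTherm}) gives $U(\therm{H_A}\otimes\therm{H_B})U^\dag=\therm{H_A'}\otimes\therm{H_B'}$. Each Gibbs state $\therm{H}$ is strictly positive (being proportional to $e^{\kappa H}$), hence full rank, and $H_A\not\propto\one$ makes $\therm{H_A}$ non-maximally-mixed; so $(U,\therm{H_A},\therm{H_B})$ is a product-preserving tuple with full-rank marginals, as required.

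For the converse I would take a product-preserving tuple $(U,\alpha,\beta)$ with $\alpha,\beta$ full rank; then $\alpha'\otimes\beta'=U(\alpha\otimes\beta)U^\dag$ is full rank (a unitary conjugate of a full-rank operator), so $\alpha'$ and $\beta'$ are full rank too. Set $K_A=\log\alpha$, $K_B=\log\beta$, $K_A'=\log\alpha'$, $K_B'=\log\beta'$ --- all Hermitian and uniquely defined --- and write $K=K_A\otimes\one+\one\otimes K_B$, $K'=K_A'\otimes\one+\one\otimes K_B'$. Since $\log(\alpha\otimes\beta)=K$ and $\log(\alpha'\otimes\beta')=K'$, exponentiating the product-preservation equation yields $e^{\,U K U^\dag}=e^{K'}$. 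Both $UKU^\dag$ and $K'$ are Hermitian (conjugation by $U$ and tensoring with $\one$ preserve Hermiticity), so injectivity of $\exp$ on Hermitian operators forces $UKU^\dag=K'$, which is exactly \eqref{eq:gg-definition} with $H_A=K_A$, $H_B=K_B$, $H_A'=K_A'$, $H_B'=K_B'$. Finally, ``at least one of $\alpha,\beta$ not maximally mixed'' is precisely ``at least one of $K_A,K_B$ not proportional to $\one$'', so $U$ is genuinely generalized thermal; a common rescaling of the four Hamiltonians can absorb a prescribed value of $\kappa$ if one additionally wants the bath state of the associated thermal map to be exactly $\beta$.

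The one step I would watch carefully is the use of injectivity of $\exp$ in the converse: for general matrices $e^X=e^Y$ does not imply $X=Y$, but here both exponents are Hermitian, so both exponentials are positive definite and the principal matrix logarithm recovers the exponent uniquely. Everything else is bookkeeping --- strict positivity and full-rankness of Gibbs states, the log-of-tensor-product identity, Hermiticity stability under $U(\cdot)U^\dag$ and $\otimes\,\one$, and the equivalence between an operator being proportional to $\one$ and the corresponding state being maximally mixed.
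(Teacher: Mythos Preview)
Your proposal is correct and follows essentially the same approach as the paper: both directions pass between full-rank states and Hamiltonians via the matrix logarithm/exponential, using the identity $\log(X\otimes Y)=\log X\otimes\one+\one\otimes\log Y$ and the fact that unitary conjugation commutes with functional calculus (the paper phrases this as $UAU^\dag=A'\Leftrightarrow Uf(A)U^\dag=f(A')$, which is exactly your injectivity-of-$\exp$-on-Hermitians step). The only cosmetic difference is that the paper writes $\alpha=\therm{H_A}$ with an explicit inverse temperature $\kappa$, whereas you set $H_A=\log\alpha$ directly; these differ only by the rescaling you mention at the end.
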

\begin{proof}
The ``only if'' direction has already been established in Eq.~\eqref{eq:generalized-themal-full-rank}. For the ``if'' implication, we note that for any full-rank state $\alpha$, one can always find a corresponding $H_A$ for which $\alpha = \therm{H_A} = \exp{\kappa(H_A - Z_{H_A})}$, and likewise for the bath state $\beta$ with the same inverse temperature $-\kappa$. Then, \begin{eqnarray*}
\ln(\alpha\otimes\beta)&=&\kappa(H_A\otimes\one+\one\otimes H_B)-(Z_{H_A}+Z_{H_B})\one\otimes\one.
\end{eqnarray*}
The same construction can be done for the logarithm of the output product states $\ln(\alpha'\otimes\beta')$. By invoking
\begin{equation} \label{uonf}
    U A U^\dag = A' \; \iff \; U f(A) U^\dag = f(A'), 
\end{equation}
the product-preserved behavior of the tuple implies $U\ln(\alpha\otimes\beta)U^\dagger=\ln(\alpha'\otimes\beta')$, which is Eq.~\eqref{eq:gg-definition}. Thus there will always exist for every product-preserved tuple with full rank $\alpha, \beta$ some $H_A, H_B, H_A', H_B'$ such that $U$ is a generalized thermal unitary. \end{proof}

Notice that, in the previous theorem, the condition of full rank cannot be relaxed. Indeed, on the one hand, the logarithm of rank-defective states is ill-defined. On the other hand, it is simple to find unitaries that preserve one pure product state, and that are not even close (in any meaningful distance) to a generalized thermal unitary. One such example is the two-qubit unitary $U\ket{00}=\ket{00}$, $U\ket{01}=\frac{1}{\sqrt{3}}(\omega\ket{01}+\omega^*\ket{10}+\ket{11})$, $U\ket{10}=\frac{1}{\sqrt{3}}(\omega^*\ket{01}+\omega\ket{10}+\ket{11})$, $U\ket{11}=\frac{1}{\sqrt{3}}(\ket{01}+\ket{10}+\ket{11})$ with $\omega=e^{2\pi i/3}$.

Next, in the definition of product-preservation, we have merely required the input states $\alpha\otimes \beta$ and output states $\alpha'\otimes\beta'$ to be uncorrelated. Now we show that product-preservation implies a stronger relationship between the two:
\begin{theorem}\label{thm:local-spectra-preservation}
If $(U,\alpha,\beta)$ is a product-preserving tuple with output states $\alpha'$ and $\beta'$, then there exist local unitaries $u_A$ and $u_B$ such that $\alpha'\otimes\beta' = (u_A\alpha u_A^\dag)\otimes(u_B\beta u_B^\dag)$. In the case $d_A=d_B$, this may hold up to a swap: i.e.~it could be $\alpha'\otimes\beta' = (u_A\beta u_A^\dag)\otimes(u_B\alpha u_B^\dag)$.
\end{theorem}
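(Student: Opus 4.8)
The plan is to exploit the fact that a global unitary $U$ preserves the full spectrum of any operator it conjugates, together with the product structure of both the input state $\alpha\otimes\beta$ and the output state $\alpha'\otimes\beta'$. Concretely, write $\alpha=\sum_i p_i\ketbra{i}{i}$ and $\beta=\sum_j q_j\ketbra{j}{j}$ in their eigenbases. Then the eigenvalues of $\alpha\otimes\beta$ are exactly the products $\{p_iq_j\}_{i,j}$ (with multiplicity), and since $U(\alpha\otimes\beta)U^\dagger=\alpha'\otimes\beta'$, the same multiset equals $\{p'_kq'_\ell\}_{k,\ell}$, where $p'_k$ and $q'_\ell$ are the eigenvalues of $\alpha'$ and $\beta'$. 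So the real content of the theorem is a purely combinatorial/number-theoretic statement: if two multisets of the form $\{p_iq_j\}$ and $\{p'_kq'_\ell\}$ coincide (all entries summing appropriately to $1$, with the $p$'s and $q$'s themselves probability vectors of sizes $d_A$ and $d_B$), then $\{p_i\}=\{p'_k\}$ and $\{q_j\}=\{q'_\ell\}$ as multisets (possibly with the roles of the two factors exchanged when $d_A=d_B$). Once the spectra match, $\alpha'$ is unitarily equivalent to $\alpha$ and $\beta'$ to $\beta$ by spectral decomposition, which gives the claimed local unitaries $u_A,u_B$ (or the swapped version).

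The first concrete step I would take is to reduce to the generic/full-rank case, or handle degeneracies carefully. If all $p_i$ and all $q_j$ are distinct and nonzero, I would argue as follows: let $p_{\max}$ and $q_{\max}$ be the largest eigenvalues of $\alpha$ and $\beta$; then $p_{\max}q_{\max}$ is the unique largest element of the product multiset, hence must equal $p'_{\max}q'_{\max}$. Next I would use a ``ratio'' or ``factorization uniqueness'' argument: the product multiset, viewed as a function on $\{1,\dots,d_A\}\times\{1,\dots,d_B\}$, determines the rank-one matrix $M_{ij}=p_iq_j$ up to the freedom $p_i\mapsto\lambda p_i$, $q_j\mapsto q_j/\lambda$; since the total sums are fixed to $1$ this scaling is pinned down, so the factorization is unique, forcing $\{p_i\}=\{p'_k\}$ and $\{q_j\}=\{q'_\ell\}$. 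When $d_A=d_B$, the rank-one matrix $M$ can alternatively be factored with the roles of rows and columns swapped, which is the source of the ``up to swap'' clause. For the non-full-rank case one discards the zero eigenvalues (they contribute zeros to the product multiset in a controlled way — the number of zeros is $d_Ad_B - (\text{rk}\,\alpha)(\text{rk}\,\beta)$, which pins down $(\text{rk}\,\alpha)(\text{rk}\,\beta)$ and, combined with $\text{rk}\,\alpha\le d_A$, $\text{rk}\,\beta\le d_B$, often the individual ranks) and then run the same uniqueness-of-factorization argument on the nonzero part.

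I expect the main obstacle to be the bookkeeping around \emph{degenerate} spectra and the swap ambiguity: when several $p_i$ coincide or several products $p_iq_j$ collide accidentally, the clean ``largest element factors uniquely'' argument needs to be replaced by something more robust, such as comparing the generating polynomials $\prod_{i,j}(1-p_iq_j\,t)$ or, better, working with the two-variable generating function $\sum_{i,j} x^{?}\cdots$ — or most cleanly, observing that the multiset $\{p_iq_j\}$ determines, for each value $v$, the number of pairs with product $v$, and that this data is equivalent to knowing the measure $\mu_\alpha * \mu_\beta$ (multiplicative convolution) where $\mu_\alpha=\sum_i\delta_{p_i}$; uniqueness of multiplicative deconvolution on $(0,1]$ (equivalently additive deconvolution on $\log$-scale, i.e.\ the fact that a compactly supported measure is determined by its self-convolution given the number of atoms, via Fourier/characteristic functions) then yields $\mu_\alpha=\mu_{\alpha'}$ and $\mu_\beta=\mu_{\beta'}$ up to the obvious symmetry. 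Passing from the equality of spectra to the statement with $u_A$ and $u_B$ is then immediate: diagonalize $\alpha$ and $\alpha'$ in their respective eigenbases with matching eigenvalue order and let $u_A$ be the unitary mapping one eigenbasis to the other, similarly for $u_B$, and in the $d_A=d_B$ swap case compose with the swap $\swap$ beforehand.
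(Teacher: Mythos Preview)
Your reduction to the combinatorial question---whether the multiset $\{p_iq_j\}_{i,j}$ determines the factor multisets $\{p_i\}$ and $\{q_j\}$ up to swap---is exactly the strategy the paper uses, and the final passage from matching spectra to local unitaries is fine. The difficulty is entirely in that combinatorial core, and your two proposed attacks on it both have genuine gaps. The rank-one argument conflates knowing the \emph{matrix} $M=(p_iq_j)_{i,j}$ with knowing only the \emph{multiset} of its entries: a rank-one matrix does fix its factors up to scaling, but the unordered multiset of entries does not determine the grid arrangement, so you cannot recover $M$. The deconvolution argument fails for the same underlying reason: additive convolution of finitely supported atomic measures on $\mathbb{R}$ is \emph{not} uniquely factorizable, so passing to logarithms and taking Fourier transforms does not give uniqueness.

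In fact the combinatorial statement---and hence the theorem as stated---is false in general. For $d_A=2$, $d_B=3$ and any $x\in(0,1)$, take
\[
p=\Bigl(\tfrac{1}{1+x^3},\tfrac{x^3}{1+x^3}\Bigr),\quad
q=\tfrac{1}{S}\,(1,x,x^2),\quad
p'=\Bigl(\tfrac{1}{1+x},\tfrac{x}{1+x}\Bigr),\quad
q'=\tfrac{1}{S'}\,(1,x^2,x^4),
\]
with $S=1+x+x^2$ and $S'=1+x^2+x^4$. Since $(1+x^3)S=(1+x)S'=\sum_{k=0}^{5}x^k$, both product multisets equal $\{x^k/\sum_{m=0}^{5}x^m:k=0,\dots,5\}$, so a unitary $U$ with $U(\alpha\otimes\beta)U^\dagger=\alpha'\otimes\beta'$ exists; yet $p\neq p'$, $q\neq q'$, and no swap is available since $d_A\neq d_B$. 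A similar construction with $d_A=d_B=4$ (exponent sets $\{0,1,8,9\},\{0,2,4,6\}$ versus $\{0,1,2,3\},\{0,4,8,12\}$, obtained by regrouping the factors of $(1+x)(1+x^2)(1+x^4)(1+x^8)$) shows the claim fails even when the dimensions agree and swaps are allowed. The paper's own proof slips at the same point: it tacitly asserts that after sorting one has $\lambda_i\mu_j=\lambda'_i\mu'_j$ entrywise, which these examples refute. The argument does go through for $d_A=d_B=2$, which is the only case the paper's subsequent two-qubit results actually require.
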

\begin{proof}
    We denote $\sigma [\rho]$ as the eigenspectrum of $\rho$, and write the spectra of $\alpha$ and $\beta$ in the following way:
    \begin{enumerate}
        \item $\sigma [\alpha] = \{ \lambda_1, \dots, \lambda_{d_A}\}$, where $\forall i: \lambda_i \geq \lambda_{i+1}$;
        \item $\sigma [\beta] = \{ \mu_1, \dots, \mu_{d_B}\}$, where $\forall j: \mu_j \geq \mu_{j+1}$.
    \end{enumerate}
    
    As such, $\sigma[\alpha \otimes \beta]$ is given by the set of values $m_{i,j}=\lambda_i \mu_j$. This implies that $m_{i,j} \geq m_{i+1,j}$ and $m_{i,j} \geq m_{i,j+1}$ for all $i,j$. Since the input and output products differ by a unitary transformation, $\sigma[\alpha' \otimes \beta']$ is also given by the same set of $m_{i,j}$, for which some $\sigma [\alpha'] = \{ \lambda_1', \dots, \lambda_{d_A}'\}$ and $\sigma [\beta'] = \{ \mu_1', \dots, \mu_{d_B}'\}$ exist such that $m_{i,j} = \lambda_i' \mu_j'$. It is therefore necessary that $\forall i: \lambda_i' \geq \lambda_{i+1}'$, and likewise for $\mu_j'$. 
    
    Now, assume that there is some $i$ for which $\lambda_i' = c \lambda_i$. This means that for every $j$, $\mu_j' = c^{-1} \mu_j$. But since $\sum_{j} \mu_j=1$ and $\sum_{j} \mu_j' =1$, summing over $j$ for both sides of $\mu_j' = c^{-1} \mu_j$ gives $c=1$. This argument also works for the values of $\mu_j$. Therefore, the spectra must always be conserved.
    
    An ``up to a swap'' is obtained if we begin by assuming that there is some $i$ for which $\lambda_i' = c \mu_i$. Thus, $(U, \alpha, \beta)$ is a product-preserving tuple if and only if the spectra of $\alpha$ and $\beta$ are conserved up to a swap:
    \begin{equation}
    \begin{aligned}
    & \sigma[\alpha']= \sigma[\alpha] \wedge \sigma[\beta']=\sigma[\beta]  \\
    \vee \quad & \sigma[\alpha']=\sigma[\beta] \wedge \sigma[\beta']=\sigma[\alpha].
    \end{aligned}
    \end{equation}
    Note that the second set of conditions can only be fulfilled if $|\text{supp}(\alpha)| \leq d_B$ and $|\text{supp}(\beta)| \leq d_A$.
    
    Since the spectra are conserved, up to a swap, we can always find some local unitary $u_A$ that brings $\alpha$ to $\alpha'$, and similarly for $u_B$, which completes the proof.
\end{proof}

Notably, Theorem~\ref{thm:local-spectra-preservation} shows that demanding the global unitary to preserve noncorrelation is enough to ensure that the \emph{local spectra} of the input states are preserved, up to a swap between the input and ancilla. This provides us with a corollary on the level of the Hamiltonian:
\begin{corollary}\label{thm:local-energy-conservation}
If $(U,H_A,H_B)$ is a generalized thermal tuple with output Hamiltonians $H_A'$ and $H_B'$, then $H_A' = u_A H_A u_A^\dag$ and $H_B' = u_B H_B u_B^\dag$, or $H_A' = u_A H_B u_A^\dag$ and $H_B' = u_B H_A u_B^\dag$, where $u_A$ and $u_B$ are some local unitaries on the system and ancilla respectively.
\end{corollary}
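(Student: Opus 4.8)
The plan is to deduce this from Theorem~\ref{thm:local-spectra-preservation} applied to the Gibbs states, and then transport the conclusion back to the level of Hamiltonians by taking operator logarithms, mirroring the argument in the proof of Theorem~\ref{thm:FR1}. First I would invoke Eq.~\eqref{eq:generalized-themal-full-rank}: since $(U,H_A,H_B)$ is a generalized thermal tuple, $(U,\therm{H_A},\therm{H_B})$ is a product-preserving tuple with output pair $(\therm{H'_A},\therm{H'_B})$. The Gibbs states $\therm{H_A}$ and $\therm{H_B}$ are full rank---$e^{\kappa H}$ is positive definite for finite-dimensional $H$ and finite temperature---so Theorem~\ref{thm:local-spectra-preservation} applies directly and yields local unitaries $u_A,u_B$ with $\therm{H'_A}=u_A\therm{H_A}u_A^\dag$ and $\therm{H'_B}=u_B\therm{H_B}u_B^\dag$, or, in the case where the dimensions permit ($d_A=d_B$ for full-rank states), the analogous identities with $A$ and $B$ interchanged.

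Next I would take the operator logarithm on both sides of each identity. Writing $\ln\therm{H}=\kappa(H-Z_H\one)$ as in the proof of Theorem~\ref{thm:FR1} and using $u\ln(\rho)u^\dag=\ln(u\rho u^\dag)$ for unitary $u$, the relation for $A$ becomes $\kappa\,(H'_A-Z_{H'_A}\one)=\kappa\,(u_AH_Au_A^\dag-Z_{H_A}\one)$; since $\kappa=-1/k_BT$ is a nonzero real it cancels, leaving $H'_A=u_AH_Au_A^\dag+(Z_{H'_A}-Z_{H_A})\one$, and identically $H'_B=u_BH_Bu_B^\dag+(Z_{H'_B}-Z_{H_B})\one$.

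The only delicate point is the removal of the two scalar offsets. The defining relation \eqref{eq:gg-definition} constrains only the joint operator $H'_A\otimes\one+\one\otimes H'_B$, which is invariant under the shift $H'_A\mapsto H'_A+c\one$, $H'_B\mapsto H'_B-c\one$; equivalently, the Gibbs state $\therm{H'_B}$ entering the generalized thermal map is unchanged by it. Thus $H'_A$ and $H'_B$ are themselves only defined modulo such a shift, and I would pick the representatives $H'_A-(Z_{H'_A}-Z_{H_A})\one$ and $H'_B+(Z_{H'_A}-Z_{H_A})\one$: the former equals $u_AH_Au_A^\dag$ by construction, and the latter equals $u_BH_Bu_B^\dag$ because $Z_{H_A}+Z_{H_B}=Z_{H'_A}+Z_{H'_B}$---which holds since $H_A\otimes\one+\one\otimes H_B$ and $H'_A\otimes\one+\one\otimes H'_B$ are unitarily equivalent, so $\tr(e^{\kappa(H_A\otimes\one+\one\otimes H_B)})=\tr(e^{\kappa(H'_A\otimes\one+\one\otimes H'_B)})$ and each side factorizes into the product of the two local partition functions. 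The swapped case is handled the same way. I do not expect a genuine obstacle; the one thing requiring attention is this bookkeeping of the global energy offset.
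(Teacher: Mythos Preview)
Your proposal is correct and follows essentially the same route as the paper: apply Theorem~\ref{thm:local-spectra-preservation} to Eq.~\eqref{eq:generalized-themal-full-rank}, then pass from Gibbs states back to Hamiltonians by taking logarithms (the paper phrases this via Eq.~\eqref{uonf} with $f=\tau_\kappa$, but it is the same step). Your treatment of the scalar offsets is more explicit than the paper's, which simply takes the logarithm and leaves that bookkeeping implicit; your partition-function argument showing $Z_{H_A}+Z_{H_B}=Z_{H'_A}+Z_{H'_B}$ is a clean way to close that gap.
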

\begin{proof}
    To prove this, we use Theorem~\ref{thm:local-spectra-preservation} on Eq.~\eqref{eq:generalized-themal-full-rank}, and then apply Eq.~\eqref{uonf} to find $\tau_c(H_A')\otimes\tau_c(H_B') = [u_A \tau_c(H_A) u_A^\dag]\otimes[u_B \tau_c(H_B) u_B^\dag] = \tau_c(u_AH_Au_A^\dag)\otimes\tau_c(u_BH_Bu_B^\dag)$, up to a swap. Finally, noting that the exponential of a Hermitian operator is full rank, we can take the logarithm to complete the proof.
\end{proof}

Let us now state the following theorem

\begin{theorem}\label{thm:main}
    If $(U,\alpha,\beta)$ is a product-preserving tuple, the channel $\mathcal{E}[\bullet]=\TrB\bqty{U (\bullet\otimes\beta) U^\dagger}$ is tabletop reversible for reference prior $\alpha$.
\end{theorem}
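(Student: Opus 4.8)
The plan is to obtain the statement as an essentially immediate consequence of the dilation form of the Petz map already derived in Eq.~\eqref{ptz3}. Recall that for any unitary dilation $\chn[\bullet]=\TrB[U(\bullet\otimes\cbath)U^\dagger]$ one has
\begin{equation}
  \petz = \TrB{}\circ\mathcal{U}^\dagger\circ\qam{\,\mathcal{U}[\rf\otimes\cbath]}\,,
\end{equation}
so the only object that could obstruct the desired form \eqref{eqfriendly} is the retrodictive assignment map $\qam{\,\mathcal{U}[\rf\otimes\cbath]}$. The whole argument therefore reduces to showing that product-preservation collapses this map to the trivial operation of appending an ancilla.

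First I would invoke the hypothesis \eqref{eq:pipo-definition}: since $(U,\rf,\cbath)$ is a product-preserving tuple, $\mathcal{U}[\rf\otimes\cbath]=U(\rf\otimes\cbath)U^\dagger=\rf'\otimes\cbath'$ for some system state $\rf'=\chn[\rf]$ and bath state $\cbath'=\Tr_A[U(\rf\otimes\cbath)U^\dagger]$. Next I would substitute $\rf'\otimes\cbath'$ into the definition \eqref{qass} of the quantum assignment map: because $\TrB[\rf'\otimes\cbath']=\rf'$, the $\sqrt{\rf'}$ factors cancel and the map factorizes, giving $\qam{\rf'\otimes\cbath'}[\bullet]=\bullet\otimes\cbath'$ --- this is exactly the product-assignment identity \eqref{eq:qassprod} with $\square=\rf'$. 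Substituting back, $\petz[\bullet]=\TrB[\mathcal{U}^\dagger[\bullet\otimes\cbath']]=\TrB[U^\dagger(\bullet\otimes\cbath')U]$, which is precisely Eq.~\eqref{eqfriendly} of Definition~\ref{def-tabletopR} with the witnessing state identified as $\cbath'(\rf)=\cbath'$. Equivalently, one can read the conclusion straight off the three-step decomposition \eqref{eq:qdildecomp}: its first two lines hold unconditionally, and product-preservation is exactly what turns the third line into the trivial append map.

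I do not expect any genuine obstacle here: as noted in the overview, the theorem is ``almost obvious'' once \eqref{ptz3} is available, and the only substantive check is the one-line factorization of $\qam{\rf'\otimes\cbath'}$. The lone point needing care is well-definedness of $\petz$, which presupposes that $\chn[\rf]=\rf'$ has full support; this is automatic when $\rf$ and $\cbath$ are full rank (then $U(\rf\otimes\cbath)U^\dagger$, hence $\rf'$, is full rank), and in the rank-deficient case the identity above holds verbatim on $\mathrm{supp}(\rf')$, consistent with the conventions discussed around Eq.~\eqref{qass}.
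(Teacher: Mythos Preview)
Your proposal is correct and follows essentially the same route as the paper: invoke the dilation form \eqref{ptz3} of the Petz map, use the product-preservation hypothesis to replace $\mathcal{U}[\alpha\otimes\beta]$ by $\alpha'\otimes\beta'$, and then apply \eqref{eq:qassprod} to collapse the assignment map to $\bullet\otimes\beta'$. Your added remark on the rank-deficient case is extra care not present in the paper's proof but consistent with its later discussion.
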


\begin{proof}
Using \eqref{eq:qassprod}, it is immediate that
\begin{equation}
    \begin{aligned} \label{pp-tr}
        \mathcal{U} [\rf \otimes \beta] &= \rf'\otimes\beta' \\
        \Rightarrow \petz[\bullet] &= \TrB{} \circ \mathcal{U}^\dag \circ \qam{\rf'\otimes\beta'} [\bullet]\\
        &= \TrB\bqty{U^\dagger \pqty{\bullet\otimes\beta'} U} \\
    \end{aligned}
\end{equation} which is tabletop reversibility \eqref{eqfriendly}.
\end{proof}

Next, we prove that the converse of Theorem \eqref{thm:main} is not true:
\begin{theorem} \label{thm:TR-not-nec-PP}
$\TR(\alpha,\beta'|U,\beta)$ does not imply that $(U,\alpha,\beta)$ is generalized thermal.
\end{theorem}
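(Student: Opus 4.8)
The plan is to exhibit an explicit counterexample: a channel $\chn[\bullet]=\TrB[U(\bullet\otimes\cbath)U^\dagger]$, a prior $\rf$, and a bath state $\cbath'$ with $\petz[\bullet]=\TrB[U^\dagger(\bullet\otimes\cbath')U]$, such that $(U,\rf,\cbath)$ is not generalized thermal. The first thing I would do is explain why such an example cannot come from product preservation, so that we know where to look. If $(U,\rf,\cbath)$ were product-preserving with $\rf,\cbath$ both full rank, Theorem~\ref{thm:FR1} would already make it generalized thermal; and if instead $\rf$ were rank-deficient while $\chn[\rf]$ stays full rank (so $\petz$ is well-defined), Theorem~\ref{thm:local-spectra-preservation} forces $U$ to carry the product subspace $\mathrm{supp}(\rf)\otimes\mathcal H_B$ onto a product subspace $\mathcal H_A\otimes\mathrm{supp}(\cbath')$, so $U(\Pi_{\mathrm{supp}(\rf)}\otimes\one)U^\dagger=\one\otimes\Pi_{\mathrm{supp}(\cbath')}$ exhibits $U$ as generalized thermal via \eqref{eq:gg-definition}. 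Hence the witness must exploit the subtler mechanism visible in Eq.~\eqref{ptz3}: it is not necessary that $\qam{\mathcal U[\rf\otimes\cbath]}$ act as $\bullet\mapsto\bullet\otimes\cbath'$ (that, by \eqref{eq:qassprod} and the argument around it, would force $\mathcal U[\rf\otimes\cbath]$ to be product); it suffices that $\TrB\circ\mathcal U^\dagger\circ\qam{\mathcal U[\rf\otimes\cbath]}=\TrB\circ\mathcal U^\dagger\circ(\bullet\otimes\cbath')$, i.e.\ that $\qam{\mathcal U[\rf\otimes\cbath]}[\bullet]-\bullet\otimes\cbath'$ lies in the kernel of $X\mapsto\TrB[U^\dagger X U]$ for every $\bullet$.

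Concretely, I would fix a bath dimension large enough that the free parameters (the state $\cbath'$, and if needed some freedom in $\rf$) can meet that linear constraint — $d_B=d_A^2$ is comfortably sufficient by a parameter count — and take $U$ generic and $\cbath$ full rank. Genericity makes $U$ not generalized thermal for free: by the same bookkeeping as in Theorems~\ref{thm:FR1}--\ref{thm:local-spectra-preservation}, $U$ is generalized thermal iff the space $L_0$ of traceless one-body Hermitian operators ($\dim L_0=d_A^2+d_B^2-2$) meets its image $U^\dagger L_0 U$ nontrivially inside the $(d_A^2d_B^2-1)$-dimensional space of traceless Hermitians; since $2(d_A^2+d_B^2-2)<d_A^2d_B^2-1$ already for two qubits, this intersection is $\{0\}$ on a dense open set of unitaries, and for any chosen $U$ it is a finite linear-algebra computation to certify $L_0\cap U^\dagger L_0 U=\{0\}$. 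With $U$ and $\cbath$ fixed, I would then solve the map identity $\petz=\TrB[U^\dagger(\bullet\otimes\cbath')U]$ for $(\rf,\cbath')$: writing both sides as CPTP maps on $\mathcal H_A$ and comparing Choi operators turns it into a system linear in $\cbath'$ once $\rf$ is chosen, solvable on a nontrivial locus; equivalently one can run the construction backwards from $(U,\cbath')$, set $\mathcal F_{\cbath'}[\bullet]:=\TrB[U^\dagger(\bullet\otimes\cbath')U]$, take $\rf$ to be the state $\mathcal F_{\cbath'}$ returns from the appropriate reference, and then check via the composition law \eqref{composability_q} / involution that $\ret{\mathcal F_{\cbath'},\,\chn[\rf]}=\chn$, which is the same identity. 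In either route one must then display actual numbers and verify: (i) $\chn[\rf]$ full rank; (ii) $\cbath'\succeq0$, $\Tr\cbath'=1$; (iii) the map identity, most transparently by Choi-matrix comparison; (iv) $L_0\cap U^\dagger L_0 U=\{0\}$, so $(U,\rf,\cbath)$ is not generalized thermal.

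The main obstacle I expect is step (iii) together with finding a clean, checkable witness: one must tune the generic $U$ and the bath state $\cbath'$ so that the kernel condition above is satisfied \emph{exactly} while keeping $U$ far from generalized thermal, and then certify the resulting identity of channels. A secondary point to pin down in the writeup is the convention behind Definition~\ref{def-tabletopR}: the prior $\rf$ may be taken rank-deficient as long as $\chn[\rf]$ is full rank, so that $\petz$ is a genuine CPTP map; stating this explicitly ensures the definition applies unambiguously to the exhibited example and makes clear in what sense the counterexample does \emph{not} arise from product preservation even though tabletop reversibility holds.
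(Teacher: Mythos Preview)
Your proposal is a strategy outline, not a proof. The theorem asserts the existence of a tabletop-reversible tuple that is not generalized thermal, so a proof must exhibit concrete data $(U,\alpha,\beta,\beta')$ and verify the four conditions you yourself list. You acknowledge this (``one must then display actual numbers and verify'') but never do it, and the heuristics you offer do not guarantee success: the Choi-matrix identity is indeed linear in $\beta'$ once $\alpha$ is fixed, but you give no argument that the resulting system is consistent, let alone that any solution satisfies $\beta'\succeq 0$ and $\Tr\beta'=1$. Parameter counting is suggestive, not a proof. The ``backwards'' route via the Petz involution is likewise incomplete: to invoke $\ret{\mathcal F_{\beta'},\chn[\alpha]}=\chn$ you must first produce an $\alpha$ for which this holds, and you give no construction of one.

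The paper bypasses all of this by writing down two explicit counterexamples. The first is the two-qubit controlled-$\sigma_x$, $U_1=\ketbra{0}\otimes\one+\ketbra{1}\otimes\sigma_x$, with any $\beta_1$ satisfying $[\beta_1,\sigma_x]\neq 0$ and $\alpha_1$ diagonal in the computational basis: one checks in a few lines that $U_1(\alpha_1\otimes\beta_1)U_1^\dagger$ is separable but not product (so the tuple is not generalized thermal), that $\hat\chn_{1,\alpha_1}$ is the dephasing channel, and that this equals $\TrB[U_1^\dagger(\bullet\otimes\beta_1')U_1]$ for any $\beta_1'$ with vanishing $\sigma_x$ component. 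Note that $U_1$ itself \emph{is} a generalized thermal unitary (it meets the criterion of Theorem~\ref{thm:all-GG}); only the specific tuple $(U_1,\alpha_1,\beta_1)$ fails to be. Your insistence on a $U$ with $L_0\cap U^\dagger L_0 U=\{0\}$---i.e.\ not generalized thermal for \emph{any} Hamiltonians---is therefore stronger than the theorem requires and makes your search harder, not easier.
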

\begin{proof}
We prove this with two counterexamples.

For the first example, we look at the two-qubit channel $\chn_1[\bullet] := \TrB[U_1(\bullet\otimes\beta_1)U_1^\dag]$ with $U_1 = \ketbra{0} \otimes \one + \ketbra{1}\otimes\sigma_x$ and $\comm{\beta_1}{\sigma_x} \neq 0$.

Given a prior $\alpha_1 = a_0\ketbra{0} + (1-a_0)\ketbra{1}$, it can be verified that $\chn_1[\alpha_1] = \alpha_1$, while $U_1(\alpha_1\otimes\beta_1)U_1^\dag = a_0\ketbra{0} \otimes \beta_1 + (1-a_0) \ketbra{1} \otimes (\sigma_x\beta_1\sigma_x) \neq \alpha_1\otimes\beta_1$. Though separable, this is not a product state, so $(U_1,\alpha_1,\beta_1)$ is not generalized thermal. 

The retrodiction of $\chn_1$ with respect to $\alpha_1$ can be found to be $\hat{\chn}_{1,\alpha_1}[\bullet] = \sum_k \ketbra{k}\bullet\ketbra{k}$. Meanwhile, by setting $\beta_1' = (\one + b_y'\sigma_y + b_z'\sigma_z)/2$, we have $\overline{\chn}_1[\bullet] := \TrB[U_1^\dag(\bullet\otimes\beta_1')U_1] = \hat{\chn}_{1,\alpha_1}[\bullet]$. Therefore, $\TR(\alpha_1,\beta_1'|U_1,\beta_1)$. 

The second example is $\chn_2[\bullet] := \TrB[U_2(\bullet\otimes\beta_2)U_2^\dag]$ with
\begin{equation}
\begin{aligned}
    U_2 &= \spmqty{
         0 & 0 & 0 & 0 & 0 & 1 & 0 & 0 \\
         0 & 1 & 0 & 0 & 0 & 0 & 0 & 0 \\
         0 & 0 & 1 & 0 & 0 & 0 & 0 & 0 \\
         0 & 0 & 0 & 0 & 0 & 0 & 1 & 0 \\
         0 & 0 & 0 & 1 & 0 & 0 & 0 & 0 \\
         0 & 0 & 0 & 0 & 0 & 0 & 0 & 1 \\
         0 & 0 & 0 & 0 & 1 & 0 & 0 & 0 \\
         1 & 0 & 0 & 0 & 0 & 0 & 0 & 0
    }, \\
    \beta_2 &= \frac{1}{217}\left(\small
        \begin{array}{llll}
         70 & 84 e^{\frac{i \pi }{3}} & 35 & 30 e^{-\frac{2 i \pi }{3}} \\
         84 e^{-\frac{i \pi }{3}} & 112 & 42 e^{-\frac{i \pi }{3}} & -35 \\
         35 & 42 e^{\frac{i \pi }{3}} & 21 & 15 e^{-\frac{2 i \pi }{3}} \\
         30 e^{\frac{2 i \pi }{3}} & -35 & 15 e^{\frac{2 i \pi }{3}} & 14 \\
        \end{array}
    \right).
\end{aligned}
\end{equation}
The retrodiction channel $\hat{\mathcal{E}}_{2,\alpha_2}$ with respect to the prior $\alpha_2 = \one/2$ is the depolarizing channel $\hat{\mathcal{E}}_{2,\alpha_2}[\bullet] = \text{Tr}[\bullet]\one/2$. At the same time, $\overline{\mathcal{E}}_2[\bullet] := \TrB[U^\dag (\bullet\otimes\beta_2')U]$ with $\beta_2' = \one/2$ is also the depolarizing channel, so $\TR(\alpha_2,\beta_2'|U_2,\beta_2)$.

Meanwhile, the log negativity of $\Omega_2 := U_2(\alpha_2\otimes\beta_2)U_2^\dag$ is found to be $\log_2\tr\|\Omega_2^{T_B}\|_1 \approx 0.9135$, where $\bullet^{T_B}$ is a partial transpose on the ancilla and $\|\bullet\|_1$ is the trace norm. Since log negativity is an entanglement monotone \cite{log-negativity}, and considering that the maximally-entangled pure state has a log negativity of 1, $\Omega_2$ is in fact very highly entangled. So, $(U_2,\alpha_2,\beta_2)$ is certainly not a product-preserving tuple, and hence is not generalized thermal either.

These examples demonstrate that channels can be tabletop time-reversible without being generalized thermal with respect to certain priors. Note that the first example also holds for the classical case by setting $b_y' = 0$, while the second example is inherently quantum due to the presence of entanglement.
\end{proof}

\begin{center}
\begin{figure}[b]
    \includegraphics[width=.9\columnwidth]{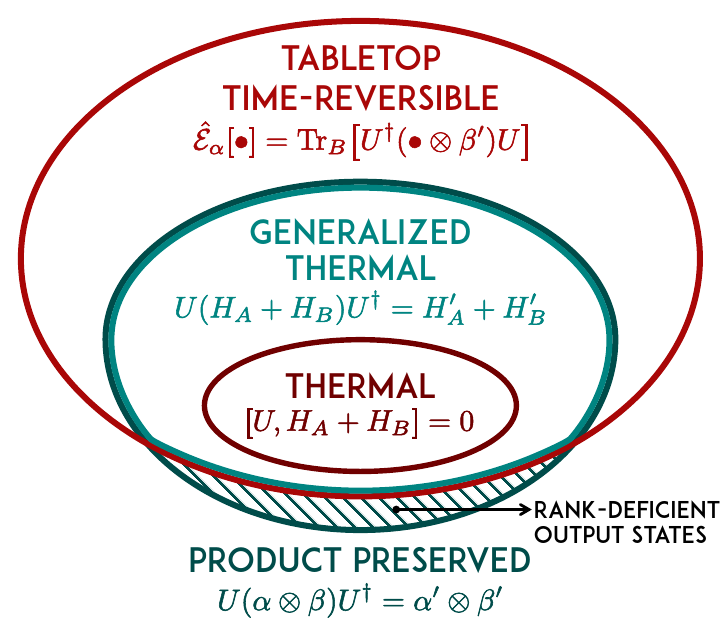}
    \caption{Schematic of the order relationships between the tuple sets we have introduced. Note that the non-tabletop time-reversible but product-preserved tuples are those for which a Bayesian inversion is not well-defined in general due to the presence of rank-deficient output states: although it is still possible to define some retrodiction channel in this case, it will depend on the chosen convention, and the different approaches are inconsistent with each other (Section~\ref{sec:cannotanyhowpipo}, Appendix~\ref{apd:ill-defined-how}).}
    \label{fig:venn1}
\end{figure}
\end{center}

\subsection{Results for Qubit Channels with Two-Qubit Dilations}
\label{ss:qubits}

We shall now focus our attention to one-qubit generalized thermal channels with two-qubit dilations, and product-preserving unitaries acting on two qubits. It is known that every two-qubit unitary permits the Cartan decomposition \cite{two-qubit-decomposition}
\begin{equation}\label{eq:standard-two-qubit-unitary-decomposition}
    U = (u_A\otimes u_B)\exp\left(i\sum_{k=1}^3 t_k \sigma_k\otimes\sigma_k\right)(v_A^\dag \otimes v_B^\dag),
\end{equation}
where $\{\sigma_k\}_{k=1}^3$ are the usual Pauli operators, while $v_A$, $v_B$, $u_A$, and $u_B$ are single-qubit unitaries. Hence, every $U$ would be specified by these local unitaries and angles $\{t_k\}_{k=1}^3$. With reference to this parametrization, we fully characterize all two-qubit generalized thermal unitaries:

\begin{theorem}\label{thm:all-GG}
A two-qubit unitary $U$, parameterized as Eq.~\eqref{eq:standard-two-qubit-unitary-decomposition}, is generalized thermal if and only if $(t_j-t_k)\bmod(\pi/2)=0$ or $(t_j+t_k)\bmod(\pi/2)=0$ for some $j \neq k$.
\end{theorem}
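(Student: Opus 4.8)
The plan is to reduce the generalized thermal condition to a statement about the Cartan ``interaction part'' $W=\exp(i\sum_k t_k\,\sigma_k\otimes\sigma_k)$ only, and then decide when $W$ can conjugate a nontrivial local Hamiltonian $H_A\otimes\one+\one\otimes H_B$ into another local Hamiltonian $H_A'\otimes\one+\one\otimes H_B'$. First I would note that the outer local unitaries $u_A\otimes u_B$ and $v_A\otimes v_B$ in \eqref{eq:standard-two-qubit-unitary-decomposition} are irrelevant: conjugating a local Hamiltonian by a product unitary yields another local Hamiltonian, so $U$ is generalized thermal iff $W$ is. Thus the problem becomes: for which $(t_1,t_2,t_3)$ does there exist a local (non-identity-proportional) Hermitian $G=H_A\otimes\one+\one\otimes H_B$ with $WGW^\dag$ again local?

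Next I would work in the Pauli basis. Write $G=\sum_j a_j\,\sigma_j\otimes\one+\sum_j b_j\,\one\otimes\sigma_j$ (traceless parts suffice, since the identity component is invariant). The key computational fact is that $W$ acts on the 15-dimensional space of traceless two-qubit operators by a very structured linear map: because $\sigma_k\otimes\sigma_k$ commutes with itself and anticommutes/commutes in a controlled way with the single-Pauli terms, conjugation by $e^{i t_k \sigma_k\otimes\sigma_k}$ mixes $\sigma_j\otimes\one$ with $\one\otimes\sigma_{j'}$ and with two-body terms $\sigma_m\otimes\sigma_n$ in small invariant blocks indexed by the Pauli label. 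Concretely, for $j\neq k$, conjugating $\sigma_j\otimes\one$ by $e^{it_k\sigma_k\otimes\sigma_k}$ produces $\cos(2t_k)\,\sigma_j\otimes\one - \sin(2t_k)\,(\sigma_k\sigma_j)\otimes\sigma_k$, i.e.\ a rotation into a specific two-body term, and similarly for $\one\otimes\sigma_j$. I would assemble these into the full action of $W$ (all three factors, which commute) and demand that the image of $G$ has vanishing two-body components. This yields a homogeneous linear system in $(a_j,b_j)$ whose coefficients are trigonometric functions of the $t_k$; the theorem's condition $(t_j\pm t_k)\bmod(\pi/2)=0$ should be exactly the vanishing of the relevant determinant(s), i.e.\ the condition under which this system has a nontrivial solution with $G$ not proportional to the identity.

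The cleanest route to extract that determinant is probably to exploit the $\mathbb{Z}_2\times\mathbb{Z}_2$ structure: the nine two-body operators $\sigma_m\otimes\sigma_n$ split according to whether $m,n$ match a given interaction label, and the relevant constraints decouple into three $2\times2$-ish conditions, each tying together a pair of interaction angles $(t_j,t_k)$ with $j\neq k$. For each such pair, the requirement that some nonzero local term survives conjugation without generating two-body garbage reduces to a condition like $\sin(t_j-t_k)\sin(t_j+t_k)=0$ (up to relabelling and factors of $2$), which is precisely $(t_j-t_k)\bmod(\pi/2)=0$ or $(t_j+t_k)\bmod(\pi/2)=0$. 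I would verify both directions: (i) if the angle condition holds for some $j\neq k$, exhibit an explicit local $G$ (built from $\sigma_j$ and $\sigma_k$ terms) that $W$ keeps local, hence $U$ is generalized thermal; (ii) if it fails for all pairs, show the only local operators mapped to local operators are multiples of the identity, ruling out the generalized thermal property.

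The main obstacle will be the bookkeeping in step two: organizing the conjugation action of the three commuting factors $e^{it_k\sigma_k\otimes\sigma_k}$ on the full operator space cleanly enough that the determinant condition factors into the stated pairwise form, rather than collapsing into an opaque trigonometric identity. I expect the degenerate and boundary cases—$t_k\in\{0,\pi/4,\pi/2,\dots\}$, where extra operators become invariant (e.g.\ $W$ reduces to a Clifford-type gate like CNOT or SWAP up to local unitaries)—will need separate care, since there the surviving-local-operator space can jump in dimension and one must check the ``non-identity'' clause of the generalized-thermal definition is genuinely met. A useful sanity check along the way is the standard thermal-operation case, where $[U,H_A\otimes\one+\one\otimes H_B]=0$; this should appear as a special sub-locus (e.g.\ all $t_k$ equal, giving $t_j-t_k\equiv0$), consistent with the theorem.
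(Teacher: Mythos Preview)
Your proposal is correct and follows essentially the same approach as the paper: strip the outer local unitaries, compute the conjugation action of the Cartan core $W$ on single-body Pauli terms, set the resulting two-body components to zero, and observe that the homogeneous linear system for $(a_j,b_j)$ has a nontrivial solution precisely when $\tan^2(2t_k)=\tan^2(2t_l)$ for some pair, i.e.\ $(t_k\pm t_l)\bmod(\pi/2)=0$, with the boundary cases $t_k\bmod(\pi/4)=0$ handled separately. One small indexing correction to keep in mind during the computation: when the condition on the pair $(t_k,t_l)$ is satisfied, the surviving local Hamiltonian lies along $\sigma_j$ (the \emph{third} index), not along $\sigma_k$ or $\sigma_l$ as you tentatively wrote.
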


\begin{proof}
    For a given two-qubit unitary $U$, we prove this by characterizing every possible pair of Hamiltonians $H_A$ and $H_B$ such that $(U,H_A,H_B)$ is a generalized thermal tuple. The proof by direct inspection is done in Appendix \ref{app:lemmas2q}, divided in three lemmas: the main one covers all the $U$ such that $t_k \bmod (\pi/4) = 0$ for at most one $t_k$; the other two settle the remaining special cases.
\end{proof}

Due to Theorem \ref{thm:FR1}, the above also fully characterizes all two-qubit unitaries that are product-preserving \textit{with respect to full rank states}. There are in fact many more product-preserving unitaries for two-qubits. Indeed, \textit{every} two-qubit unitary is not just product-preserving, but product-preserving with respect to every pure ancilla:

\begin{theorem}\label{thm:all-2q-U-PIPO}
For every two-qubit $U$ and ancilla $\ket{\beta}$, there exists an $\ket{\alpha}$ such that $(U,\ket{\alpha},\ket{\beta})$ is a product-preserving tuple.
\end{theorem}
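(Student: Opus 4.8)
The plan is to reduce the claim to a statement about a single real function on the Bloch sphere and then apply an intermediate-value / topological argument. Fix the two-qubit unitary $U$ and the pure ancilla $\ket{\beta}$. For each pure system state $\ket{\alpha}$, the global output $U(\ket{\alpha}\otimes\ket{\beta})$ is a pure state of two qubits, and it is a product state if and only if its reduced density matrix on $A$ (equivalently on $B$) is pure, i.e.\ has purity $1$, or equivalently has vanishing concurrence/entanglement entropy. So define $f:\ket{\alpha}\mapsto C\big(U(\ket{\alpha}\otimes\ket{\beta})\big)$, the concurrence of the output; $f$ is a continuous, nonnegative real-valued function on the set of pure states of $A$, which is topologically a $2$-sphere $S^2$. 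The theorem is exactly the statement that $f$ has a zero.

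First I would note that if $U$ is a local unitary $u_A\otimes u_B$ (or, more generally, if the Cartan angles $t_k$ vanish), the claim is trivial: any $\ket{\alpha}$ works since $U$ maps products to products. More importantly, by the Cartan decomposition \eqref{eq:standard-two-qubit-unitary-decomposition} it suffices to prove the statement for the ``core'' unitary $U_t=\exp\!\big(i\sum_k t_k\,\sigma_k\otimes\sigma_k\big)$: absorbing $u_A\otimes u_B$ on the left changes $U(\ket\alpha\otimes\ket\beta)$ by a local unitary and hence does not affect its concurrence, while absorbing $v_A^\dag\otimes v_B^\dag$ on the right is the substitution $\ket\alpha\mapsto v_A\ket\alpha$, $\ket\beta\mapsto v_B\ket\beta$, which just reparametrizes the sphere of inputs and replaces $\ket\beta$ by another fixed pure state. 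So without loss of generality $U=U_t$ and $\ket\beta$ is arbitrary but fixed.

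The substantive step is to show $f$ vanishes somewhere on $S^2$. I see two routes. The cleanest is a degree/parity argument: consider the map $g:\ket\alpha\mapsto \rho_A\big(U_t(\ket\alpha\otimes\ket\beta)\big)$ from the input Bloch sphere into the Bloch ball of $A$; $f(\ket\alpha)=0$ precisely when $g(\ket\alpha)$ lands on the boundary sphere. One shows $g$ is continuous and, when restricted to the preimage of the boundary, has a well-defined winding/degree that forces the boundary set to be nonempty — concretely, if $g$ never hit the boundary, its image would lie in the open ball, which is contractible, and one derives a contradiction by computing $g$ on a distinguished great circle (e.g.\ the eigenstates of $\sigma_3$ on $A$), where $U_t$ acts in a controlled way and the output purity is explicitly computable, showing $g$ traces a loop that cannot be contracted inside the open ball without passing through the boundary. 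The alternative, more hands-on route is to pick a convenient basis and write the output state's Schmidt coefficients as explicit trigonometric functions of the Bloch angles $(\theta,\phi)$ of $\ket\alpha$; one then exhibits two inputs $\ket{\alpha_\pm}$ whose outputs are ``biased'' toward opposite Schmidt vectors (so that the difference of the two Schmidt weights changes sign), and invokes the intermediate value theorem along a path connecting them to land on equal Schmidt weights — but equal Schmidt weights is maximal entanglement, not zero, so this naive IVT is not enough and must be upgraded: instead one tracks the vector of Schmidt/Bloch data and argues it must cross the pure locus.

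The main obstacle is exactly this last point: continuity alone gives an intermediate value for scalar quantities, but ``product'' is the statement that a vector-valued quantity (the Bloch vector of $\rho_A$) attains its extreme norm, which is a codimension-issue, not a sign-change. The right tool is therefore topological rather than order-theoretic: I expect the clean proof to go through the observation that the set of pure $\ket\alpha$ with $U_t(\ket\alpha\otimes\ket\beta)$ product is the zero locus of the section $\ket\alpha\mapsto$ (antisymmetric part / determinant of the $2\times2$ coefficient matrix of the output in a product basis), a single complex equation on $S^2$; a single complex equation on a two-real-dimensional sphere generically has solutions, and one makes this rigorous by a degree count, checking non-degeneracy (or perturbing $U_t$ slightly and taking a limit) so that the count is nonzero. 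Handling the degenerate $U_t$ (coincidences or vanishing among the $t_k$) separately, as in Theorem~\ref{thm:all-GG}'s proof strategy, cleans up the edge cases.
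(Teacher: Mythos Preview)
Your final approach --- the product condition as vanishing of the determinant of the $2\times2$ coefficient matrix of the output --- is the right one, and it is exactly what the paper uses, but the paper makes it elementary where you reach for topology. After the Cartan reduction you already did, parametrize $\ket\alpha$ by a single complex number via $v_A^\dag\ket\alpha \propto (1,x)^T$. Each component $f_{jk}(x)$ of $(u_A^\dag\otimes u_B^\dag)U(\ket\alpha\otimes\ket\beta)$ is then \emph{linear} in $x$, so the determinant condition $f_{00}(x)f_{11}(x)=f_{01}(x)f_{10}(x)$ is a polynomial equation of degree at most $2$ in one complex variable. The fundamental theorem of algebra gives a root unless the equation degenerates to a nonzero constant; in that residual case one checks by hand that $\ket\alpha=v_A\ket{1}$ --- the point at infinity missed by the $x$-chart --- works. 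No degree count, no perturbation argument, and no separate treatment of special Cartan angles is needed. Your proposed degree argument could be made to work (the determinant is a homogeneous quadratic on $\mathbb{CP}^1$, hence a section of $\mathcal{O}(2)$, with two zeros unless identically zero), but it is strictly more machinery for the same conclusion.

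Your earlier routes, by contrast, have real gaps. The Bloch-ball argument as you state it is not a contradiction: the open ball is contractible \emph{and} simply connected, so every loop in it is contractible, and ``$g$ traces a loop that cannot be contracted inside the open ball without passing through the boundary'' is never true. To get an obstruction you would need a genuine degree computation of $g$ against the boundary $S^2$, which brings you back to the determinant picture anyway. The IVT route you already correctly discard, since purity $1$ is an extremal value of a scalar, not an intermediate one.
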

\begin{proof}
    Let $v_B^\dag\ket{\beta} \widehat{=} (b_0,b_1)^T$, $v_A^\dag\ket{\alpha} \widehat{=} (a_0,a_1)^T \propto (1,x)^T$, where $v_A$ and $v_B$ are the same as in Eq.~\eqref{eq:standard-two-qubit-unitary-decomposition}, and we have assumed for now that $a_0 \neq 0$. Then,
    \begin{equation}\label{eq:pure-PIPO-equation}
    \begin{aligned}
        (u_A^\dag \otimes u_B^\dag) U (\ket{\alpha}\otimes\ket{\beta}) %\\
        % &\qquad\widehat{=}         \pmqty{
        %     U_{00}&U_{01}&U_{02}&U_{03}\\
        %     U_{10}&U_{11}&U_{12}&U_{13}\\
        %     U_{20}&U_{21}&U_{22}&U_{23}\\
        %     U_{30}&U_{31}&U_{32}&U_{33}
        % }\bqty{\pmqty{1\\x}\otimes\pmqty{b_0\\b_1}}\\
        \;&{}\widehat{=}{} \pmqty{f_{00}(x)\\f_{01}(x)\\f_{10}(x)\\f_{11}(x)},\\
        (u_A^\dag\ket{\alpha'}) \otimes  (u_B^\dag\ket{\beta'}) 
        \;&{}\widehat{=}{} \pmqty{a_0'\\a_1'} \otimes \pmqty{b_0'\\b_1'},
    \end{aligned}
    \end{equation}
    where %$U_{jk}$ are matrix elements of $(u_A^\dag \otimes u_B^\dag) U$ and
    $f_{jk}(x)$ are linear functions of $x$. The condition that $U$ is product-preserving with respect to $\ket{\alpha}\otimes\ket{\beta}$ requires $(a_0'b_0')(a_1'b_1') = (a_0'b_1')(a_1'b_0')$, which implies
    \begin{equation}\label{eq:quadratic-equation}
        f_{00}(x)f_{11}(x) = f_{01}(x)f_{10}(x).
    \end{equation}
    Since this is at most a quadratic equation in $x$, a solution almost always exists for any choice of $U$, $b_0$, and $b_1$. If a solution for $x$ exists, then $U$ is product-preserving with respect to $\ket{\alpha} = v_A \pqty{\ket{0} + x\ket{1}}/\sqrt{1+\abs{x}^2}$ and the specified $\ket{\beta}$. A solution would not exist for Eq.~\eqref{eq:quadratic-equation} if it results in a contradiction of the form $c=0$ for a nonzero constant $c$. However, in those cases, it is shown in Appendix~\ref{apd:special-case-pure} that $U$ is product-preserving with respect to $\ket{\alpha} = v_A\ket{1}$ and $\ket{\beta}$.
\end{proof}
\begin{corollary}\label{cor:all-2q-U-PIPO}
Every two-qubit unitary is product-preserving with regard to some states.
\end{corollary}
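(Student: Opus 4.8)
The plan is to obtain this directly from Theorem \ref{thm:all-2q-U-PIPO}, which already does the heavy lifting. First I would fix any pure ancilla state $\ket{\beta}\in S(\mathcal{H}_B)$ and apply Theorem \ref{thm:all-2q-U-PIPO} to the given two-qubit unitary $U$, obtaining a pure $\ket{\alpha}$ such that $U(\ketbra{\alpha}\otimes\ketbra{\beta})U^\dag$ is of product form $\alpha'\otimes\beta'$.

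The only thing left to check is the non-triviality clause in the definition of a product-preserving unitary, namely that at least one of the input marginals is not maximally mixed. This is automatic: a pure qubit state has rank one, whereas the maximally mixed state $\one/2$ has rank two, so neither $\ketbra{\alpha}$ nor $\ketbra{\beta}$ is maximally mixed. Hence $(U,\ket{\alpha},\ket{\beta})$ is a product-preserving tuple in the sense of the definition, which is the claim. I do not expect any real obstacle here: the corollary is simply Theorem \ref{thm:all-2q-U-PIPO} with the universal quantifier over $\ket{\beta}$ dropped, so the proof is essentially one line once that theorem is in hand.
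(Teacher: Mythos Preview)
Your proposal is correct and matches the paper's approach: the corollary is stated as an immediate consequence of Theorem~\ref{thm:all-2q-U-PIPO}, and the paper does not give any further argument beyond noting that the result was already known from the literature on universal entanglers. Your explicit verification of the non-triviality clause is a harmless addition that the paper leaves implicit.
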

This corollary was already known in the context of universal entanglers, where it has been shown that there is no two-qubit unitary that takes \emph{every} product state to an entangled state \cite{UniversalEntangler}.

Finally, we present an characterisation of the generalized thermal two-qubit unitaries, an alternative to Theorem \ref{thm:all-GG}, in terms of the product states that they preserve:
\begin{theorem}\label{thm:GG-vs-PP}
    A two-qubit unitary $U$ is generalized thermal if and only if it is product-preserving with respect to two pure states $\ket{\alpha_+}\otimes\ket{\beta_+}$ and  $\ket{\alpha_-}\otimes\ket{\beta_-}$, such that $\langle{\alpha_+}|{\alpha_-}\rangle = \langle{\beta_+}|{\beta_-}\rangle = 0$.
\end{theorem}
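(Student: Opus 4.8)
The plan is to read both implications off the eigenstructure of ``one‑body'' (local‑sum) Hamiltonians on two qubits, using the elementary fact that a nondegenerate extremal eigenvector of $H_A\otimes\one+\one\otimes H_B$ is the product of the extremal eigenvectors of $H_A$ and $H_B$. \emph{For the ``only if'' direction}, assume $U$ is generalized thermal, $U(H_A\otimes\one+\one\otimes H_B)U^\dagger=H'_A\otimes\one+\one\otimes H'_B$, and (without loss of generality, by the $A\leftrightarrow B$ symmetry of the statement) that $H_B$ is not proportional to $\one$; write $H=H_A\otimes\one+\one\otimes H_B$. If $H_A$ is also not proportional to $\one$, then $H_A,H_B$ each have two distinct eigenvalues, so $H$ has a nondegenerate top eigenvalue with eigenvector $\ket{\alpha_+}\otimes\ket{\beta_+}$ and a nondegenerate bottom one with eigenvector $\ket{\alpha_-}\otimes\ket{\beta_-}$, where $\ket{\alpha_\pm}$ (resp.\ $\ket{\beta_\pm}$) are the eigenvectors of the qubit operator $H_A$ (resp.\ $H_B$), hence orthogonal. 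Then $U$ carries these to the extremal eigenvectors of the local‑sum operator $UHU^\dagger$, which by the fact above are product states; so $U$ is product‑preserving on $\ket{\alpha_+}\otimes\ket{\beta_+}$ and on $\ket{\alpha_-}\otimes\ket{\beta_-}$, with $\langle\alpha_+|\alpha_-\rangle=\langle\beta_+|\beta_-\rangle=0$. If instead $H_A$ is proportional to $\one$, the two eigenvalues of $H$ are doubly degenerate with eigenspaces $\mathbb{C}^2\otimes\ket{\beta_\pm}$; a short spectral argument (a local‑sum two‑qubit operator whose spectrum consists of two doubly‑degenerate values must have one tensor factor proportional to $\one$) shows the eigenspaces of $UHU^\dagger$ have the form $\mathbb{C}^2\otimes\ket{\beta'_\pm}$ or $\ket{\alpha'_\pm}\otimes\mathbb{C}^2$ and hence consist of product states, and choosing any orthonormal $\ket{\alpha_\pm}\in\mathbb{C}^2_A$ supplies the preserved pair. (Corollary~\ref{thm:local-energy-conservation} can be used to streamline the bookkeeping in both cases.)

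\emph{For the ``if'' direction}, suppose $U$ is product‑preserving on $\ket{\alpha_+}\otimes\ket{\beta_+}$ and $\ket{\alpha_-}\otimes\ket{\beta_-}$ with $\langle\alpha_+|\alpha_-\rangle=\langle\beta_+|\beta_-\rangle=0$, say $U(\ket{\alpha_\pm}\otimes\ket{\beta_\pm})=\ket{\alpha'_\pm}\otimes\ket{\beta'_\pm}$. Put $H_A=\ketbra{\alpha_+}{\alpha_+}-\ketbra{\alpha_-}{\alpha_-}$ and $H_B=\ketbra{\beta_+}{\beta_+}-\ketbra{\beta_-}{\beta_-}$ (both traceless with eigenvalues $\pm1$, hence not proportional to $\one$). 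Then $H=H_A\otimes\one+\one\otimes H_B$ has eigenvalue $+2$ on $\ket{\alpha_+}\otimes\ket{\beta_+}$, $-2$ on $\ket{\alpha_-}\otimes\ket{\beta_-}$, and $0$ on the span of $\ket{\alpha_+}\otimes\ket{\beta_-}$ and $\ket{\alpha_-}\otimes\ket{\beta_+}$. Conjugating by $U$, which maps the two one‑dimensional eigenspaces onto $\mathbb{C}(\ket{\alpha'_+}\otimes\ket{\beta'_+})$ and $\mathbb{C}(\ket{\alpha'_-}\otimes\ket{\beta'_-})$ and the $0$‑eigenspace onto their orthogonal complement, gives
\[
UHU^\dagger = 2\,\ketbra{\alpha'_+}{\alpha'_+}\otimes\ketbra{\beta'_+}{\beta'_+}\;-\;2\,\ketbra{\alpha'_-}{\alpha'_-}\otimes\ketbra{\beta'_-}{\beta'_-}.
\]
If $\langle\alpha'_+|\alpha'_-\rangle=\langle\beta'_+|\beta'_-\rangle=0$, then $\{\ket{\alpha'_\pm}\}$ and $\{\ket{\beta'_\pm}\}$ are orthonormal bases, the right‑hand side is $\mathrm{diag}(2,0,0,-2)$ in the product basis $\{\ket{\alpha'_s}\otimes\ket{\beta'_t}\}$, and $\mathrm{diag}(2,0,0,-2)=Z\otimes\one+\one\otimes Z$ exhibits $UHU^\dagger$ as a local‑sum Hamiltonian with nontrivial pieces; hence $U$ is generalized thermal.

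The crux — and the step I expect to be hardest — is exactly this last hypothesis on the images. Unitarity only yields $\langle\alpha'_+|\alpha'_-\rangle\langle\beta'_+|\beta'_-\rangle=\langle\alpha_+|\alpha_-\rangle\langle\beta_+|\beta_-\rangle=0$, i.e.\ that at least one of the two output inner products vanishes; if only one does, the displayed operator acquires a genuine two‑body term ($\propto(\hat a\cdot\vec\sigma)\otimes(\hat b+\hat d)\cdot\vec\sigma$ in Bloch form) and is not of local‑sum form, so this sub‑case must be ruled out. I would attempt this by passing to the Cartan form~\eqref{eq:standard-two-qubit-unitary-decomposition}, writing the preserved‑product‑state condition as a constraint on the angles $\{t_k\}$, and matching it against the angle criterion of Theorem~\ref{thm:all-GG}; the fallback is to carry the orthogonality of the images $\langle\alpha'_+|\alpha'_-\rangle=\langle\beta'_+|\beta'_-\rangle=0$ as an explicit part of the hypothesis, which is precisely what the ``only if'' direction produces in the generic case where both $H_A$ and $H_B$ are nontrivial.
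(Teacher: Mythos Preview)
Your approach mirrors the paper's closely. For the ``only if'' direction (generalized thermal $\Rightarrow$ product-preserving) you read off the extremal eigenvectors of $H_A\otimes\one+\one\otimes H_B$; the paper does the same thing via the zero-temperature limit $\lim_{c\to\infty}\tau_{\pm c}(H_A)\otimes\tau_{\pm c}(H_B)$, which is the identical idea in thermodynamic dress. Your treatment of the degenerate sub-case ($H_A\propto\one$) through the eigenspace structure of local-sum operators is in fact more self-contained than the paper's, which defers to the two-qubit characterisation lemmas in the appendix. For the ``if'' direction (product-preserving $\Rightarrow$ generalized thermal), both you and the paper define $H_A,H_B$ from the given orthonormal pairs and then verify that $U(H_A\otimes\one+\one\otimes H_B)U^\dagger$ is again of local-sum form---you by direct spectral computation, the paper by writing out the $4\times 4$ matrix of $(u_A^\dagger\otimes u_B^\dagger)U(v_A\otimes v_B)$ and checking by hand.

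The gap you flag is genuine, and the paper's proof shares it. When the paper writes $\ket{\alpha'_\pm}\otimes\ket{\beta'_\pm}=:(u_A\otimes u_B)(\ket{\pm}\otimes\ket{\pm})$ with $u_A,u_B$ unitary, it is silently assuming $\langle\alpha'_+|\alpha'_-\rangle=\langle\beta'_+|\beta'_-\rangle=0$, whereas unitarity of $U$ only forces the product $\langle\alpha'_+|\alpha'_-\rangle\langle\beta'_+|\beta'_-\rangle$ to vanish. The CNOT gate shows the stronger assumption can fail: with inputs $\ket{00}$ and $\ket{11}$ the outputs are $\ket{00}$ and $\ket{10}$, so $\beta'_+=\beta'_-=\ket{0}$; yet CNOT \emph{is} generalized thermal (its Cartan coordinates satisfy $t_2=t_3=0$), so the theorem survives but this sub-case is covered by neither argument as written. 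Your proposed route through the Cartan decomposition and Theorem~\ref{thm:all-GG} is indeed the natural way to close the gap, though it remains to be carried out.
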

\begin{proof}
    The ``if'' direction: We shall first consider the case where $H_A$ and $H_B$ are nondegenerate. Let $\ket{\alpha_\pm}$ and $\ket{\beta_\pm}$ be the eigenstates of $H_A$ and $H_B$ respectively, with the ground states labelled $\ket{\alpha_-}$ and $\ket{\beta_-}$. Taking the low temperature limit $\lim_{c\to\infty} U\left[\tau_{\pm c}(H_A)\otimes\tau_{\pm c}(H_B)\right]U^\dag = \lim_{c\to\infty}\tau_{\pm c}(H_A') \otimes \tau_{\pm c}(H_B')$, we have
    \begin{equation}
        U\left[\ketbra{\alpha_\mp} \otimes \ketbra{\beta_\mp}\right]U^\dag = \ketbra{\alpha'_\mp} \otimes \ketbra{\beta'_\mp},
    \end{equation}
    so $U$ is product-preserving with respect to $\ket{\alpha_\pm}\otimes\ket{\beta_\pm}$.

    The ``only if'' direction: Let $\ket{\alpha_\pm}\otimes\ket{\beta_\pm} =: (v_A \otimes v_B )(\ket{\pm}\otimes\ket{\pm})$ and $\ket{\alpha'_\pm}\otimes\ket{\beta'_\pm} =: (u_A\otimes u_B)(\ket{\pm}\otimes\ket{\pm})$. Then,
    \begin{equation}
    \begin{aligned}
    &(u_A^\dag\otimes u_B^\dag)U(v_A\otimes v_B) \\
    &\quad{}\widehat{=}{} \left(\begin{matrix}
    1 & 0 & 0 & 0\\
    0 & e^{i(a+\phi)}\cos(\frac{\theta}{2}) & e^{-i(b-\phi)}\sin(\frac{\theta}{2}) & 0 \\
    0 & -e^{i(b+\phi)}\sin(\frac{\theta}{2}) & e^{-i(a-\phi)}\cos(\frac{\theta}{2}) & 0 \\
    0 & 0 & 0 & 1
    \end{matrix}\right),
    \end{aligned}
    \end{equation}
    where the top-left and bottom-right entries originate from the product-preserving condition, while the rest of the entries are parameterised to impose unitarity. Define $H_A := \omega_A\ketbra{\alpha_+} - \omega_A\ketbra{\alpha_-}$ and $H_B := \omega_B\ketbra{\beta_+} - \omega_B\ketbra{\beta_-}$. From direct computation, $U$ will be found to be generalized thermal with respect to $H_A$ and $H_B$ for any $\omega_A$ and $\omega_B$ if $\theta = 2n\pi$ for some integer $n$, and for $\omega_A = \omega_B$ if $\theta \neq 2n\pi$.

    To handle the degenerate case, we use the full characterization of two-qubit generalized thermal unitaries from Appendix~\ref{app:lemmas2q}. Specifically, lemmas \ref{thm:general-GG-special-1} \& \ref{thm:general-GG-special-2} states that $U$ is generalized thermal with respect to $H_A\not\propto\one$ and $H_B \propto \one$ if and only if it is generalized thermal with respect to $H_A\not\propto\one$ and $\bar{H}_B:= v_B v_A^\dag H_A v_A v_B^\dag$. Since $\bar{H}_B$ shares the same spectrum as $H_A$ and is therefore nondegenerate, the rest of the proof follows as stated above.
\end{proof}

If we have a situation that only \textit{one} of the states in the product-preserved tuple is pure, then we can also conclude that the unitary is generalized thermal:
\begin{theorem}\label{thm:puremixed}
    For a two-qubit $U$ and full-rank $\alpha$, if $(U,\alpha,\ketbra{\beta})$ is a product-preserving tuple, then $U$ is generalized thermal.
\end{theorem}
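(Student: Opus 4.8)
The plan is to reduce the statement to a rank count on support projectors. Write the product-preserving condition \eqref{eq:pipo-definition} as $U(\alpha\otimes\ketbra{\beta})U^\dagger=\alpha'\otimes\beta'$ for some qubit states $\alpha',\beta'$. First I would pass from the density operators to their supports. Since $\alpha$ is full rank, $\operatorname{supp}(\alpha\otimes\ketbra{\beta})=\mathcal{H}_A\otimes\operatorname{span}\{\ket{\beta}\}$, whose orthogonal projector is $\one_A\otimes\ketbra{\beta}$; and $\operatorname{supp}(\alpha'\otimes\beta')=\operatorname{supp}(\alpha')\otimes\operatorname{supp}(\beta')$, whose projector is $\Pi_{\alpha'}\otimes\Pi_{\beta'}$ with $\Pi_{\alpha'},\Pi_{\beta'}$ the respective support projectors. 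Support projectors transform covariantly under unitary conjugation, so the product-preserving identity forces $U(\one_A\otimes\ketbra{\beta})U^\dagger=\Pi_{\alpha'}\otimes\Pi_{\beta'}$.

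Next I would count ranks. The left side has rank $d_A=2$, so $\operatorname{rank}(\Pi_{\alpha'})\cdot\operatorname{rank}(\Pi_{\beta'})=2$; since each factor is a qubit support projector, of rank $1$ or $2$, the only possibilities are $\Pi_{\alpha'}=\one_A$ with $\beta'=\ketbra{\beta'}$ pure, or $\alpha'=\ketbra{\alpha'}$ pure with $\Pi_{\beta'}=\one_B$. This dichotomy is exactly where the qubit hypothesis is used: in higher dimension the rank product could split as $2\times2$, $2\times3,\dots$, and the argument would not close. One could alternatively obtain the same dichotomy from Theorem~\ref{thm:local-spectra-preservation}, noting that the local spectra are preserved up to a swap, and the spectrum of $\ketbra{\beta}$ is $\{1,0\}$, so exactly one of $\alpha',\beta'$ must be pure.

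Finally I would read off the Hamiltonians. In the first case $U(\one_A\otimes\ketbra{\beta})U^\dagger=\one_A\otimes\ketbra{\beta'}$, which is the generalized-thermal identity \eqref{eq:gg-definition} with $H_A=H_A'=0$, $H_B=\ketbra{\beta}$, $H_B'=\ketbra{\beta'}$; in the second case it holds with $H_A=0$, $H_B=\ketbra{\beta}$, $H_A'=\ketbra{\alpha'}$, $H_B'=0$. In both cases $H_B=\ketbra{\beta}$ is not proportional to the identity, so $U$ is a generalized thermal unitary. I do not expect a genuine obstacle here; the only point needing care is cosmetic, namely checking that this choice of Hamiltonians is admissible: the definition only requires one of the \emph{input} Hamiltonians to be non-scalar, and that role is played by the rank-one projector $\ketbra{\beta}$, so it is harmless that $H_A$ (and one output Hamiltonian) vanishes. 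It is also worth stressing why we cannot simply invoke Theorem~\ref{thm:FR1}: that theorem concerns product preservation with respect to full-rank states, whereas $\ketbra{\beta}$ is rank-deficient—the content of Theorem~\ref{thm:puremixed} is precisely that a single full-rank-compatible product pair, together with the qubit dimension, already forces the generalized thermal structure.
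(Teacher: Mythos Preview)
Your proof is correct and takes a genuinely different, more direct route than the paper's. The paper first invokes Theorem~\ref{thm:local-spectra-preservation} to conclude that $\alpha'$ is full rank and $\beta'$ is pure (up to a swap), then treats the case $\alpha=\one/2$ separately, and finally reduces the general full-rank $\alpha$ to that case by the limit argument $\alpha^n/\Tr(\alpha^n)\to\ketbra{\alpha_+}$, eventually showing that $U$ preserves the product $\one\otimes\ketbra{\beta_\pm}$ and hence $\one\otimes\tau_\kappa(\ketbra{\beta_+}-\ketbra{\beta_-})$ for every $\kappa$. You bypass all of this: passing straight to support projectors collapses the hypothesis to $U(\one_A\otimes\ketbra{\beta})U^\dagger=\Pi_{\alpha'}\otimes\Pi_{\beta'}$ in one step, and the qubit rank count immediately yields the generalized-thermal identity \eqref{eq:gg-definition} with explicit Hamiltonians.

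What each approach buys: yours is shorter and uses nothing beyond the covariance of supports under unitary conjugation; in particular it avoids the power/limit trick and the separate $\alpha=\one/2$ case. The paper's longer argument, however, extracts a strictly stronger intermediate conclusion---namely that $U$ preserves the individual pure products $\ket{\alpha_\pm}\otimes\ket{\beta}$, and therefore the full-rank products $\one\otimes\tau_\kappa(\ketbra{\beta_+}-\ketbra{\beta_-})$ for \emph{all} $\kappa$. That refinement is what the paper uses in the paragraph immediately after the theorem to discuss the failure of the converse. Your support-projector identity $U(\one\otimes\ketbra{\beta})U^\dagger=\one\otimes\ketbra{\beta'}$ (in the non-swap case) actually suffices to recover this too, via the complement $U(\one\otimes\ketbra{\beta_\perp})U^\dagger=\one\otimes\ketbra{\beta'_\perp}$ and linear combination, should you wish to match that extra content.
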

\begin{proof}
    Up to a swap, $U$ is product-preserving with respect to a full-rank $\alpha$ and pure $\ketbra{\beta}$ if
    \begin{equation}
        U\left(\alpha\otimes\ketbra{\beta}\right)U^\dag = \alpha'\otimes\ketbra{\beta'}, 
    \end{equation}
    \newline
    where we have used Theorem~\ref{thm:local-spectra-preservation} to conclude that $\alpha'$ is full-rank and $\ketbra{\beta'}$ is pure. Consider first the special case of $\alpha = \one/2$. Let $\ket{\beta_+} := \ket{\beta}$ and $\ket{\beta_-}$ to be its orthogonal state. Then,
    \begin{equation}
    \begin{aligned}
        U\pqty{\one\otimes\ketbra{\beta_-}{\beta_-}}U^\dag &= 
        \one - U\pqty{\one\otimes\ketbra{\beta_+}{\beta_+}}U^\dag \\
        &= \one\otimes\ketbra{\beta'_-}.
    \end{aligned}
    \end{equation}
    Hence, $\beta = \tau_\kappa \pqty{\ketbra{\beta_+} - \ketbra{\beta_-}}$ for any $\kappa$ will satisfy $U(\alpha\otimes\beta)U^\dag = \alpha'\otimes\beta'$, so $U$ is generalized thermal.
    
    For $\alpha \neq \one/2$, let $\alpha = p_+\ketbra{\alpha_+} + p_-\ketbra{\alpha_-}$ with $p_+ > p_-$. It is clear that
    \begin{equation}
    \begin{aligned}
        U\pqty{\alpha^n\otimes\ketbra{\beta_+}}U^\dag &= \bqty{U\pqty{\alpha\otimes\ketbra{\beta_+}}U^\dag}^n \\
        &= \alpha^{\prime n}\otimes\ketbra{\beta_+}.
    \end{aligned}
    \end{equation}
    Taking the limit $n\to\infty$ with $\alpha^{ n}/\Tr(\alpha^{n})$ leads to $U\pqty{\ket{\alpha_+}\otimes\ket{\beta_+}} = \ket{\alpha_+'}\otimes\ket{\beta_+'}$. Then, with $\ketbra{\alpha_-} \propto \alpha - p_+\ketbra{\alpha_+}$, where the product form of both terms on the right are preserved by $U$ with respect to the same ancilla state $\ketbra{\beta_+}$, we also have $U\pqty{\ket{\alpha_-}\otimes\ket{\beta_+}} = \ket{\alpha_-'}\otimes\ket{\beta_+'}$.

    Therefore, $U$ is product-preserving with respect to $ \ketbra{\alpha_+} + \ketbra{\alpha_-} = \one$ and $\ketbra{\beta}$. From the first part of the proof, this implies that $U$ is generalized thermal.
\end{proof}
Note that the converse does not hold. From the proof, for a $U$ to be product-preserving with respect to some input state of the form $\alpha\otimes\ketbra{\beta_+}$, it must necessarily be generalized thermal with respect to $\one\otimes\tau_\kappa(\ketbra{\beta_+} - \ketbra{\beta_-})$ for all $\kappa$. From the proofs in Appendix \ref{app:lemmas2q}, this is only possible when there exists $j,k\in\{1,2,3\}$ and $j\neq k$ such that either $t_j,t_k \bmod (\pi/2) = 0$ or $t_j,t_k \bmod (\pi/2) = \pi/4$. Therefore, a generalized thermal $U$ is in general not product-preserving with respect to some $\alpha \otimes \ketbra{\beta}$.

\section{Further observations}
\label{sec:further}

While the preceding sections put emphasis on the mathematical aspects of product-preserving and generalized thermal unitaries, we now consolidate notable physical insights that the results elucidate. We devote one subsection to each of the notions we introduced: generalized thermal operations, product-preservation, and tabletop reversibility. 

\subsection{On Generalized Thermal Maps}
\label{ss:thermal}

Generalized thermal operations are the largest class of operations that can be described by the following procedure \cite{thermodynamics-thesis}: a system (with free Hamiltonian $H_A$) and a bath (with free Hamiltonian $H_B$) are brought into contact, allowed to interact for some time with a unitary $U$, then are decoupled again. As a result, the final Hamiltonian must be without interaction. A scattering experiment, where two distinct collection of particles start far apart, come together to interact in a complicated way, and two (possibly different) collection of particles leave the interaction region, would be an example of such a process. That said, we do not make any assumptions about the interaction unitary $U$ apart from the fact that the initial and final Hamiltonians are decoupled. This procedure indeed describes our definition of generalized thermal operations [Eq.~\eqref{eq:gg-definition}]. 

Corollary~\ref{thm:local-energy-conservation} shows that every process described by the procedure above must obey a strong constraint: the \emph{local conservation of energy spectra}, up to a possible swap (if the system and the bath have the same dimension, we can always choose to redefine which is which).

The requirement [Eq.~\eqref{eq:thermal-hamiltonian-definition}] that the local Hamiltonians be unchanged, which is the standard definition of thermal operation \cite{thermodynamics-thesis,thermodynamics-review}, is strictly stronger. That being said, given a generalized thermal tuple $(U,H_A,H_B)$, up to a swap there exists local unitaries $v_A$ and $v_B$, such that $(U',H_A,H_B)$ is a thermal tuple, with $U'= (v_A \otimes v_B)U$. Indeed, using Corollary~\ref{thm:local-energy-conservation}:
\begin{align}
    &U \bqty{H_A\otimes\one + \one\otimes H_B} U^\dag =  u_A H_A u_A^\dag \otimes \one +\one \otimes u_B H_B u_B^\dag \nonumber \\
    &\underbrace{(u_A^\dag \otimes u_B^\dag) U}_{=:U'} \bqty{H_A\otimes\one + \one\otimes H_B} = \bqty{H_A\otimes\one + \one\otimes H_B} U^{\prime} \nonumber\\
    &\implies [U', H_A\otimes\one + \one\otimes H_B] = 0
\end{align} which is the claimed result with $v_X=u^\dagger_X$.

In Appendix \ref{app:lemmas2q}, we have characterized all possible generalized thermal tuples $(U,H_A,H_B)$ for two qubits; whence all generalized thermal maps that possess a two-qubit dilation can be inferred. This complements the recent characterization of all thermal maps for one qubit coupled to a bosonic bath \cite{bath-given-system} and similar studies in resource theoretic contexts \cite{hu2019thermal}. Such characterizations are necessary for one of the main goals of a resource theory, namely, to identify all states reachable from some initial state using only free operations \cite{future-thermal-cone}. 

%\onecolumngrid
%\begin{center}
\begin{table*}[]
\resizebox{\textwidth}{!}{%
\begin{tabular}{@{}lllll@{}}
\toprule
                                                & \multicolumn{4}{c}{For all $\mathbb{C}^d$ channels}                                                                                                                                                             \\ \midrule
\multirow{2}{*}{For any $\alpha, \beta$} & $U, H_A, H_B$ is a generalized thermal tuple to $H_A', H_B'$ & $\Rightarrow$     & $U, \alpha, \beta$ is product preserved to $\alpha',\beta'$    & Eq.\eqref{eq:generalized-themal-full-rank} \\
                                                & $U, \alpha, \beta$ is product preserved to $\alpha',\beta'$  & $\Rightarrow$     & $\chn(U,\beta)$ is tabletop reversible with regard to $\alpha$ & Eq.\eqref{pp-tr}                           \\ \midrule
For any full-rank $\alpha, \beta$        & $U, \alpha, \beta$ is product preserved to $\alpha',\beta'$  & $\Leftrightarrow$ & $U, H_A, H_B$ is a generalized thermal tuple to $H_A', H_B'$   & Thm.\ref{thm:FR1}                          \\ \bottomrule
\end{tabular}%
}
\caption{Summary of key results for general channels. For relevant results, $\alpha = \therm{H_A}, \beta = \therm{H_B}, \alpha' = \therm{H_A'}, \beta' = \therm{H_B'}$ and $\chn[\bullet] = \TrB{[U  (\bullet \otimes \beta) U^\dag]}$}
\label{tab:t1}
\end{table*}
\begin{table*}[]
\resizebox{\textwidth}{!}{%
\begin{tabular}{@{}lllll@{}}
\toprule
                                             & \multicolumn{4}{c}{For all two-qubits channels}                                                                                                                                                                                                                                                                                                                        \\ \midrule
Full-rank $\alpha, \beta$             & $U, \alpha, \beta$ is product preserved tuple to $\alpha',\beta'$                                          & $\Leftrightarrow$ & $U$ s.t. $t_j -t_k \bmod \frac{\pi}{2}$ or $t_j + t_k \bmod \frac{\pi}{2} =0$ under \eqref{eq:standard-two-qubit-unitary-decomposition} & Thm.\ref{thm:all-GG}        \\ \midrule
\multirow{2}{*}{Pure $\alpha, \beta$} & $\forall U, \ketbra{\beta}{\beta}$                                                                         & $\Rightarrow$     & $\exists \ketbra{\alpha}{\alpha}$ for which $U, \alpha,\beta$ is product preserved                                                                                                     & Thm.\ref{thm:all-2q-U-PIPO} \\
                                             & $(U,\ket{\alpha},\ket{\beta})$, $(U,\ket{\alpha_\perp},\ket{\beta_\perp})$ are product-preserved & $\Leftrightarrow$ & $\exists H_A, H_B$ for which $(U,H_A,H_B)$ is generalized thermal                                                                                                                      & Thm.\ref{thm:GG-vs-PP}      \\ \midrule
Full rank $\alpha$, pure $\beta$          & $U, \alpha, \ketbra{\beta}$ is product preserved tuple to $\alpha',\beta'$                                 & $\Rightarrow$     & $U, H_A, H_B$ is a generalized thermal tuple to $H_A', H_B'$                                                                                                                           & Thm.\ref{thm:puremixed}     \\ \bottomrule
\end{tabular}%
}
    \caption{Summary of key results for channels with two-qubit dilations}
    \label{tab:t2}
\end{table*}

\begin{figure}
    \begin{center}{\centering
    \includegraphics{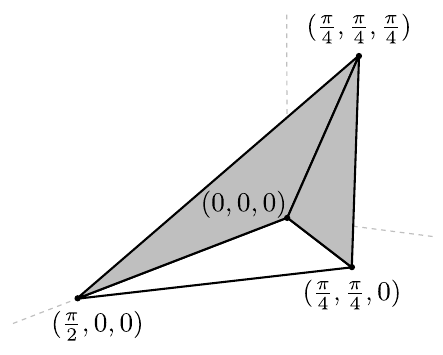}}\end{center}
    \caption{\label{fig:GG-Hamiltonians}All two-qubit unitaries are equivalent up to local unitaries to the Weyl chamber \cite{two-qubit-decomposition}, as plotted here with coordinates $(t_1,t_2,t_3)$. The generalized thermal unitaries are marked out in gray, and occur only on the surfaces $(0,0,0)\text{--}(\frac{\pi}{4},\frac{\pi}{4},\frac{\pi}{4})\text{--}(\frac{\pi}{4},\frac{\pi}{4},0)$ and $(0,0,0)\text{--}(\frac{\pi}{4},\frac{\pi}{4},\frac{\pi}{4})\text{--}(\frac{\pi}{2},0,0)$.}
\end{figure}

\subsection{Product-preservation Involving Rank-Deficient States}\label{sec:cannotanyhowpipo}

The discrepancy between product-preserving and generalized thermal unitaries boils down to the states they are acting upon. In the preceding sections, we have given explicit examples of unitaries that are product-preserving, but not generalized thermal, with respect to pure states.

However, Gibbs states are full rank for finite inverse temperature. So, pure states, or more generally rank-deficient states, are zero-temperature states in the thermodynamic picture, and are not physically feasible except in the limiting sense. Even outside the field of thermodynamics, since real-world experiments are always susceptible to noise and uncertainty, pure states are really idealizations that can never be actually prepared in the lab.

Whichever the motivation, a product-preserving unitary with respect to a rank-deficient state is pragmatically useful only when there is a neighborhood of full-rank states around it whose product structure is also preserved by the same unitary. If so, one can choose a full-rank approximation of the target state whose product structure is also preserved by the same unitary.

For two-qubit unitaries, which are all product-preserving, Theorem~\ref{thm:GG-vs-PP} therefore provides a simple check for when that unitary has a product-preserving property. Its proof also offers an exact construction of the input and ancilla Hamiltonians for which they are a generalized thermal tuple together with interaction unitary.

Meanwhile, the Stinespring dilation theorem asserts the uniqueness of the dilation unitary for every quantum channel when the ancilla state is pure, up to an isomorphism on the ancilla \cite{Stinespring-dilation}. Therefore, for channels with dilations whose input and ancilla are both a single qubit, Theorem~\ref{thm:puremixed} connects the product-preserving property of the Stinespring dilation and the generalized-thermal property of the dilation unitary.

It must be emphasized here that considering full-rank product-preserving states in the neighborhood of a rank-deficient state, and taking the former to be full-rank approximations of the latter, is a convenient choice: one that is motivated by thermodynamic arguments, but an otherwise arbitrary one. Bayes' rule and the Petz alike involve the inverse of the propagated reference. When this state is rank-deficient, an inverse does not exist, rendering such channels undefined. One might consider a neighborhood of full-rank states whose inverses exists, and define the retrodiction channel with some limiting process. However, our prevening discussion and an explicit example in appendix~\ref{apd:cannot-anyhow-PIPO} shows that defining retrodiction this way depends on the chosen neighborhood of full-rank states.

\subsection{Composable Tabletop Time-reversibility}
\begin{figure}[ht]
    \centering
    \includegraphics{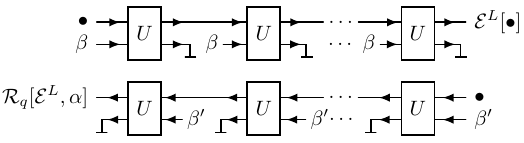}
    \caption{Where the composition of many copies of the same channel involves the same $\beta$ as the ancilla in each step, it can be desirable for the reversal of the composite channel to involve the same $\beta'$ as the ancilla in each step in the opposite direction. We call channels that satisfy Eq.~\eqref{pwfriendly} as composable tabletop time-reversible channels.}
    \label{fig:retrodiction-friendly-chain}
\end{figure}
Apart from the reduction in reversal complexity of a tabletop-time-reversible channel, one might also desire for this behavior to apply for \emph{compositions} of the same channel. For example, when the prior and ancilla are thermal states, the reverse channel of the composition of a thermal operation is the composition of the reverse channel of a single thermal operation \cite{AWWW18}.

More generally, it would be convenient if the reverse channel of a composition of many copies of the same quantum operation can be implemented as illustrated in Fig.~\ref{fig:retrodiction-friendly-chain}. Formally, for a tabletop-time-reversible channel $\chn$ with unitary dilation ($U$, $\beta$) and positive integer $L$, one desires the reverse channel of $\chn^L := \underbrace{\chn\circ\chn\circ \dots \circ \chn}_{L\text{ times}}$ to be
\begin{equation}\label{pwfriendly}
\begin{aligned}
    \ret{\chn^L ,\rf} &= \hat{\chn}_\rf \circ \hat{\chn}_{\chn[\rf]} \circ \hat{\chn}_{\chn^2 [\rf]} \circ \dots \circ \hat{\chn}_{\chn^{L-1} [\rf]} \\
    &\overset{!}{=} \hat{\chn}_\rf \circ \hat{\chn}_{\rf} \circ \hat{\chn}_{\rf} \circ \dots \circ \hat{\chn}_{\rf} \\
    &= \ret{\chn,\rf}^L \quad\forall L \in \mathbb{Z}^+, \text{where}\\
    &\hspace{7em}\hat{\chn}_{\rf}[\bullet] = \Tr[U^\dag(\bullet\otimes\beta')U].
\end{aligned}
\end{equation}
For brevity, we shall use $\TR_c(\alpha,\beta'|U,\beta)$ to denote such \emph{composable} tabletop-time-reversible channels---$\TR(\alpha,\beta'|U,\beta)$ that also satisfy Eq.~\eqref{pwfriendly}. Let us provide a few examples of such channels.

\emph{Unitary channels.} The action of both the unitary and its inverse on the input is independent of the state of the ancilla, so unitary channels are composably tabletop time-reversible. While this is a trivial case, it aligns with the intended definition of composable tabletop time-reversible channels as illustrated in Fig.~\ref{fig:retrodiction-friendly-chain}.

\emph{Reverse channel with $N$-steady state prior.} Consider $\chn[\bullet] = \TrB[U(\bullet\otimes\beta)U^\dag]$ that is \emph{not} a unitary channel, with a set of priors $\{\alpha_0,\alpha_1,\dots,\alpha_{N-1}\}$ that satisfy
\begin{equation}\label{eq:N-steady-cond}
    U(\alpha_n\otimes\beta)U^\dag = \alpha_{(n+1)\bmod N}\otimes\beta'.
\end{equation}
Then, $\hat{\chn}_{\alpha_n}[\bullet] = \Tr[U^\dag(\bullet\otimes\beta')U]$ for all $n$, hence $\TR_c(\alpha_n,\beta'|U,\beta)$. We call this the ``$N$\emph{-steady state}'' as the set $\{\alpha_n\}_{n=0}^{N-1}$ is unchanged under the channel. The $N=1$ case is the usual steady state of a channel, which includes the class of thermal operations as previously studied \cite{AWWW18}. An example of such a channel for any $N$ is an $N$-dit channel $\chn$ given by the ancilla $\beta = \sum_{k=0}^{N-1}b_k\ketbra{k}$ and the dilation unitary
\begin{equation}
    U = \sum_{k=0}^{N-1} u_k\otimes\ketbra{\psi_k}{k}.
\end{equation}
Here, $\sigma$ is a permutation of order $N$, $\{\ket{\psi_k}\}_{k=0}^{N-1}$ and $\{\ket{k}\}_{k=0}^{N-1}$ are orthonormal bases, and
\begin{equation}
    u_k = \sum_{j=0}^{N-1} e^{i\phi_j^{(k)}} \ketbra{\sigma(j)}{j},
\end{equation}
with $u_{k}^\dag u_{k'} \not\propto \one$ for all $k \neq k'$, which ensures that $\chn$ is not a unitary channel. Meanwhile, the priors are defined as $\alpha_n = \sum_{k=0}^{N-1}a_k\ketbra{\sigma^n(k)}$ where $\alpha_n$ is not degenerate, so that $\alpha_n\neq\alpha_{n'}$ for all $n\neq n'$. Since $u_k\ketbra{\sigma^n(j)}u_k^\dag = \ketbra{\sigma^{n+1}(j)}$, we have that
\begin{equation}
\begin{aligned}
    &U(\alpha_n\otimes\beta)U^\dag\\
    &\quad{}={}
    \sum_{k=0}^{N-1}\sum_{j=0}^{N-1} a_jb_k \pqty\Big{u_k \ketbra{\sigma^n(j)} u_k^\dag} \otimes\ketbra{\psi_k} \\
    &\quad{}={}
    \underbrace{\pqty{\sum_{j=0}^{N-1} a_j \ketbra{\sigma^{n+1}(j)}}}_{=\alpha_{(n+1)\bmod N}} \otimes \underbrace{\pqty{ \sum_{k=0}^{N-1}b_k \ketbra{\psi_k} }}_{:=\beta'}.
\end{aligned}
\end{equation}
This satisfies Eq.~\eqref{eq:N-steady-cond}, and hence this channel is composably tabletop time-reversible with respect to $\alpha_n$.

\emph{Channels with idempotent reverse channels.} These are tabletop time-reversible channels whose reverse channels are idempotent, both with themselves and subsequent priors: that is, they have the property $\hat{\chn}_{\alpha}\circ(\hat{\chn}^{L-1})_{\chn[\alpha]} =  \hat{\chn}_\alpha\circ \hat{\chn}_\alpha = \hat{\chn}_\alpha$. If so, Eq.~\eqref{pwfriendly} is also satisfied. An example of such a channel is when the dilation unitary is a swap, as
\begin{equation}
\begin{aligned}
    U(\alpha\otimes\beta)U^\dag &= \beta\otimes\alpha \\
    \implies \hat{\chn}_{\alpha}[\bullet] &= 
    \TrB[U^\dag(\bullet\otimes\alpha)U]
\end{aligned}
\end{equation}
for every $\alpha$. Clearly, $\hat{\chn}_{\alpha}\circ\hat{\chn}_{\chn[\alpha]}\circ\dots\circ\hat{\chn}_{\chn^{L-1}[\alpha]}[\bullet] = \alpha = \hat{\chn}_{\alpha}[\bullet] = \hat{\chn}_{\alpha}\circ \hat{\chn}_{\alpha}[\bullet]$, so the reverse channel with respect to the prior $\alpha$ is idempotent.

These are the classes of composable tabletop time-reversible channels that we have thus far identified. In our study of generalized thermal channels with two-qubit dilations, all $\TR_c$ channels we have found belong to one of the above classes. It is not known if this is an exhaustive list, as there might be richer families of $\TR_c$ channels in higher dimensions.

\section{Conclusion} \label{Concl}

Using a recipe from Bayesian inference, one can associate any physical channel, however irreversible, to a family of reverse channels indexed by the choice of a reference prior. We proved that an apparently different recipe, based on dilating the system to include the bath into which information is dissipated, leads exactly to the same family of reverse channels.

For thermal channels, it was known that a natural reverse channel consists of reversing the unitary evolution while in contact with the same thermal bath (in our framework, this is obtained by choosing the Gibbs state as reference prior). We ask for which other channels such phenomenon happens: that the reverse channel can be implemented with the same devices and similar baths as the original channel (``tabletop reversibility''). These questions further inspire the related definition of ``product-preserving'' maps that contains generalized thermal channels as an important subclass. We then proved several relations between these classes, both in general and with more detail for two-qubit unitaries. In particular, we show that when the reverse channel is well-defined (Section \ref{sec:cannotanyhowpipo}), product-preservation leads to tabletop reversibility (Theorem \ref{thm:main}). As such, with full-rank states, the latter is strictly more general (Theorem \ref{thm:TR-not-nec-PP}). As a by-product of this work, we found that the preservation of local energy spectra is a necessary and sufficient characterization of generalized thermal operations. Characterizing tabletop reversibility in a necessary and sufficient way remains an open problem.

\section*{Acknowledgments}

We thank Francesco Buscemi for many discussions, and in particular for stressing the role of assignment maps. We acknowledge discussions with Mile Gu, Matteo Lostaglio, Eric Lutz, and Nelly Ng. %We also thank Eugene Koh for helpful comments.

This work is supported by the National Research Foundation, Singapore and A*STAR under its CQT Bridging Grant; and by the Ministry of Education, Singapore, under the Tier 2 grant ``Bayesian approach to irreversibility'' (Grant No. MOE-000504-01). M.B.J.~acknowledges support from the Government of Spain (Severo Ochoa CEX2019-000910-S and TRANQI), Fundaci\'o Cellex, Fundaci\'o Mir-Puig, Generalitat de Catalunya (CERCA program), as well as funding from the European Union's Horizon 2020 research and innovation programme under the Marie Sk\l{}odowska-Curie Grant Agreement No. 847517.

\clearpage 
\bibliography{refsretro}

\appendix
\section{Thermal Operations \& Gibbs States} \label{GTherm}
In \cite{AWWW18}, we have it that $U$ commutes with a global Hamiltonian $H = H_A \otimes \one_B + \one_A \otimes H_B$. We simply that under this definition of $H$, the Gibbs state of $H$ is none other than the product of the Gibbs states of the local Hamiltonian:
\begin{align}
    \frac{e^{\kappa H}}{Z_H} &= \frac{1}{Z_A Z_B}e^{\kappa (H_A \otimes \one_B + \one_A \otimes H_B)} \nonumber\\
    &= \frac{1}{Z_A Z_B} \sum_{nm} e^{\kappa a_n} \ketbra{a_n b_m}{a_n b_m} \sum_{ij} e^{\kappa b_j} \ketbra{a_i b_j}{a_i b_j} \nonumber\\ 
    &= \frac{1}{Z_A Z_B} \sum_{nmij} e^{\kappa a_n} e^{\kappa b_j} \delta_{in} \delta_{jm} \ketbra{a_n b_m}{a_i b_j} \\ 
    &= \frac{1}{Z_A}  \sum_{n} e^{\kappa a_n} \ketbra{a_n}{a_n} \otimes \frac{1}{Z_B} \sum_{j} e^{\kappa b_j} \ketbra{b_j} \nonumber\\
    &= \therm{H_A} \otimes \therm{H_B}\nonumber
\end{align}
Thus, if $[U,H] =0$ then $[U,\therm{H_A} \otimes \therm{H_B}]=0$. Clearly then this means $(U, \therm{H_A}, \therm{H_B})$ is a product-preserved tuple. For a generalized thermal scenarios, we have that $U (H_A + H_B) U^\dag = H_A' + H_B'$. Invoking \eqref{uonf}, it holds that for generalized thermal maps, $U [\therm{H_A} \otimes \therm{H_B}] U^\dag = \therm{H_A'} \otimes \therm{H_B'}$. 

\section{Simple Examples of Dilations \& Priors That Are Not Tabletop Time-reversible}
Here we include simple classical dilations and priors for which the retrodiction map does not fulfill tabletop time-reversibility \eqref{eqfriendly}. We begin first with a classical example. For a dilation
\begin{equation}
    \mathbf{\Phi} = \pmqty{0 & 0 & 0 & 1 \\
    1 & 0 & 0 & 0 \\
    0 & 1 & 0 & 0 \\
    0 & 0 & 1 & 0
    }, 
    \quad \boldsymbol{\cbath} = \pmqty{\beta_0 \\ 1- \beta_0},
\end{equation}
the resultant bit-channel is given by
\begin{equation}
     \boldsymbol{\cchn} = \pmqty{\beta_0 & 1- \beta_0 \\ 1- \beta_0 & \beta_0}. \vspace{1em}
\end{equation} 
The retrodiction channel $\rvp = \ctrace_B \circ \Phi^{-1} \circ \hat{\ctrace}_{B, \Phi[\crf \otimes \cbath]}$, with an arbitrary reference $\boldsymbol{\crf}^\tpose = (\crf_0 \; \; 1-\crf_0)$, is thus
\begin{equation}
\begin{aligned}
\boldsymbol{\rvp} &= \left(
\begin{array}{cc}
 \frac{\crf_0 \beta_0}{(\crf_0-1) (\beta_0-1)+\crf_0 \beta_0} & \frac{\crf_0-\crf_0 \beta_0}{-2 \crf_0 \beta_0+\crf_0+\beta_0} \\
 \frac{(\crf_0-1) (\beta_0-1)}{\crf_0 (2 \beta_0-1)-\beta_0+1} & \frac{\beta_0-\crf_0 \beta_0}{-2 \crf_0 \beta_0+\crf_0+\beta_0} \\
\end{array} \right)
\end{aligned}
\end{equation}
Now, every tabletop-reversible map, $\rvp^{\text{TR}} = \ctrace_B \circ \Phi^{-1} \circ \hat{\ctrace}_{B, \square \otimes \eta}$ where $\boldsymbol{\eta}^\tpose = (\eta_0 \; \; 1-\eta_0)$, for this dilation gives
\begin{equation}
    \boldsymbol{\hat{\cchn}}^{\text{TR}}_{\boldsymbol{\crf}} = \left(
\begin{array}{cc}
 1-\eta_0 & \eta_0 \\
 \eta_0 & 1-\eta_0 \\
\end{array}
\right)
\end{equation}
This means that unless $\crf_0 = 1/2$ and $\eta_0= 1-\beta_0$, or $\beta_0 =0$ and $\eta_0=1$, or $\beta_0 =1$ and $\eta_0=0$, $\rvp \neq \rvp^{\text{TR}}$ always. For instance, if $\crf_0 = \beta_0 = 1/4$, then the two maps will not be equal. This simple example highlights how common non-tabletop reversible tuples (of dilations and priors) are, and that these occur easily even when confined to classical correlations, formed by a two-bit channel.

Meanwhile, for a quantum channel $\chn$ with dilation $U = \cos\theta\one + \sin\theta\operatorname{SWAP}$, ancilla $\beta$, and prior $\alpha\neq\beta$, we have used semidefinite programming to verify that $\hat{\chn}_{\alpha}[\bullet] \neq \TrB\{U^\dag (\bullet\otimes\beta') U\}$ for all $\beta'$ when $\theta \bmod (\pi/2) \neq 0$.

\section{Proofs Regarding Dilation \& Retrodiction}\label{app:proofs}
\subsection{Composability of Bayes' Rule \& The Petz Map}\label{app:proofcomposability}
Here we provide some simple proofs for \eqref{composability_c} and \eqref{composability_q}. For any concatenation $\cchn = \cchn_2 \circ \cchn_1$, applying \eqref{bayesmat} on $\boldsymbol{\cchn}$ gives: 
\begin{equation}
    \begin{aligned}
    \cret{\cchn_2 \circ \cchn_1,\crf}  &= \mathbf{D}_\crf (\boldsymbol{\varphi}_2\boldsymbol{\varphi}_1)^\tpose \mathbf{D}_{\cchn_2 \circ \cchn_1 [\crf]}^{-1} \\
                            &= \mathbf{D}_\crf \,\boldsymbol{\varphi}_1^\tpose\, \boldsymbol{\varphi}_2^\tpose\, \mathbf{D}_{\cchn_2 \circ \cchn_1 [\crf]}^{-1} \\
                            &= \mathbf{D}_\crf \,\boldsymbol{\varphi}_1^\tpose\, \mathbf{D}_{\cchn_1 [\crf]}^{-1} \mathbf{D}_{\cchn_1 [\crf]} \boldsymbol{\varphi}_2^\tpose \mathbf{D}_{\cchn_2 \circ \cchn_1 [\crf]}^{-1} \\ 
                            &= \boldsymbol{\hat{\varphi}}_{1,\crf}\boldsymbol{\hat{\varphi}}_{2,\cchn_1[\crf]},
    \end{aligned}
\end{equation}
with slight abuse of notation: on the right-hand side we have matrices instead of maps.
We may write this more generally: 
\begin{equation} 
 \begin{aligned}
    \mathcal{R}_{c}[\varphi_L\circ \dots \circ \varphi_1,\crf] 
    &= \cret{\varphi_1, \crf} \circ \cret{\varphi_2, \cchn_1[\crf]} \circ \cdots \\
        &\qquad \cdots\circ\cret{\varphi_L, \cchn_{L-1} \circ \dots\circ\cchn_1 [\crf] }\,.
\end{aligned}
\end{equation}
A similar proof is available for the Petz map:
\begin{widetext}
    \begin{equation} 
 \begin{aligned}
 \ret{\chn_2 \circ \chn_1,\rf}[\bullet] 
 & = \sqrt{\rf} (\chn_2 \circ \chn_1)^\dagger \left( \frac{1}{\sqrt{\chn_2 \circ \chn_1[\alpha]}} \bullet \frac{1}{\sqrt{\chn_2 \circ \chn_1[\alpha]}} \right) \sqrt{\rf} \\
& = \sqrt{\rf} (\chn_1^\dag \circ \chn_2^\dag) \left( \frac{1}{\sqrt{\chn_2 \circ \chn_1[\alpha]}} \bullet \frac{1}{\sqrt{\chn_2 \circ \chn_1[\alpha]}} \right) \sqrt{\rf} \\
& = \sqrt{\rf} \; \chn_1^\dag \Bigg( \frac{1}{\sqrt{\chn_1[\rf]}} 
\underbrace{\sqrt{\chn_1[\rf]} \; \chn_2^\dag \left( \frac{1}{\sqrt{\chn_2 \circ \chn_1[\alpha]}} \bullet \frac{1}{\sqrt{\chn_2 \circ \chn_1[\alpha]}} \right) \sqrt{\chn_1[\rf]}}_{\hat{\chn}_{2,{\chn_1 [\rf]}}}
\frac{1}{\sqrt{\chn_1[\rf]}} \Bigg) \sqrt{\rf} \\
&= \ret{\chn_1,\rf} \circ \ret{{\chn}_{2}, \chn_1[\rf]}.
\end{aligned}
\end{equation}
\end{widetext}

\subsection{Bayes' Rule on Classical Decomposition}\label{app:proofs-ccomp-bayes}
In this Appendix, we lay out an explicit proof for Result \ref{res1} on the level of matrices. This elucidates how retrodiction occurs on every step of the decomposition and how Bayesian inversion applies on physically valid channels, including those that are not stochastic (that is, even channels that do not correspond to a Markov matrix, such as the marginalization of a subspace and the assigning of an environment). Note that here, for the sake of symmetry with the quantum case, we will assume that the global process $\Phi$ is bijective and that the environment $\beta$ is uncorrelated with the input. Nevertheless, general insights also apply when these assumptions are removed. 

Firstly, we write Eq.~\eqref{eq:cdil} in terms of matrices:
\begin{equation} \label{cchndecomp}
    \boldsymbol{\cchn} = \bctrace_B \boldsymbol{\Phi} \hat{\bctrace}_{B,\square \otimes \cbath}.
\end{equation}
These matrices can be formalized in the following way. $\bctrace_B$ is a $d_A$ by $d_A d_B$ matrix while $\hat{\bctrace}_{B,\Lambda_{AB}}$ is $d_A d_B$ by $d_A$, defined as follows:
\begin{align}
        \bctrace_B &= \one_A \otimes \vone^\tpose \\
        \hat{\bctrace}_{B,\Lambda_{AB}} &= \pmqty{ 
        \vass{1} & \vzero & \cdots & \vzero \\
        \vzero & \vass{2} & \cdots & \vzero \\
        \vdots & \vdots & \ddots & \vdots \\
        \vzero & \vzero & \cdots & \vass{d_A}
        }.
\end{align}
Here, $\vone$ and $\vzero$ are $d_B$-dimensional column vectors of ones and zeros respectively, and $\vass{a}^\tpose = \pmqty{
    \frac{\Lambda(a,1)}{\sum_{\tilde{b}} \Lambda(a,\tilde{b})} &
    \frac{\Lambda(a,2)}{\sum_{\tilde{b}} \Lambda(a,\tilde{b})} &
    \cdots &
    \frac{\Lambda(a,d_B)}{\sum_{\tilde{b}} \Lambda(a,\tilde{b})} 
    }$, which comes from Eq.~\eqref{eq:cgenassign}. Now, $\square$ indicates any choice of state in $A$. The matrix $\hat{\ctrace}_{B,\square \otimes \cbath}$ is independent of this choice, since
\begin{equation}
    v(({\square \otimes \cbath})_{aB}) = \pmqty{
    \frac{\cbath(1)\square(a)}{\sum_{\tilde{b}} \cbath(\tilde{b}) \square(a)} \\
    \frac{\cbath(2)\square(a)}{\sum_{\tilde{b}} \cbath(\tilde{b}) \square(a)} \\
    \vdots \\
    \frac{\cbath(d_B)\square(a)}{\sum_{\tilde{b}} \cbath(\tilde{b}) \square(a)} 
    } =
    \pmqty{
    \cbath(1) \\
    \cbath(2) \\
    \vdots \\
    \cbath(d_B)
    } = \boldsymbol{\beta}.
\end{equation}
Thus, it follows that $\hat{\ctrace}_{B,\square \otimes \cbath}[\bullet_A] = \bullet_A \otimes \cbath_{B}$. Hence, each matrix performs the role of its corresponding channel. Turning now to retrodiction, we recall that Bayes' rule is \textit{composable} \cite{PB22}. This simply captures the time-reverse ordering and propagation of the reference prior, expected of Bayesian inversion. When composability~\eqref{composability_c} is applied to Eq.~\eqref{cchndecomp}, insights are yielded:
\begin{equation} 
 \begin{aligned}
    \rvp
    = \; \;  
    & \cret{\hat{\Sigma}_{B,\square \otimes \cbath},\crf} \cret{\Phi,\hat{\Sigma}_{B,\square \otimes \cbath}[\crf]}  \\
    & \cret{\ctrace_B,\Phi\circ\hat{\ctrace}_{B,\square \otimes \cbath}[\crf]} \\
    = \; \;  
    & \cret{\hat{\Sigma}_{B,\square \otimes \cbath},\crf} \cret{\Phi,\crf \otimes \cbath} \\
    & \cret{\ctrace_B,\Phi[\crf \otimes \cbath]}.
\end{aligned}
\end{equation}
Now, for both $\cret{\hat{\ctrace}_{B,  \square \otimes \cbath},\crf}$ and $\cret{\ctrace_B,\Phi[\crf \otimes \cbath]}$ in matrix form, applying Eq.~\eqref{bayesmat}, we get 

\begin{widetext}
\begin{equation} 
\begin{aligned}
\mathbf{D}_{\crf} \hat{\bctrace}_{B, \square \otimes \cbath} \mathbf{D}_{\crf \otimes \beta}^{-1} 
 &=\mathbf{D}_{\crf}
 \pmqty{\boldsymbol{\beta} & \vzero^\tpose & \cdots & \vzero^\tpose \\
        \vzero^\tpose & \boldsymbol{\beta} & \cdots & \vzero^\tpose \\
        \vdots & \vdots &  \ddots & \vdots \\
        \vzero^\tpose & \vzero^\tpose & \cdots & \boldsymbol{\beta}
 }\! \pmqty{\crf(1) \beta(1)  & 0 & \cdots & 0 \\
            0 & \crf(1) \beta(2)  & \cdots & 0 \\
            \vdots & \vdots &  \ddots & \vdots \\
            0 & 0 & \cdots & \crf(d_A)\beta(d_B)\\}^{-1} \\
&= \mathbf{D}_{\crf} \pmqty{\crf(1)^{-1} \cdots \crf(1)^{-1} & \vzero^\tpose & \cdots & \vzero^\tpose \\
                              \vdots & \vdots &  \ddots & \vdots \\ 
                              \vzero^\tpose & \vzero^\tpose & \cdots & \crf( d_A )^{-1} \cdots \crf( d_A )^{-1} }\\
&= \pmqty{\vone^\tpose & \vzero^\tpose & \cdots & \vzero^\tpose \\
        \vzero^\tpose & \vone^\tpose & \cdots & \vzero^\tpose \\
                              \vdots & \vdots &  \ddots & \vdots \\ 
                              \vzero^\tpose & \vzero^\tpose & \cdots & \vone^\tpose }\\
&= \bctrace_B.
\end{aligned}
\end{equation}

  \begin{equation} 
 \begin{aligned}
\mathbf{D}_{\Phi[\crf \otimes \cbath]}(\ctrace_B)^\tpose \mathbf{D}_{\ctrace_B \Phi[\crf \otimes \cbath]}^{-1}
    &= \mathbf{D}_{\Phi[\crf \otimes \cbath]} 
        \pmqty{\vone  & \vzero  & \cdots & \vzero  \\
           \vzero  & \vone  & \cdots & \vzero  \\ 
           \vdots & \vdots &  \ddots & \vdots \\ 
           \vzero  & \vzero  & \cdots & \vone  } \mathbf{D}_{\ctrace_B \Phi[\crf \otimes \cbath]}^{-1} \\
    &= \mathbf{D}_{\Phi[\crf \otimes \cbath]} 
    \pmqty{ v_a^{-1}(\ctrace_B \Phi[\crf \otimes \cbath](1)) & \vzero & \cdots & \vzero \\
            \vzero  & v_a^{-1}(\ctrace_B \Phi[\crf \otimes \cbath](2))  & \cdots & \vzero  \\ 
           \vdots & \vdots &  \ddots & \vdots \\     
           \vzero  & \vzero  & \cdots & v_a^{-1}(\ctrace_B \Phi[\crf \otimes \cbath](d_A))  } \\ 
    &= \pmqty{ 
        v(\Phi[\crf \otimes \cbath]_{1,B}) & \vzero & \cdots & \vzero \\
        \vzero & v(\Phi[\crf \otimes \cbath]_{2,B})  & \cdots & \vzero \\
        \vdots & \vdots & \ddots & \vdots \\
        \vzero & \vzero & \cdots & v(\Phi[\crf \otimes \cbath]_{d_A ,B}) 
        } \\
        &= \hat{\bctrace}_{B,\Phi[\crf \otimes \cbath]},
\end{aligned}
\end{equation}
where the $d_B$-length vector $v_a(p(\tilde{a}))^\tpose= \pmqty{p(\tilde{a}) & \cdots & p(\tilde{a})}$.

Thus it is shown that, given the propagated reference priors, the Bayesian inversion of the assignment map gives a marginalizing channel and vice versa:
\begin{align}
    &\cret{\hat{\ctrace}_{B,\square \otimes \cbath},\crf} = \ctrace_B, \\
    &\cret{\Phi, \crf \otimes \beta} = \hat{\Phi}_{\crf \otimes \beta} = \Phi^{-1}, \; \; \because \eqref{eq:deterministic} \\ 
    &\cret{\ctrace_B,\Phi[\crf \otimes \cbath]} = \hat{\ctrace}_{B,\Phi[\crf \otimes \cbath]}.
\end{align}
Together, we can write:
\begin{equation}
    \rvp = \ctrace_B \circ \Phi^{-1} \circ \hat{\ctrace}_{B, \Phi[\crf \otimes \cbath]}\,.\label{eq:cretdecomp} 
\end{equation}

\subsection{The Petz Map on Quantum Dilations}\label{app:retonqdil}
Here we provide some explicit proofs for each step of Eq.~\eqref{eq:qdildecomp}. Firstly, we know $\ret{\mathcal{U}, \qam{\square \otimes \beta}[\rf]} = \hat{\mathcal{U}}_{\rf \otimes \beta} = \mathcal{U}^\dag$ because of Eqs.~\eqref{eq:qassprod}~and~\eqref{eq:deterministic}. Secondly, one can easily verify that the adjoint \eqref{eq:adj} of the quantum assignment map is
\begin{equation}
    \left( \qam{\Gamma_{AB}} \right)^\dag[\bullet] = \TrB\left(\frac{1}{\sqrt{\TrB(\Gamma_{AB})}} \otimes \one_B \sqrt{\Gamma_{AB}} \bullet \sqrt{\Gamma_{AB}} \frac{1}{\sqrt{\TrB(\Gamma_{AB})}} \otimes \one_B \right). 
\end{equation}
Hence, 
\begin{equation}
\begin{aligned}
    \left( \qam{\square \otimes \cbath} \right)^\dag[\bullet] 
    & = \TrB\left(  \frac{1}{\sqrt{\square}} \otimes \one_B \sqrt{\square \otimes \cbath} \bullet \sqrt{\square \otimes \cbath}  \frac{1}{\sqrt{\square}} \otimes \one_B \right) \\
    \Rightarrow \ret{\qam{\square \otimes \beta}, \rf}  &= \sqrt{\rf} \left( \qam{\square \otimes \cbath} \right)^\dag \left( 
    \frac{1}{\sqrt{\rf \otimes \beta}} \bullet  \frac{1}{\sqrt{\rf \otimes \beta}}
    \right) \sqrt{\rf} \\
    &= \sqrt{\rf} \TrB \left( \frac{1}{\sqrt{\square}} \otimes \one_B \sqrt{\square \otimes \cbath} \frac{1}{\sqrt{\rf \otimes \beta}} \bullet \frac{1}{\sqrt{\rf \otimes \beta}} \sqrt{\square \otimes \cbath}  \frac{1}{\sqrt{\square}} \otimes \one_B \right)\sqrt{\rf} \\
    & = \sqrt{\rf} \TrB\left( \frac{1}{\sqrt{\rf}} \otimes \one_B \bullet \frac{1}{\sqrt{\rf}} \otimes \one_B \right) \sqrt{\rf}
    = \TrB(\bullet).
    \end{aligned}
\end{equation}
Thirdly, one can easily show that the adjoint of the partial trace is given by 
\begin{equation}
    \TrB^\dag [\bullet_A] = \bullet_A \otimes \one_B.
\end{equation}
Which implies, 
\begin{equation}
\begin{aligned}
\ret{\TrB{}, \Gamma_{AB}} & = \sqrt{\Gamma_{AB}} \TrB^\dag \left(  \frac{1}{\sqrt{\TrB(\Gamma_{AB})}} \bullet \frac{1}{\sqrt{\TrB(\Gamma_{AB})}}\right) \sqrt{\Gamma_{AB}}   \\ 
& =  \sqrt{\Gamma_{AB}} \left(  \frac{1}{\sqrt{\TrB(\Gamma_{AB})}} \bullet \frac{1}{\sqrt{\TrB(\Gamma_{AB})}} \otimes \one_B \right) \sqrt{\Gamma_{AB}} = \qam{\Gamma_{AB}} \\
& \Rightarrow \ret{\TrB{}, \mathcal{U}[\rf \otimes \beta]} = \qam{\, \mathcal{U}[\rf\otimes\beta]}.
\end{aligned}
\end{equation}
Thus, every step of Eq.~\eqref{eq:qdildecomp} is proven, thus giving an alternative route to Eq.~\eqref{ptz3}. These derivations highlight that Bayesian inference can be applied in a logically consistent and physically insightful way to any valid channel, including that of marginalization and the assignment of environments and so on.
\end{widetext}
\section{Special Case of Pure Product-preserving Tuple for Two-qubit Unitaries}\label{apd:special-case-pure} 
For $a_0 \neq 0$, the explicit form of Eq.~\eqref{eq:pure-PIPO-equation} is
\begin{widetext}
\begin{equation}
\begin{aligned}
(u_A^\dag \otimes u_B^\dag) U (\ket{\alpha}\otimes\ket{\beta})
    \;&\widehat{=}
    \pmqty{
        e^{it_3}\cos(t_1-t_2) & 0 & 0 & ie^{it_3}\sin(t_1-t_2) \\
        0 & e^{-it_3}\cos(t_1+t_2) & ie^{-it_3}\sin(t_1+t_2) & 0 \\
        0 & ie^{-it_3}\sin(t_1+t_2) & e^{-it_3}\cos(t_1+t_2) & 0 \\
        ie^{it_3}\sin(t_1-t_2) & 0 & 0 & e^{it_3}\cos(t_1-t_2) 
    }\bqty{\pmqty{1\\x}\otimes\pmqty{b_0\\b_1}} \\
    &= \pmqty{
        b_0 e^{i t_3} \cos{\left(t_1 - t_2 \right)} + i b_1 x e^{i t_3} \sin{\left(t_1 - t_2 \right)}\\
        i b_0 x e^{- i t_3} \sin{\left(t_1 + t_2 \right)} + b_1 e^{- i t_3} \cos{\left(t_1 + t_2 \right)}\\
        b_0 x e^{- i t_3} \cos{\left(t_1 + t_2 \right)} + i b_1 e^{- i t_3} \sin{\left(t_1 + t_2 \right)}\\
        i b_0 e^{i t_3} \sin{\left(t_1 - t_2 \right)} + b_1 x e^{i t_3} \cos{\left(t_1 - t_2 \right)}
    }
\end{aligned}
\end{equation}
\end{widetext}
The condition for $(U,\ket{\alpha},\ket{\beta})$ to be product-preserving is given in Eq.~\eqref{eq:quadratic-equation}, which can be expressed as a quadratic equation $ax^2 + bx + c = 0$, with
\begin{equation}\label{eq:quadratic-equation-full}
\begin{aligned}
    a &= \frac{i}{2}\bqty{
        b_0^2 e^{-i2t_3}\sin(2t_1+2t_2) -
        b_1^2 e^{i2t_3}\sin(2t_1-2t_2)
    }, \\
    b &= b_0b_1\bqty{
        e^{-i2t_3}\cos(2t_1+2t_2) -
        e^{i2t_3}\cos(2t_1-2t_2) 
    }, \\
    c &= -\frac{i}{2}\bqty{
        b_0^2 e^{i2t_3}\sin(2t_1-2t_2) -
        b_1^2 e^{-i2t_3}\sin(2t_1+2t_2)
    }.
\end{aligned}
\end{equation}
Unless $a=b=0$ and $c\neq 0$, a solution always exists for $x$, and the corresponding $\ket{\alpha}$ for the given $U$ and $\ket{\beta}$ to make the tuple product-preserving.

However, if $a=b=0$ and $c\neq 0$, the condition simplifies to $c=0$, which is a contradiction. In that case, setting $\ket{\alpha} = \ket{1}$ gives
\begin{equation}
(u_A^\dag \otimes u_B^\dag) U (\ket{1}\otimes\ket{\beta})
\widehat{=}
\pmqty{
i b_1 e^{i t_3} \sin{\left(t_1 - t_2 \right)}\\
i b_0 e^{- i t_3} \sin{\left(t_1 + t_2 \right)}\\
b_0 e^{- i t_3} \cos{\left(t_1 + t_2 \right)}\\
b_1 e^{i \cchn} e^{i t_3} \cos{\left(t_1 - t_2 \right)}
},
\end{equation}
and it can be verified that Eq.~\eqref{eq:quadratic-equation} is satisfied with $a=0$.

In summary, for a given $U$ and $\ket{\beta}$, if $a$ as defined in Eq.~\eqref{eq:quadratic-equation-full} is nonzero, $ax^2+bx+c=0$ is solved for $x$ with the coefficients given in Eq.~\eqref{eq:quadratic-equation-full}, and we set $\ket{\alpha} = v_A \pqty{\ket{0} + x\ket{1}}/\sqrt{1+\abs{x}^2}$. Otherwise, if $a=0$, we set $\ket{\alpha} = \ket{1}$. In either case, $(U,\ket{\alpha},\ket{\beta})$ forms a product-preserving tuple.

\section{Lemmas About Two-qubit Generalized-thermal Unitaries}\label{app:lemmas2q}

\begin{lemma}\label{thm:general-GG}
    For $U$ such that $t_k \bmod (\pi/4) = 0$ for at most one $k\in\{1,2,3\}$, $U$ is generalized thermal with respect to $H_A$, $H_B$, $H_A'$, and $H_B'$ if and only if
    \begin{equation}\label{eq:all-GG-Hamiltonians}
    \begin{aligned}
        H_A &= v_A H  v_A^\dag, &
        H_B &= v_B  \tilde{\sigma} H \tilde{\sigma}^\dag v_B^\dag, \\
        H_A' &= u_A \tilde{\varsigma} H \tilde{\varsigma}^\dag u_A^\dag, &
        H_B' &= u_B \tilde{\varsigma} \tilde{\sigma} H \tilde{\sigma}^\dag \tilde{\varsigma}^\dag u_B^\dag,
    \end{aligned}
    \end{equation}
    where $\tilde{\sigma} := \prod_{j=1}^3 \sigma_j^{m_{-j}}$ and $\tilde{\varsigma} := \prod_{j=1}^3 \sigma_j^{m_{+j}}$ are ``bit flips'' with $m_{\pm j}$ defined below, and $H$ is any Hamiltonian such that $\Tr(H\sigma_j) \neq 0$ only when $(t_j-t_k)\bmod(\pi/2)=0$ or $(t_j+t_k)\bmod(\pi/2)=0$ for all permutations $(j,k,l)$ of $(1,2,3)$.
    
    The definition of $m_{\pm j}$ is as follows: if $(t_k-t_l)\bmod(\pi/2)=0$ or $(t_k+t_l)\bmod(\pi/2)=0$, where $j\neq k \neq l$, then
    \begin{equation}
    \begin{aligned}
        (-1)^{m_{+j}} &= \operatorname{sgn}\!\left[\cos(2t_k)\cos(2t_l)\right], \\
        (-1)^{m_{-j}} &= \operatorname{sgn}\!\left[\tan(2t_k)\tan(2t_l) \right].
    \end{aligned}
    \end{equation}
    Otherwise, if $(t_j \pm t_k)\bmod(\pi/2) \neq 0$, $m_{\pm j}$ should be chosen so that $(-1)^{m_{\pm 1} + m_{\pm 2} + m_{\pm 3}} = 1$.
\end{lemma}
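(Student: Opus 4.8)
The plan is to reduce $U$ to its Cartan canonical form, decompose the space of traceless two-qubit operators into subspaces left invariant by conjugation with that form, and then solve a block-by-block linear ``no-leakage'' condition. First I would absorb the local unitaries: writing $U = (u_A\otimes u_B)U_0(v_A^\dag\otimes v_B^\dag)$ with $U_0 = \exp(i\sum_k t_k\sigma_k\otimes\sigma_k)$, the generalized-thermal condition \eqref{eq:gg-definition} becomes, after the substitutions $H := v_A^\dag H_A v_A$, $\tilde H_B := v_B^\dag H_B v_B$, $\tilde H_A' := u_A^\dag H_A' u_A$, $\tilde H_B' := u_B^\dag H_B' u_B$ and discarding the (trivially preserved) traces, the equation $U_0(H\otimes\one + \one\otimes\tilde H_B)U_0^\dag = \tilde H_A'\otimes\one + \one\otimes\tilde H_B'$. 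Expanding each traceless Hermitian operator in Pauli components, $H = \sum_j h_j\sigma_j$, $\tilde H_B = \sum_j g_j\sigma_j$, and similarly for the primed operators, reduces the problem to a statement about $6$ real coefficients.

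Next I would establish the block structure. Since the generators $\sigma_k\otimes\sigma_k$ mutually commute, $U_0 = E_1E_2E_3$ with $E_k = \exp(it_k\sigma_k\otimes\sigma_k)$, and elementary Pauli algebra shows that conjugation by $U_0$ leaves invariant the three one-dimensional spans $\{\sigma_k\otimes\sigma_k\}$ together with, for each $j$ and $\{k,l\}=\{1,2,3\}\setminus\{j\}$, the four-dimensional space $V_j := \operatorname{span}\{\sigma_j\otimes\one,\ \one\otimes\sigma_j,\ \sigma_k\otimes\sigma_l,\ \sigma_l\otimes\sigma_k\}$; moreover $E_j$ acts trivially on $V_j$, so the restriction of the conjugation to $V_j$ is an explicit orthogonal map $M_j = M_j(t_k,t_l)$. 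Computing $E_m(\cdot)E_m^\dag$ on the four basis operators yields, for instance, $U_0(\sigma_j\otimes\one)U_0^\dag = \cos 2t_k\cos 2t_l\,\sigma_j\otimes\one + \sin 2t_k\sin 2t_l\,\one\otimes\sigma_j + (\text{terms in }\sigma_k\otimes\sigma_l,\ \sigma_l\otimes\sigma_k)$, and an analogous formula for $\one\otimes\sigma_j$.

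Now the heart of the argument: since $H\otimes\one + \one\otimes\tilde H_B$ has no components on the $\sigma_k\otimes\sigma_l$ operators, requiring the image to again be of local form means, within each $V_j$, that $M_j$ sends $(h_j,g_j,0,0)^T$ to some $(h_j',g_j',0,0)^T$. This is a homogeneous pair of linear equations $N_j(t_k,t_l)\,(h_j,g_j)^T = 0$, where $N_j$ is the lower-left $2\times2$ block of $M_j$; a short computation shows $\det N_j$ is, up to a nonzero factor, the product of $\sin\!\big(2(t_k+t_l)\big)$ and $\sin\!\big(2(t_k-t_l)\big)$, which vanishes precisely when $(t_k+t_l)\bmod(\pi/2)=0$ or $(t_k-t_l)\bmod(\pi/2)=0$. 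When $N_j$ is nonsingular one is forced to $h_j=g_j=0$ — this is exactly the stated restriction that $\Tr(H\sigma_l)$ can be nonzero only in the singular blocks. When $N_j$ is singular its kernel is one-dimensional; solving the ratio gives $g_j = (-1)^{m_{-j}}h_j$ with $(-1)^{m_{-j}} = \operatorname{sgn}[\tan 2t_k\tan 2t_l]$, i.e.\ $\tilde H_B = \tilde\sigma H\tilde\sigma^\dag$ along that axis, while feeding the kernel vector through the upper rows of $M_j$ yields $(h_j',g_j')$ with $g_j' = (-1)^{m_{+j}}h_j'$ where $(-1)^{m_{+j}} = \operatorname{sgn}[\cos 2t_k\cos 2t_l]$ and the relation $\tilde H_A' = \tilde\varsigma H\tilde\varsigma^\dag$ on that axis drops out of the same step. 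On axes where $N_j$ is nonsingular the coefficient $h_j$ vanishes, so the sign $m_{\pm j}$ is immaterial there and is fixed by the parity convention $(-1)^{m_{\pm1}+m_{\pm2}+m_{\pm3}}=1$ so that $\tilde\sigma,\tilde\varsigma$ are genuine products of bit-flips; undoing the substitutions of the first step then reassembles Eq.~\eqref{eq:all-GG-Hamiltonians}. The hypothesis ``$t_k\bmod(\pi/4)=0$ for at most one $k$'' is used to ensure the maps $M_j$ do not degenerate further (e.g.\ $U_0$ becoming a \textsc{swap}- or \textsc{cnot}-type gate with larger symmetry), with the remaining cases handled by the companion lemmas \ref{thm:general-GG-special-1} and \ref{thm:general-GG-special-2}.

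The main obstacle I expect is the sign bookkeeping. The precise expressions $(-1)^{m_{+j}}=\operatorname{sgn}[\cos 2t_k\cos 2t_l]$ and $(-1)^{m_{-j}}=\operatorname{sgn}[\tan 2t_k\tan 2t_l]$ emerge only after tracking accumulated phases through a chain of Pauli conjugations and then reading off the kernel and image of a $2\times2$ matrix whose entries switch sign across the $\pi/4$-thresholds; keeping these consistent, and consistent with \eqref{uonf} when one exponentiates back from the Hamiltonians to the product Gibbs states, is where the real care is needed. By contrast, the invariant-subspace decomposition and the $\det N_j$ calculation are routine.
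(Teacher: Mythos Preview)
Your proposal is correct and follows essentially the same approach as the paper: strip the local unitaries, exploit the invariant subspaces $V_j=\operatorname{span}\{\sigma_j\otimes\one,\one\otimes\sigma_j,\sigma_k\otimes\sigma_l,\sigma_l\otimes\sigma_k\}$ under conjugation by the Cartan core, impose the vanishing of the bilocal components, and read off the sign relations. The only cosmetic difference is that the paper further splits each $V_j$ into the two $2$-dimensional $\pm$ eigenspaces $\operatorname{span}\{\one\otimes\sigma_j\pm\sigma_j\otimes\one,\ \sigma_k\otimes\sigma_l\pm\sigma_l\otimes\sigma_k\}$ and applies Baker--Campbell--Hausdorff to the commutators, whereas you work with the full $4\times4$ block $M_j$ and its lower-left $2\times2$ submatrix $N_j$; your determinant criterion $\det N_j\propto\sin 2(t_k+t_l)\sin 2(t_k-t_l)$ recovers exactly the paper's condition $\tan^2(2t_k)=\tan^2(2t_l)$.
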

\begin{proof}
Let us first consider the unitary $\widetilde{U} = e^{i\widetilde{H}}$ where $\widetilde{H} = \sum_{k=1}^3 t_k \sigma_k\otimes \sigma_k$, with traceless Hamiltonians $\widetilde{H}_A$, $\widetilde{H}_B$, $\widetilde{H}_A'$, and $\widetilde{H}_B'$ with the decomposition $\widetilde{H}_A = \sum_{k=1}^3 h_{A,k}\sigma_k$ (and similarly for $\widetilde{H}_B$, $\widetilde{H}_{A/B}'$).

The key step is to work out that for $\mu \in \{0,1,2,3\}$, with $\sigma_0=\one$, and cyclic permutations $(j,k,l)$ of $(1,2,3)$,
\begin{equation}\label{eq:before-BKH}
\begin{aligned}
    [i\widetilde{H},\sigma_\mu\otimes\sigma_\mu] &= 0 \\
    [i\widetilde{H},\one\otimes\sigma_j \pm \sigma_j\otimes\one] &=
    2(t_k \mp t_l)(\sigma_k\otimes\sigma_l \pm \sigma_l\otimes\sigma_k) \\
    [i\widetilde{H},\sigma_k\otimes\sigma_l \pm \sigma_l\otimes\sigma_k] &=
    - 2(t_k \mp t_l)(\one\otimes\sigma_j \pm \sigma_j\otimes\one).
\end{aligned}
\end{equation}
Together with the Baker--Campbell--Hausdorff lemma, Eq.~\eqref{eq:before-BKH} can be used to find $\widetilde{U}(\sigma_\mu\otimes\sigma_\nu)\widetilde{U}^\dag$ for all $\mu,\nu \in \{0,1,2,3\}$.

Having these transformations, we notice that the condition of generalized thermal channel
\begin{equation}\label{eq:diagonal-condition}
\widetilde{U}( \widetilde{H}_A\otimes\one + \one\otimes\widetilde{H}_B )\widetilde{U}^\dag = \widetilde{H}_A'\otimes\one + \one\otimes\widetilde{H}_B', 
\end{equation}
implies that there are no products $\sigma_j\otimes\sigma_k$ on the right-hand side. Setting the same components to zero on the left leads to
\begin{equation}\label{eq:diagonal-condition-terms}
\begin{aligned}
    h_{A,j}\sin(2t_k)\cos(2t_l) &= h_{B,j}\sin(2t_l)\cos(2t_k),\\
    h_{A,j}\sin(2t_l)\cos(2t_k) &= h_{B,j}\sin(2t_k)\cos(2t_l).
\end{aligned}
\end{equation}
Let us first take $t_k\bmod(\pi/4) \neq 0$ for all $k$, so that $\{\tan(2t_k)\}_{k=1}^3$ are all nonzero and finite.

To satisfy Eq.~\eqref{eq:diagonal-condition-terms}, the following must hold:
\begin{itemize}
    \item if $h_{A,j},h_{B,j} \neq 0$, then $\tan^2(2t_k) = \tan^2(2t_l)$, which further implies $h_{A,j} = \pm h_{B,j}$,
    \item if $\tan^2(2t_k) \neq \tan^2(2t_l)$, then $h_{A,j} = h_{B,j} = 0$.
\end{itemize}
So, it is always true that $h_{B,j} = \pm h_{A,j}$, and they are nonzero only if $\lvert\tan(2t_k)\rvert = \lvert\tan(2t_l)\rvert$, which in turn implies that $(t_j-t_k)\bmod(\pi/2)=0$ or $(t_j+t_k)\bmod(\pi/2)=0$. The sign ambiguity is due to the absolute value in the latter expression.

Now, turning to the $\one\otimes\sigma_j$ and $\sigma_j\otimes\one$ components of Eq.~\eqref{eq:diagonal-condition}, we have
\begin{equation}\label{eq:off-diagonal-condition}
\begin{aligned}
    h'_{A,j} &= h_{A,j}\cos(2t_k)\cos(2t_l) + h_{B,j}\sin(2t_k)\sin(2t_l) \\
    = &\pm h_{A,j}, \\
    h'_{B,j} &= h_{B,j}\cos(2t_k)\cos(2t_l) + h_{A,j}\sin(2t_k)\sin(2t_l) \\
    &= \pm h_{B,j}.
\end{aligned}
\end{equation}
In the end, this means that $h_{B,j} = \pm h_{A,j} = \pm h_{A,j}' = \pm h_{B,j}'$. Therefore, after defining $H := \widetilde{H}_{A}$, we have $H_{B} = \sigma_\mu H \sigma_\mu$, $H_A'=\sigma_\nu H \sigma_\nu$, and $H_B' = \sigma_\gamma H \sigma_\gamma$ for some ``bit-flips'' or identities $\sigma_\mu$, $\sigma_\nu$, and $\sigma_\gamma$ that correct the sign ambiguities.

Finally, we arrive at Eq.~\eqref{eq:all-GG-Hamiltonians} by substituting $\widetilde{U}$ into Eq.~\eqref{eq:standard-two-qubit-unitary-decomposition}. The form of $\tilde{\sigma}$ and $\tilde{\varsigma}$ is found by keeping track of the signs of the trigonometric functions.

The proof also holds when $t_k\bmod(\pi/4) = 0$ for at most one $k$: substituting $t_k$ into Eq.~\eqref{eq:diagonal-condition-terms} gives $h_{A,j} = h_{B,j} = 0$ for all $j\neq k$, after which the rest of the proof follows as stated.
\end{proof}

\begin{lemmap}{thm:general-GG}\label{thm:general-GG-special-1}
    For $U$ such that $t_j,t_k \bmod (\pi/4) = 0$ and $t_l \bmod (\pi/4) \neq 0$, $U$ is generalized thermal with respect to $H_A$, $H_B$, $H_A'$, and $H_B'$ if and only if
    \begin{itemize}
        % \item $(t_j-t_k)\bmod\frac{\pi}{2} \neq 0$ and $H_A,H_B,H_A',H_B' \propto \one$, or;
        \item $t_j\bmod(\pi/2) = t_k\bmod(\pi/2) = 0$ and, up to a constant offset,
        \begin{equation}
        \begin{aligned}
            H_A &= h_A v_A\sigma_lv_A^\dag, &
            H_B &= h_B v_B\sigma_l v_B^\dag, \\
            H_A' &= h_A u_A \sigma_l u_A^\dag, &
            H_B' &= h_B u_B \sigma_l u_B^\dag,
        \end{aligned}
        \end{equation}
        or;
        \item $t_j\bmod(\pi/2) = t_k\bmod(\pi/2) = \pi/4$ and, up to a constant offset, 
        \begin{equation}
        \begin{aligned}
            H_A &= h_A v_A\sigma_lv_A^\dag, &
            H_B &= h_B v_B\sigma_l v_B^\dag, \\
            H_A' &= h_B u_A \sigma_l u_A^\dag, &
            H_B' &= h_A u_B \sigma_l u_B^\dag.
        \end{aligned}
        \end{equation}
    \end{itemize}
\end{lemmap}

\begin{proof}
    First, let $2t_k = n_k\pi/2$ for integer $n_k$, which gives $2\cos(2t_k) = (-1)^{\lfloor n_k/2\rfloor}(1+(-1)^{n_k})$ and
     $2\sin(2t_k) = (-1)^{\lfloor n_k/2 \rfloor}(1-(-1)^{n_k})$. Substituting it into Eq.~\eqref{eq:diagonal-condition-terms},
    \begin{equation}
    \begin{aligned}
    h_{A,j}(1-(-1)^{n_k})\cos(2t_l) &= h_{B,j}\sin(2t_l)(1+(-1)^{n_k}),\\
    h_{B,j}\sin(2t_l)(1+(-1)^{n_k}) &= h_{A,j}(1-(-1)^{n_k})\cos(2t_l).
    \end{aligned}
    \end{equation}
    Since one of $1+(-1)^{n_k}$ or $1-(-1)^{n_k}$ must be zero, it is either the case that both right columns are zero, or both left columns are zero. As such, $h_{A,j}=h_{B,j}=0$.
    
    Similarly letting $2t_j = n_j\pi/2$, repeating the above steps give $h_{A,k}=h_{B,k}=0$.

    Meanwhile, for the $h_{A,l}$ and $h_{B,l}$ case,
    \begin{equation}
    \begin{aligned}
    &h_{A,l}(1-(-1)^{n_k})(1+(-1)^{n_j}) \\
    &\qquad{}={} h_{B,l}(1+(-1)^{n_k})(1-(-1)^{n_j}),\\
    &h_{B,l}(1+(-1)^{n_k})(1-(-1)^{n_j}) \\
    &\qquad{}={} h_{A,l}(1-(-1)^{n_k})(1+(-1)^{n_j}).
    \end{aligned}
    \end{equation}
    When the parities of $n_j$ and $n_k$ do not match, $h_{A,l} = h_{B,l} = 0$. In other words, if $t_j\bmod(\pi/2) \neq t_k\bmod(\pi/2)$, then $U$ is not generalized thermal.
    
    Otherwise, the equations are trivially satisfied. Then, Eq.~\eqref{eq:off-diagonal-condition} gives
    \begin{equation}
        h_{A,l}' = \begin{cases}
        h_{A,l} & \text{if }t_j\bmod\frac{\pi}{2}=0 \\ 
        h_{B,l} & \text{if }t_j\bmod\frac{\pi}{2}=\frac{\pi}{4}, \\ 
        \end{cases}
    \end{equation}
    with analogous expressions for $h_{B,l}'$. Placing these expressions back into the definitions of the Hamiltonians end the proof.
\end{proof}

\begin{lemmapp}{thm:general-GG}\label{thm:general-GG-special-2}
    For $U$ such that $t_k \bmod (\pi/4) = 0$ for all $k \in \{1,2,3\}$, $U$ is generalized thermal with respect to $H_A$, $H_B$, $H_A'$, and $H_B'$ if and only if
    \begin{itemize}
        \item $t_1 \bmod \frac{\pi}{2} = t_2  \bmod \frac{\pi}{2}  = t_3 \bmod \frac{\pi}{2} = 0 $, $H_A$ and $H_B$ are arbitrary, and
        \begin{equation}
        \begin{aligned}
        H_A' &= u_A v_A^\dag H_A v_A u_A^\dag,\\
        H_B' &= u_B v_B^\dag H_B v_B u_B^\dag,
        \end{aligned}
        \end{equation}
        or;
        \item $t_1 \bmod \frac{\pi}{2} = t_2  \bmod \frac{\pi}{2}  = t_3 \bmod \frac{\pi}{2} = \frac{\pi}{4}$, $H_A$ and $H_B$ are arbitrary, and
        \begin{equation}
        \begin{aligned}
        H_A' &= u_A v_B^\dag H_B v_B u_A^\dag, \\
        H_B' &= u_B v_A^\dag H_A v_A u_B^\dag.
        \end{aligned}
        \end{equation}
    \end{itemize}
\end{lemmapp}

\begin{proof}
    The proof is almost identical to that of Lemma~\ref{thm:general-GG-special-1}, so we shall only provide a sketch. Substituting $2t_k = n_k\pi/2$ for all $k \in \{1,2,3\}$ into Eq.~\eqref{eq:diagonal-condition-terms}, we will find that $h_{A,k}=h_{B,k}=0$ for all $k$ if the parities of $n_1$, $n_2$, and $n_3$ do not all match. This means that $U$ is not generalized thermal if $t_j\bmod\frac{\pi}{2} \neq t_k\bmod\frac{\pi}{2}$ for any two $j \neq k$.
    
    Otherwise, Eq.~\eqref{eq:off-diagonal-condition} gives, for all $k$,
    \begin{equation}
        h_{A,k}' = \begin{cases}
        h_{A,k} & \text{if }t_j\bmod\frac{\pi}{2}=0 \\ 
        h_{B,k} & \text{if }t_j\bmod\frac{\pi}{2}=\frac{\pi}{4}, \\ 
        \end{cases}
    \end{equation}
    with similar expressions for $h_{B,k}$. Substituting these back into the Hamiltonians concludes the proof.
\end{proof}

\section{Retrodiction with Rank-deficient Outputs is Ill-defined in General}\label{apd:ill-defined-how}
The Petz recovery map in Eq.~\eqref{petz} involves the operator $(\chn[\alpha])^{-\frac{1}{2}}$, which does not exist when $\chn[\alpha]$ is rank-deficient. One might attempt to circumvent this problem in in several ways:

(1) \emph{Define $\hat{\chn}_{\alpha,1}$ using the pseudoinverse}. The inverse can be defined only on the support of $\chn[\alpha]$, resulting in the so-called pseudoinverse of $\chn[\alpha]$, as is the convention in quantum information \cite{quantum-information-notes}.

(2) \emph{Define $\hat{\chn}_{\alpha,2}$ in a full-rank neighbourhood}. Another possibility is to perturb the prior or the channel so that the output state is perturbed as $\chn[\alpha] \to \chn^{(\varepsilon)}[\alpha]$, such that $\lim_{\varepsilon\to 0}\chn^{(\varepsilon)}[\alpha] = \chn[\alpha]$ and $\chn^{(\varepsilon)}[\alpha]$ is full rank for every $\varepsilon > 0$. Then, the Petz map $\hat{\chn}_\alpha^{(\varepsilon)}$ is well-defined for every $\varepsilon > 0$, and the retrodiction channel is defined as the limit $\hat{\chn}_{\alpha,2} := \lim_{\varepsilon \to 0}\hat{\chn}_{\alpha}^{(\varepsilon)}$.

(3) \emph{Na\"ively define $\hat{\chn}_{
\alpha,3}$ with product-preserving tuples}. As Eq.~\eqref{pp-tr} gives a simple form for the retrodiction channel when $(U,\alpha,\beta)$ is a product-preserving tuple for full rank $\alpha$ and $\beta$, one might extend it to the rank-deficient case by imposing that the retrodiction channel should also take the same form
\begin{equation}\label{eq:naive-pp-retrodiction}
\hat{\chn}_{\alpha,3}[\bullet] := \TrB\bqty{U^\dag\pqty{\bullet\otimes\beta'}U}
\end{equation}
even when $\alpha'$ is not full rank.

In the following section, we show that the retrodiction channels as defined by the above approaches do not agree in general. As such, the retrodiction channel for a rank-deficient output state depends on the convention chosen, and cannot be consistently defined when the output state is rank-deficient.

\subsection{Comparison of different approaches to retrodicting rank-deficient outputs}

\subsubsection{Unitary channels}
Consider a unitary channel $\mathcal{U}[\bullet] := U\bullet U^\dag$ with the dilation $\mathcal{U}[\bullet] = \Tr_B[ (U\otimes V)(\bullet\otimes\beta)(U^\dag \otimes V^\dag) ]$ for arbitrary unitary $V$ and ancilla $\beta$. $(U\otimes V,\alpha,\beta)$ is clearly a product-preserving tuple, since
\begin{equation}
    (U\otimes V)(\alpha\otimes\beta)(U^\dag\otimes V) = (U\alpha U^\dag) \otimes (V\beta V^\dag) =: \alpha'\otimes\beta'.
\end{equation}

(1) \emph{Define $\hat{\mathcal{U}}_{\alpha,1}$ using the pseudoinverse}. Given $\alpha' = \sum_{k=1}^{|\text{supp}(\alpha')|} a'_k\ketbra{a'_k}$ with $a'_k > 0$ for all $k \in \{1,2,\dots,|\text{supp}(\alpha')|\}$, the pseudoinverse of $\sqrt{\alpha'}$ is defined as
\begin{equation}
    {\sqrt{\alpha'}}^{\,\boldsymbol{\pmb{+}}} := \sum_{k=1}^{|\text{supp}(\alpha')|} \frac{1}{\sqrt{a'_k}}\ketbra{a'_k}.
\end{equation}
The pseudoinverse has the property $\sqrt{\alpha'}{\sqrt{\alpha'}}^{\,\boldsymbol{\pmb{+}}} = {\sqrt{\alpha'}}^{\,\boldsymbol{\pmb{+}}}\sqrt{\alpha'} = \Pi_{\alpha'}$, where $\Pi_{\alpha'} := \sum_{k=1}^{|\text{supp}(\alpha')|} \ketbra{a'_k}$ is the projector onto the support of $\alpha'$. Therefore, with the replacement $\alpha^{\prime -\frac{1}{2}} \to {\sqrt{\alpha'}}^{\,\boldsymbol{\pmb{+}}}$, the retrodiction of a unitary channel can be found to be
\begin{equation}
    \hat{\mathcal{U}}_{\alpha,1}[\bullet] = U^\dag \Pi_{\alpha'}\bullet\Pi_{\alpha'} U.
\end{equation}
This channel is completely positive but not trace-preserving in general.

(2) \emph{Define $\hat{\chn}_{\alpha,2}$ in a full-rank neighbourhood}. Given any state $\alpha_{\varepsilon}$ in the full-rank neighbourhood of $\alpha$, that is, $\alpha_{\varepsilon}$ is full rank and $\lim_{\varepsilon \to 0}\alpha_{\varepsilon} = \alpha$, we find the retrodiction defined on prior $\alpha_{\varepsilon}$ to be $\hat{\mathcal{U}}_{\alpha}^{(\varepsilon)}[\bullet] = U^\dag\bullet U$. Taking the limit gives
\begin{equation}
    \hat{\mathcal{U}}_{\alpha,2}[\bullet] = \lim_{\varepsilon\to 0}\hat{\mathcal{U}}_{\alpha}^{(\varepsilon)}[\bullet] = U^\dag \bullet U,
\end{equation}
which does not depend on how we chose to perturb $\alpha$.

(3) \emph{Na\"ively define $\hat{\chn}_{
\alpha,3}$ with product-preserving tuples}. Substituting the product-preserving tuple $(U, \alpha, \beta)$ into Eq.~\eqref{eq:naive-pp-retrodiction}, we have
\begin{equation}
    \hat{\mathcal{U}}_{\alpha,3}[\bullet] = U^\dag \bullet U.
\end{equation}

Therefore, we find that $\hat{\mathcal{U}}_{\alpha,2} = \hat{\mathcal{U}}_{\alpha,3}$ but $\hat{\mathcal{U}}_{\alpha,1}\neq\hat{\mathcal{U}}_{\alpha,2}$: choosing the pseudoinverse results in a different retrodiction channel in comparison to the other two choices.

\subsubsection{Erasure channels}\label{apd:cannot-retrodict-garbage}

Consider the erasure channel $\chn[\bullet] := \Tr(\bullet) \ketbra{0}$ with input and output dimensions $d_A$. Let its dilation be
\begin{equation}
    \chn[\bullet] = \Tr_B\bqty{U\pqty{\bullet\otimes\ketbra{0}\otimes\ketbra{0}\otimes\ketbra{0}}U^\dag},
\end{equation}
where the ancilla space is $\mathcal{H}_B = \mathcal{H}_{B_1}\otimes\mathcal{H}_{B_2}\otimes\mathcal{H}_{B_3} = \mathcal{C}^{d_A} \otimes \mathcal{C}^{d_A} \otimes \mathcal{C}^{2}$, and
\begin{equation}
    U = \operatorname{SWAP}{}\otimes{}\one\otimes\ketbra{0} + V\otimes\ketbra{1},
\end{equation}
with $\operatorname{SWAP}$ acting on $\mathcal{H}_{A}\otimes\mathcal{H}_{B_1}$ and $V$ acting on $\mathcal{H}_A \otimes \mathcal{H}_{B_1}\otimes\mathcal{H}_{B_2}$. Since $U(\alpha\otimes\ketbra{000})U^\dag = \ketbra{0}\otimes\alpha\otimes\ketbra{00}$, we again have a product-preserving tuple.

(1) \emph{Define $\hat{\chn}_{\alpha,1}$ using the pseudoinverse}. This gives $\hat{\chn}_{\alpha,1}[\bullet] = \bra{0}\!\bullet\!\ket{0}\alpha$.

(2) \emph{Define $\hat{\chn}_{\alpha,2}$ in a full-rank neighbourhood}. Since $\chn[\alpha]=\ketbra{0}$ is rank-deficient for all $\alpha$, it is not enough to perturb just the prior. Instead, we need to perturb the ancilla as
\begin{equation}
    \beta_\varepsilon := (1-\varepsilon)\ketbra{000} + \varepsilon \Gamma_{12}\otimes\ketbra{1},
\end{equation}
with $\Gamma_{12} \in \mathcal{H}_{B_1}\otimes\mathcal{H}_{B_2}$. This effectively perturbs the channel as
\begin{equation}
    \chn^{(\varepsilon)}[\bullet] := \Tr_B\bqty{U\pqty{\bullet\otimes\beta_\varepsilon}} = (1-\varepsilon)\chn[\bullet] + \varepsilon\Phi[\bullet],
\end{equation}
where $\Phi[\bullet]:=\Tr_{B_1,B_2}[V(\bullet\otimes\Gamma_{12})V^\dag]$ with $V$ and $\Gamma_{12}$ chosen such that $\Phi[\alpha]$ is full rank. This ensures that the inverse of $\chn^{(\varepsilon)}[\alpha]$ exists for every $\varepsilon > 0$. While obviously $\lim_{\varepsilon\to 0}\chn^{(\varepsilon)} = \chn$, we are going to show that $\lim_{\varepsilon\to0}\hat{\chn}_\alpha^{(\varepsilon)}$ still depends on $\Phi$ in general. As such, if we are given the rank-deficient erasure channel, there is no unique consistent definition of $\hat{\chn}_{\alpha}$.

For simplicity, let us take $\Phi$ to be a classical channel. That is,
\begin{equation}
    \Phi[\ketbra{j}{j'}] = \delta_{j,j'} \sum_{k=0}^{d-1} \cchn(k|j) \ketbra{k},
\end{equation}
where $0 \leq \cchn(k|j) \leq 1$ is a conditional probability distribution, normalized as $\sum_{k=0}^{d-1} \cchn(k|j) = 1$ but otherwise arbitrary. Likewise, we take the classical state $\alpha = \sum_{j=0}^{d-1} a_j \ketbra{j}$ as the prior. By direct computation of Eq.~\eqref{petz}, the action of $\hat{\chn}_\alpha^{(\varepsilon)}$ on the basis operators $\Bqty{\ketbra{k}{k'}}_{k,k'=0}^{d-1}$ is found to be 
\begin{equation}\label{eq:perturbed-garbage}
    \hat{\chn}_\alpha^{(\varepsilon)}[\ketbra{k}{k'}] = \frac{(1-\varepsilon)\delta_{k,k'}\delta_{k,0} \alpha + \varepsilon \ev{\Phi[\alpha]}{k} \hat{\Phi}_\alpha[\ketbra{k}{k'}]}{(1-\varepsilon)\delta_{k,0} + \varepsilon \ev{\Phi[\alpha]}{k}},
\end{equation}
where
\begin{equation}
    \hat{\Phi}_\alpha[\bullet] = \sum_{j=0}^{d-1} \pqty{\sum_{k=0}^{d-1} \frac{\ev{\bullet}{k} \cchn(k|j) a_j}{\ev{\Phi[\alpha]}{k}}} \ketbra{j}
\end{equation}
is the retrodiction channel of $\Phi$ with respect to the prior $\alpha$, similarly computed using Eq.~\eqref{petz}.

All this is well defined, but taking the limit of Eq.~\eqref{eq:perturbed-garbage}, we find
\begin{equation}\label{eq:limited-garbage}
    \hat{\chn}_{\alpha,2}[\bullet] = \lim_{\varepsilon\to0}\hat{\chn}_\alpha^{(\varepsilon)}[\bullet] = 
    \ev{\bullet}{0} \alpha + \hat{\Phi}_\alpha\bqty\big{ \bullet - \ketbra{0}\!\bullet\!\ketbra{0} }.
\end{equation}
We would have liked this expression to be independent of $\Phi$ as long as $\Phi[\alpha]$ is full rank, but this is clearly not the case. 

(3) \emph{Na\"ively define $\hat{\chn}_{\alpha,3}$ with product-preserving tuples}. A simple substitution gives
\begin{equation}
    \hat{\chn}_{\alpha,3}[\bullet] = \Tr(\bullet)\alpha.
\end{equation}

We see here that all three approaches result in different retrodiction channels in general, with $\hat{\chn}_{\alpha,2} = \hat{\chn}_{\alpha,3}$ only for the specific choice $\hat{\Phi}_{\alpha}[\bullet] =  \Tr(\bullet)\alpha$.

Since the Petz map is equivalent to classical Bayesian inversion when all channels and states involved are diagonal in a chosen basis \cite{PB22}, we have demonstrated that retrodiction channels cannot be consistently defined in general when the output state is rank-deficient, even in classical theory. This is of course the case \textit{a fortiori} in quantum theory. \\

\subsubsection{Product-preserving two-qubit dilations}\label{apd:cannot-anyhow-PIPO}

Finally, let us investigate the difference between the three approaches in more detail for the two-qubit case. Consider the product-preserving tuple $(U,\ketbra{\alpha_+},\ketbra{\beta_+})$, with $U(\ket{\alpha_+}\otimes\ket{\beta_+}) = \ket{\alpha_+'}\otimes\ket{\beta_+'}$. Let $\ket{\alpha_-}$ be the state orthogonal to $\ket{\alpha_+}$, with $\ket{\beta_-}$, $\ket{\alpha_-'}$, and $\ket{\beta_-'}$ defined analogously.

Meanwhile, define $u_A$ as the local unitary such that $u_A\ket{\alpha_\pm} = \ket{\alpha'_\pm}$ and $u_B$ as the local unitary such that $u_A\ket{\beta_\pm} = \ket{\beta'_\pm}$.

Then, in the $\{\ket{\alpha_\pm}\} \otimes \{\ket{\beta_\pm}\}$ basis, $U$ takes the form

\begin{equation}
    \pqty{u_A^\dag \otimes u_B^\dag}U \;\widehat{=}
    \pmqty{
    1 & 0 & 0 & 0\\
    0 & u_{11} & u_{12} & u_{13} \\
    0 & u_{21} & u_{22} & u_{23} \\
    0 & u_{31} & u_{32} & u_{33}
    }.
\end{equation}
The first row and column is fixed by the product-preserving property, while $u_{jk}$ might take on any value, so long as unitarity is satisfied.

(1) \emph{Define $\hat{\chn}_{\ketbra{\alpha_+},1}$ using the pseudoinverse}. This has the action on the basis operators
\begin{equation}\label{eq:pseudoinverse-pp-retrodiction-qubit}
\begin{aligned}
    \hat{\chn}_{\ketbra{\alpha_+},1}\bqty{\ketbra{\alpha'_+}} &= \ketbra{\alpha_+},\\
    \hat{\chn}_{\ketbra{\alpha_+},1}\bqty{\ketbra{\alpha'_-}} &= 0,\\
    \hat{\chn}_{\ketbra{\alpha_+},1}\bqty{\ketbra{\alpha'_+}{\alpha'_-}} &= 0.
\end{aligned}
\end{equation}

\begin{widetext}
(2) \emph{Define $\hat{\chn}_{\ketbra{\alpha_+},2}$ in a full-rank neighbourhood}. We perturb the prior as $\alpha_\varepsilon := (1-\varepsilon)\ketbra{\alpha_+} + \varepsilon\one/2 = (1-\varepsilon/2)\ketbra{\alpha_+} + (\varepsilon/2)\ketbra{\alpha_-}$. If $\chn[\alpha_\varepsilon]$ is invertible, Eq.~\eqref{ptz3} gives
\begin{equation}
    \mathcal{E}_{\alpha_\varepsilon}[\bullet] = \TrB\Bqty{
        U^\dag
            \sqrt{\omega} \bqty{\pqty{
                \frac{1}{\sqrt{\TrB\omega}} \bullet \frac{1}{\sqrt{\TrB\omega}}
            }\otimes\mathbbm{1}}
            \sqrt{\omega}
        U
    },
\end{equation}
where we have defined $\omega := U(\alpha_\varepsilon\otimes\ketbra{\beta_+})U^\dag$ for brevity. With this, we have
\begin{equation}
    \omega = \pqty{1-\frac{\varepsilon}{2}}\ketbra{\alpha_+',\beta_+'} + \frac{\varepsilon}{2} \underbrace{U \ketbra{\alpha_-,\beta_+} U^\dag}_{=:\ketbra{\omega_\perp}}.
\end{equation}
Here, we have further defined
\begin{equation}
    \ket{\omega_\perp} := u_{12}\ket{\alpha'_+,\beta'_-} + u_{22}\ket{\alpha'_-,\beta'_+} + u_{32}\ket{\alpha'_-,\beta'_-}.
\end{equation}
From this, we also have the reduced state
\begin{equation}
\begin{aligned}
    \TrB\omega &= \frac{1}{2}\mathbbm{1} + \frac{1}{2}\pqty{1-\varepsilon\pqty{
        1 - \abs{u_{12}}^2
    }}\pqty{\ketbra{\alpha_+'} - \ketbra{\alpha'_-}} \\
    &\qquad{}+{}\frac{1}{2} \varepsilon u_{12}u_{22}^*\ketbra{\alpha'_+}{\alpha'_-} + \frac{1}{2} \varepsilon u_{12}^*u_{22}\ketbra{\alpha'_-}{\alpha'_+} \\
    &=: \frac{1}{2}\mathbbm{1} + \frac{1}{2}\vec{r}\cdot\vec{\sigma},
\end{aligned}
\end{equation}
where $\vec{\sigma} = (\sigma_1,\sigma_2,\sigma_3)$ are the usual Pauli oeprators defined in the $\Bqty{\ket{\alpha'_\pm}}$ basis, and $\vec{r}$ is a three-dimensional vector given in spherical coordinates $(r,\theta,\phi)$ with
\begin{equation}\label{eq:partial-state-polar}
\begin{aligned}
    r\cos\theta &= 1-\varepsilon\pqty{
        1 - \abs{u_{12}}^2},\\
    r\sin\theta e^{i\phi} &= \varepsilon u_{12}^*u_{22}\\
    r &= \sqrt{\pqty{1-\varepsilon\pqty{
        1 - \abs{u_{12}}^2}}^2 + \varepsilon^2\abs{u_{12}}^2\abs{u_{22}}^2 }.
\end{aligned}
\end{equation}
The eigenvalues of $\TrB\omega$ can be verified to be $(1 + r)/2$ and $(1 - r)/2$, with eigenstates $\cos(\frac{\theta}{2})\ket{\alpha_+'} + e^{i\phi}\sin(\frac{\theta}{2})\ket{\alpha_-'}$ and $\sin(\frac{\theta}{2})\ket{\alpha_+'} - e^{i\phi}\cos(\frac{\theta}{2})\ket{\alpha_-'}$ respectively. Then, $\TrB\omega$ can be written in its diagonal form and $(\TrB\omega)^{-\frac{1}{2}}$ can be worked out. Since we are considering the cases where $\chn_{\alpha_\varepsilon}$ is well-defined, we require $\abs{u_{12}}^2 \neq 1$, so that the inverse of $\TrB\omega$ exists. With this,
\begin{equation}\label{eq:partial-state-inverse}
\begin{aligned}
    \frac{1}{\sqrt{\TrB\omega}} &=
    \frac{1}{\sqrt{2}r}\Bigg\{
        \bqty{
            \frac{r+r\cos\theta}{\sqrt{1+r}} - \sqrt{1-r} + \sqrt{\varepsilon} \times \sqrt{\frac{{\varepsilon}}{{1-r^2}}}\times \sqrt{1+r} \times (1-\abs{u_{12}}^2)
        }\ketbra{\alpha'_+} \\
        &\qquad {}+{}\bqty\bigg{
            \frac{r-r\cos\theta}{\sqrt{1+r}} + 
            \underbrace{\frac{1}{\sqrt{\varepsilon}}}_{\hspace{-3em}\text{\footnotesize \emph{possibly problematic}}\hspace{-3em}} \times \sqrt{\frac{\varepsilon}{1-r^2}} \times \sqrt{1+r} \times (r+r\cos\theta)
        }\ketbra{\alpha'_-} \\
        &\qquad {}+{}\bqty{
            \frac{r\sin\theta}{\sqrt{1+r}} - \sqrt{\frac{\varepsilon}{1-r^2}} \times \frac{r\sin\theta}{\sqrt{\varepsilon}} \times \sqrt{1+r}
        }\pqty\Big{\ketbra{\alpha'_+}{\alpha'_-}e^{-i\phi} + \ketbra{\alpha'_-}{\alpha'_+}e^{i\phi}}
    \Bigg\}.
\end{aligned}
\end{equation}
From Eq.~\eqref{eq:partial-state-polar}, we have the limits
\begin{equation}
\begin{gathered}
    \lim_{\varepsilon \to 0} r\cos\theta = 1,\qquad
    \lim_{\varepsilon \to 0} \frac{r\sin\theta}{\sqrt{\varepsilon}} = 0,\qquad
    \lim_{\varepsilon \to 0} r = 1,\\[2ex]
    \lim_{\varepsilon \to 0} \frac{\varepsilon}{1-r^2} =
    \lim_{\varepsilon \to 0} \frac{\varepsilon}{2\varepsilon\pqty{1-\abs{u_{12}}^2} - \varepsilon^2\pqty{ (1 - \abs{u_{12}}^2)^2 + \abs{u_{12}}^2\abs{u_{22}}^2 }} = \frac{1}{2\pqty{1-\abs{u_{12}}^2}}.
\end{gathered}
\end{equation}
Keeping these limits in mind, we have singled out the only \emph{possibly problematic} term in Eq.~\eqref{eq:partial-state-inverse}, which would diverge if the limit of $(\TrB\omega)^{-\frac{1}{2}}$ is considered on its own. Nonetheless, the following limits hold:
\begin{equation}\label{eq:basis-limits}
    \lim_{\varepsilon\to 0} \bra{\alpha'_+}\frac{1}{\sqrt{\TrB\omega}} = \bra{\alpha'_+},\qquad
    \lim_{\varepsilon\to 0} \sqrt{\frac{\varepsilon}{2}}\bra{\alpha'_-}\frac{1}{\sqrt{\TrB\omega}} = \frac{1}{\sqrt{\abs{u_{22}}^2 + \abs{u_{32}}^2}}\bra{\alpha'_-}.
\end{equation}
With this, we have
\begin{equation}
    \lim_{\varepsilon\to 0} \sqrt{\omega}\pqty{\frac{1}{\sqrt{\TrB\omega}}\otimes\one}
    = \ketbra{\alpha'_+,\beta'_+} + \frac{1}{\sqrt{1-\abs{u_{12}}^2}}\pqty{\ketbra{\omega_\perp} - \frac{1}{u_{12}^*}\ketbra{\omega_\perp}{\alpha'_+,\beta'_-}}.
\end{equation}
As such, the action of the limit of $\bqty{\pqty{
                \frac{1}{\sqrt{\TrB\omega}} \bullet \frac{1}{\sqrt{\TrB\omega}}
            }\otimes\mathbbm{1}}
            \sqrt{\omega}$ can be found to be
%and $\lim_{\varepsilon \to 0}\mathcal{E}_A[\bullet]$ on the basis operators can be found to be
\begin{equation}
\begin{aligned}
    \lim_{\varepsilon\to 0}\sqrt{\omega} \bqty{\pqty{
                \frac{1}{\sqrt{\TrB\omega}} \bullet \frac{1}{\sqrt{\TrB\omega}}
            }\otimes\mathbbm{1}}
            \sqrt{\omega} &= 
    \ketbra{\alpha'_+,\beta'_+}{\alpha'_+}\bullet\ketbra{\alpha'_+}{\alpha'_+,\beta'_+} +
    \ketbra{\omega_\perp}{\alpha'_-}\bullet\ketbra{\alpha'_-}{\omega_\perp} \\
    &\qquad{}+{}
     \frac{u_{22}}{\sqrt{\abs{u_{22}}^2 + \abs{u_{32}}^2}} 
     \ketbra{\alpha'_+,\beta'_+}{\alpha'_+}\bullet\ketbra{\alpha'_-}{\omega_\perp} \\
    &\qquad{}+{}
     \frac{u_{22}^*}{\sqrt{\abs{u_{22}}^2 + \abs{u_{32}}^2}} 
     \ketbra{\omega_\perp}{\alpha'_-}\bullet\ketbra{\alpha'_+}{\alpha'_+,\beta'_+}.
\end{aligned}
\end{equation}
Finally, by taking the partial trace, we find
\begin{equation}\label{eq:perturb-pp-retrodiction-qubit}
\begin{aligned}
    \hat{\chn}_{\ketbra{\alpha_+},2}\bqty{\ketbra{\alpha'_+}} = \lim_{\varepsilon \to 0}\hat{\chn}_{\alpha_\varepsilon}\bqty{\ketbra{\alpha'_+}} &= \ketbra{\alpha_+},\\
    \hat{\chn}_{\ketbra{\alpha_+},2}\bqty{\ketbra{\alpha'_-}} = \lim_{\varepsilon \to 0}\hat{\chn}_{\alpha_\varepsilon}\bqty{\ketbra{\alpha'_-}} &= \ketbra{\alpha_-},\\
    \hat{\chn}_{\ketbra{\alpha_+},3}\bqty{\ketbra{\alpha'_+}{\alpha'_-}} = \lim_{\varepsilon \to 0}\hat{\chn}_{\alpha_\varepsilon}\bqty{\ketbra{\alpha'_+}{\alpha'_-}} &= \frac{u_{22}}{\sqrt{\abs{u_{22}}^2 + \abs{u_{32}}^2}} \ketbra{\alpha_+}{\alpha_-}.
\end{aligned}
\end{equation}

(3) \emph{Na\"ively define $\hat{\chn}_{\ketbra{\alpha_+},3}$ with product-preserving tuples}. In terms of $u_{jk}$, the action of $\hat{\chn}_{\ketbra{\alpha_+},3}$ as defined in Eq.~\eqref{eq:naive-pp-retrodiction} can be worked out directly to be
\begin{equation}\label{eq:naive-pp-retrodiction-qubit}
\begin{aligned}
    \hat{\chn}_{\ketbra{\alpha_+},3}\bqty{\ketbra{\alpha'_+}} &= \ketbra{\alpha_+},\\
    \hat{\chn}_{\ketbra{\alpha_+},3}\bqty{\ketbra{\alpha'_-}} &= \pqty{1-\abs{u_{21}}^2}\ketbra{\alpha_-} 
    + \abs{u_{21}}^2\ketbra{\alpha_+} 
    + u_{21}^*u_{23}\ketbra{\alpha_+}{\alpha_-} 
    + u_{21}u_{23}^*\ketbra{\alpha_+}{\alpha_-},\\
    \hat{\chn}_{\ketbra{\alpha_+},3}\bqty{\ketbra{\alpha'_+}{\alpha'_-}} &= u_{22}\ketbra{\alpha_+}{\alpha_-}.
\end{aligned}
\end{equation}
Once again, we find that the three approaches result in different retrodiction channels. Meanwhile, Eq.~\eqref{eq:naive-pp-retrodiction-qubit} agrees with Eq.~\eqref{eq:perturb-pp-retrodiction-qubit} if and only if $u_{21} = u_{12}= 0$, or $u_{21} = u_{22} = 0$. Therefore, $\hat{\chn}_{\ketbra{\alpha_+},2} \neq \hat{\chn}_{\ketbra{\alpha_+},3}$ in general, so we cannot na\"ively assume Eq.~\eqref{pp-tr} to hold for pure output states.
\end{widetext}

\subsection{Product-preserved tuples are almost tabletop time-reversible with the pseudoinverse convention}
Although we have not given a preference to any particular approach for retrodicting with rank-deficient outputs, it is worth pointing out that the product-preserved tuples are almost tabletop time-reversible with the pseudoinverse convention.

Given a product-preserving tuple $(U,\alpha,\beta)$ with $U(\alpha\otimes\beta)U^\dag = \alpha'\otimes\beta'$ where $\alpha'$ is not necessarily full-rank, Eq.~\eqref{qass} with the pseudoinverse reads
\begin{equation}
\begin{aligned}
    \qam{\alpha'\otimes\beta'}[\bullet] &= \sqrt{\alpha'} \sqrt{\alpha'}^{\,\boldsymbol{\pmb{+}}} \bullet 
    \sqrt{\alpha'}^{\,\boldsymbol{\pmb{+}}} \sqrt{\alpha'} \otimes \beta'\\
    &= \Pi_{\alpha'}\bullet\Pi_{\alpha'}\otimes\beta',
\end{aligned}
\end{equation}
which in turn gives a trace-non-increasing retrodiction channel
\begin{equation}\label{eq:almost-tr}
    \hat{\chn}_{\alpha}[\bullet] = \TrB\bqty{U^\dag\pqty{\Pi_{\alpha'}\bullet\Pi_{\alpha'}\otimes\beta'}U}.
\end{equation}
We say that it is \emph{almost} tabletop time-reversible as Eq.~\eqref{eq:almost-tr} is the same as Eq.~\eqref{eqfriendly} up to a projection onto the support of $\chn[\alpha]$ upon the input state.

\end{document}